\newcommand{\Del}{\mathbf{{Del}}}
\newcommand{\Indicator}[1]{\mathbf{1}\left[#1\right]}
\newcommand{\Hypernb}[3]{\mathsf{Hypernb}(#1, #2, #3)}
\newcommand{\tprod}{\prod}
\begin{document}

\title{Trace reconstruction from local statistical queries}

\author{
Xi Chen\thanks{Supported by NSF grants CCF-1703925, IIS-1838154, CCF-2106429 and CCF-2107187.} 
\\
Columbia University\\
xichen@cs.columbia.edu
\and
Anindya De\thanks{Supported by NSF grants CCF-1926872, CCF-1910534 and CCF-2045128.}
\\
University of Pennsylvania\\
anindyad@cis.upenn.edu
\and
Chin Ho Lee\thanks{Work done in part at Harvard University, supported by Madhu Sudan’s and Salil Vadhan’s Simons Investigator Awards while at Harvard University.}
\\
North Carolina State University\\
chinho.lee@ncsu.edu
\and
Rocco A. Servedio\thanks{Supported by NSF grants IIS-1838154, CCF-2106429, CCF-2211238 and by the Simons Collaboration on Algorithms and Geometry.} 
\\
Columbia University\\
rocco@cs.columbia.edu
}

\date{}
\maketitle

\thispagestyle{empty}

\begin{abstract}
The goal of \emph{trace reconstruction} is to reconstruct an unknown $n$-bit string $x$ given only independent random \emph{traces} of $x$, where a random trace of $x$ is obtained by passing $x$ through a deletion channel.
A \emph{Statistical Query} (SQ) algorithm for trace reconstruction is an algorithm which can only access statistical information about the distribution of random traces of $x$ rather than individual traces themselves.
Such an algorithm is said to be \emph{$\ell$-local} if each of its statistical queries corresponds to an $\ell$-junta function over some block of $\ell$ consecutive bits in the trace.  
Since several --- but not all --- known algorithms for trace reconstruction fall under the local statistical query paradigm, it is interesting to understand the abilities and limitations of local SQ algorithms for trace reconstruction.

In this paper we establish nearly-matching upper and lower bounds on local Statistical Query algorithms for both worst-case and average-case trace reconstruction.
For the worst-case problem, we show that there is an $\tilde{O}(n^{1/5})$-local SQ algorithm that makes all its queries with tolerance $\tau \geq 2^{-\tilde{O}(n^{1/5})}$, and also that any $\tilde{O}(n^{1/5})$-local SQ algorithm must make some query with tolerance $\tau \leq 2^{-\tilde{\Omega}(n^{1/5})}$.
For the average-case problem, we show that there is an $O(\log n)$-local SQ algorithm that makes all its queries with tolerance $\tau \geq 1/\poly(n)$, and also that any $O(\log n)$-local SQ algorithm must make some query with tolerance $\tau \leq 1/\poly(n).$
\end{abstract}

\newpage

\setcounter{page}{1}


\section{Introduction} \label{sec:intro}

In the \emph{trace reconstruction} problem, the goal is to reconstruct an unknown string $x \in \zo^n$ given access to independent random \emph{traces} of $x$, where a random trace of $x$ is a string obtained by passing $x$ through a deletion channel that independently deletes each bit with probability $\delta$ and concatenates the surviving bits.  
Trace reconstruction has been a well-studied problem since the early 2000s~\cite{Lev01a,Lev01b,BKKM04}, and some combinatorial variants of the problem were already considered in the 1970s~\cite{Kalashnik73}. 
Over the past decade, a wide range of algorithmic results and lower bounds have been established for many variants of the trace reconstruction problem, including worst-case \cite{MPV14,DOS17,NazarovPeres17,HL18,Chase19,Chase20}, average-case \cite{PeresZhai17,HPP18,HPPZ20,Rubinstein23}, and smoothed analysis \cite{CDLSS20smoove} versions, the low deletion rate regime \cite{CDLSS-itcs-21}, approximate trace reconstruction \cite{DRRS20,CDK21,ChasePeres21,CDLSS22}, coded trace reconstruction \cite{CGMR19_coded,BLS20_coded}, variants in which different bits of the source string have different deletion probabilities \cite{HHP18}, circular trace reconstruction \cite{NarayananRen20}, trace reconstruction on trees \cite{DRR19,BR23}, population recovery variants \cite{BCFSSfocs19,Narayanan20,Narayanan21a}, connections to other problems such as mixture distribution learning \cite{KMMP19}, and more \cite{GSZ20,SB21}.

The original, and arguably most fundamental, versions of the problem are the ``worst-case'' and ``average-case'' versions with constant deletion rate $\delta \in (0,1)$.  In the worst-case problem the source string $x$ is an arbitrary (worst-case) element of $\zo^n$, and in the average-case problem the source string $x$ is selected uniformly at random from $\zo^n$; equivalently, an average-case algorithm is only required to succeed for a $1-o_n(1)$ fraction of all $2^n$ possible source strings $x \in \zo^n$.  These two problems are the focus of our work, so in the rest of this paper we consider worst-case and average-case trace reconstruction and we always assume that the deletion rate $\delta$ is an arbitrary (known) constant in $(0,1)$.

Despite much effort, there are mildly exponential gaps between the best known upper bounds and lower bounds for both worst-case and average-case trace reconstruction. Improving on earlier $2^{\tilde{O}(n^{1/2})}$-trace   and $2^{\tilde{O}(n^{1/3})}$-trace algorithms of \cite{HMPW08,DOS17,NazarovPeres17}, in \cite{Chase20} Chase gave an algorithm for worst-case trace reconstruction that uses $2^{\tilde{O}(n^{1/5})}$ traces. The best known lower bound, also due to Chase \cite{Chase19}, is $\tilde{\Omega}(n^{3/2})$ traces (improving on earlier $\tilde{\Omega}(n^{5/4})$ and $\Omega(n)$ lower bounds \cite{HL18,BKKM04}).
For the average-case problem, improving on earlier $\exp(O((\log n)^{1/2}))$-trace and 
$\exp(O((\log n)^{1/3}))$-trace algorithms \cite{PeresZhai17,HPP18,HPPZ20}, Rubinstein \cite{Rubinstein23} recently gave an $\exp(\tilde{O}((\log n)^{1/5}))$-trace algorithm.
The best known average-case lower bound,~due to Chase \cite{Chase19}, is $\tilde{\Omega}((\log n)^{5/2})$ traces, improving on an earlier $\tilde{\Omega}((\log n)^{9/4})$ lower bound \cite{HL18}.

These substantial gaps naturally suggest the study of restricted classes of algorithms for trace reconstruction, with the hope that it may be possible to obtain sharper results.  This is the starting point of our work: we propose to study the trace reconstruction problem from the vantage point of \emph{statistical query} algorithms.  As our main contribution we obtain fairly sharp upper and lower bounds on \emph{local} statistical query algorithms for trace reconstruction, as described below.

\medskip

\noindent {\bf Statistical Query trace reconstruction algorithms.}
The \emph{Statistical Query} (SQ) model \cite{Kearns:98} was first introduced by Kearns as a means to obtain PAC learning algorithms that can tolerate random classification noise.  In the decades since then, the SQ model has emerged as a major topic of study in its own right in computational learning theory and related fields such as differential privacy and optimization. An attractive feature of the SQ model is that it is powerful enough to capture state-of-the-art algorithms in a variety of different settings, yet it is also amenable to proving \emph{unconditional} lower bounds.

SQ algorithms can only access data through noisy estimates of the expected values of user-generated query functions.  In the context of trace reconstruction, an SQ oracle takes as input a bounded \emph{query function} $q: \zo^n \to [-1,1]$ and a \emph{tolerance parameter} $\tau \in (0,1)$ that are provided by the reconstruction algorithm.
It returns a value $\widehat{P}_q$ which satisfies
$\abs{\widehat{P}_q - P_q} \leq \tau$, where $P_q$ is the expected value of $q$ on a random trace, i.e.~$P_q := \Ex_{\by \sim \Del_\delta(x)}[q(\by)].$\footnote{Since the length of each trace is at most $n$, we view each trace $\by$ as padded with a suffix of $n-|\by|$  zeros, so the argument to $q$ is actually $\by 0^{n - |\by|}$. This is equivalent to assuming that the $n$-bit source string $x$ is padded with an infinite suffix of $0$-bits.} Thus an SQ algorithm for trace reconstruction does not receive any actual traces of $x$; rather, it can only use aggregate statistical information about the overall  distribution of traces.  

To the best of our knowledge, the current paper is among the first works that explicitly considers the trace reconstruction problem from the perspective of statistical queries (see also \cite{CGLSZ23}, which we discuss in more detail below).  However, in hindsight the earliest nontrivial algorithms for worst-case trace reconstruction \cite{HMPW08,DOS17,NazarovPeres17} already made it evident that SQ algorithms --- in fact, SQ algorithms which use extremely simple query functions --- could be effective for trace reconstruction.  The algorithms of \cite{HMPW08,DOS17,NazarovPeres17} all work by using the traces from $\Del_\delta(x)$ only to obtain high-accuracy estimates of the $n$ values $\Ex_{\by \sim \Del_\delta(x)}[\by_i]$ for $i \in [n]$ and then doing some subsequent computation on those estimated values; thus they correspond to SQ algorithms in which each query function is simply a Boolean dictator function, i.e.~a 1-junta. On the other hand, the highly efficient average-case trace reconstruction algorithms of \cite{PeresZhai17,HPP18,HPPZ20,Rubinstein23}, which use a sub-polynomial number of traces, involve various ``alignment'' routines which attempt to identify locations in individual received traces that correspond to specific locations in the source string.  These algorithms seem to make essential use of individual traces and do not seem to be compatible with the SQ model.  So given that some, but not all, known trace reconstruction algorithms correspond to SQ algorithms, it is of interest to study both the abilities and limitations of SQ algorithms for trace reconstruction.

In this work we consider a natural class of SQ algorithms, which we call \emph{$\ell$-local} SQ algorithms. An $\ell$-local query function $q: \zo^n \to [-1,1]$ is an $\ell$-junta over some $\ell$ consecutive bits of its input string, i.e.~for all $y$, $q$ satisfies $q(y)=q'(y_i,y_{i+1},\dots,y_{i+\ell-1})$ for some index $i$ and some function $q': \zo^\ell \to [-1,1].$ We say that an  algorithm is an \emph{$\ell$-local SQ algorithm with tolerance $\tau_0$} if all of its calls to the SQ oracle are made with $\ell$-local query functions and the tolerance parameter for each call is at least $\tau_0$.

The results of \cite{DOS17,NazarovPeres17} already show that 1-local SQ algorithms with tolerance $\tau_0 = 2^{-\tilde{O}(n^{1/3})}$ can successfully perform worst-case trace reconstruction, and moreover \cite{DOS17,NazarovPeres17} additionally show that tolerance $\tau_0 = 2^{-\tilde{\Omega}(n^{1/3})}$ is required for any 1-local SQ worst-case trace reconstruction algorithm. Thus, in analyzing the abilities and limitations of $\ell$-local algorithms for trace reconstruction for a particular value of $\ell$, our  goal is to determine the tolerance which is necessary and sufficient for such algorithms to succeed in worst-case or average-case trace reconstruction.
A simple argument which we give in \Cref{sec:localSQ} shows that any $\ell$-local SQ algorithm (which may be adaptive) using tolerance $\tau_0$ can be converted to a nonadaptive SQ algorithm that makes at most $n 2^\ell$ queries, all of which are $\ell$-local ``subword'' queries (defined in \Cref{sec:localSQ}) of tolerance $\tau_0 2^{-\ell}$. Moreover, a standard argument shows that any nonadaptive SQ algorithm which  makes $M$ statistical queries, each with tolerance at least $\tau_0$,  can be simulated in the obvious way by a standard trace reconstruction algorithm that uses $\poly(\log M,1/\tau_0)$ independent traces from $\Del_\delta(x).$
Thus, we will be particularly interested in identifying the value $\ell$ of the locality parameter for which tolerance (roughly) $2^{-\ell}$ is both necessary and sufficient for trace reconstruction. As we explain next, our main results do precisely this, for both worst-case and average-case trace reconstruction.

\subsection{Our results} \label{sec:results}

We give upper and lower bounds on local SQ algorithms for both worst-case  and average-case trace reconstruction. 
Our upper and lower bounds match each other up to fairly small factors for both the worst-case and average-case versions of the problem.

\medskip

\noindent {\bf The worst-case problem.} Our main lower bound is the following result, which gives a lower bound on the tolerance for $n^{1/5}$-local SQ algorithms performing worst-case trace reconstruction:

\begin{theorem} [Worst-case lower bound, informal version of \Cref{thm:worstlower}]  \label{thm:informal-worstlower}
Fix any constant deletion rate $0<\delta<1$.  
For $\ell = \tilde{\Theta}(n^{1/5})$, any $\ell$-local SQ algorithm for worst-case trace reconstruction must have tolerance $\tau_0 = \exp(-\tilde{\Omega}(n^{1/5})).$
\end{theorem}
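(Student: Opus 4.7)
The plan is to exhibit two distinct source strings $x, x' \in \zo^n$ whose trace distributions are indistinguishable to every $\ell$-local query with tolerance larger than $\exp(-\tilde{\Omega}(n^{1/5}))$, at the target locality $\ell = \tilde{\Theta}(n^{1/5})$. Any $\ell$-local query function $q: \zo^n \to [-1,1]$ can be written as $q'(y_i,\ldots,y_{i+\ell-1})$ for some $q': \zo^\ell \to [-1,1]$ and some $i$, and thus as a signed sum of at most $2^\ell$ subword indicators $q_{i,w}(y) := \bbone[y_i \cdots y_{i+\ell-1} = w]$. Hence it suffices to construct $x \neq x'$ such that for every $i \in [n]$ and every $w \in \zo^\ell$,
\[
\bigl|P_{i,w}(x) - P_{i,w}(x')\bigr| \ \le\ \exp\bigl(-\tilde{\Omega}(n^{1/5})\bigr),
\]
where $P_{i,w}(x) := \Pr_{\by \sim \Del_\delta(x)}[\by_i\cdots\by_{i+\ell-1} = w]$; by the standard SQ indistinguishability argument, no $\ell$-local SQ algorithm with tolerance above $2^\ell \cdot \exp(-\tilde{\Omega}(n^{1/5})) = \exp(-\tilde{\Omega}(n^{1/5}))$ can then output the correct source string in the worst case.

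Next, I would make $P_{i,w}$ explicit. Conditioning on the source positions $j_1 < \cdots < j_\ell$ that survive as trace positions $i,\ldots,i+\ell-1$ (and on the bits strictly between them being deleted) gives
\[
P_{i,w}(x) \ =\ (1-\delta)^{i+\ell-1} \sum_{j_1 < \cdots < j_\ell} \binom{j_1-1}{i-1}\,\delta^{j_\ell - i - \ell + 1}\, \prod_{s=1}^\ell \bbone[x_{j_s} = w_s],
\]
so that $P_{i,w}(x) - P_{i,w}(x')$ is an $\ell$-linear form in the difference pattern $d := x - x' \in \{-1,0,+1\}^n$ against a kernel determined by the displayed binomial-geometric weights. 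Via a standard generating-function repackaging, controlling all such forms uniformly reduces to controlling the univariate polynomial $p_d(z) := \sum_j d_j z^j$ on the Nazarov--Peres arc $\{\delta + (1-\delta)z : |z|=1\}$, together with suitable ``$\ell$-fold'' multivariate evaluations arising from the multilinear kernel.

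The construction of $d$ itself follows the Holden--Lyons / Chase block template: partition $[n]$ into long buffer zones of $0$'s surrounding a single ``active'' block of length $B$, and place within that block the coefficients of a Borwein--Erd\'elyi-type polynomial with coefficients in $\{-1,0,+1\}$ that is uniformly tiny on the relevant arc. The buffers ensure that $\ell$-subwords either lie entirely inside the active block (where the Borwein--Erd\'elyi smallness of $p_d$ is directly exploited) or lie entirely outside it (where $d \equiv 0$ and the forms vanish exactly); straddling subwords pick up an additional $\delta^{\Omega(B)}$ factor from the forced deletions through the buffers. Balancing the polynomial smallness, the $n^{O(\ell)}$ cross-terms produced by the $\ell$-fold sum, and the locality $\ell$ leads to the parameter choice $B = \tilde{\Theta}(n^{3/5})$ and $\ell = \tilde{\Theta}(n^{1/5})$, yielding the claimed tolerance $\exp(-\tilde{\Omega}(n^{1/5}))$.

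The main obstacle is propagating the univariate Borwein--Erd\'elyi smallness through the $\ell$-variable subword kernel: the $\ell$-linear forms in $d$ must be shown to (approximately) factor into products of univariate evaluations of $p_d$ on the Nazarov--Peres arc, so that each factor inherits the $\exp(-\tilde{\Omega}(n^{1/5}))$ smallness without too much blow-up from the multilinear combination. A secondary difficulty is the careful bookkeeping of the straddling subwords and of the weights $\binom{j_1-1}{i-1}\delta^{j_\ell - i - \ell+1}$ as $i$ and the $j_s$ sweep over $[n]$, ensuring that these weights remain uniformly controlled so that the analytic smallness of $p_d$ is not erased by combinatorial prefactors. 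Once this analytic machinery is in place, the tolerance lower bound follows immediately from the indistinguishability of the constructed pair $(x, x')$.
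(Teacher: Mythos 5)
Your high-level plan is the right one — reduce to $\ell$-bit subword indicators via the standard SQ argument, then construct two source strings $x,x'$ whose difference pattern $d\in\{-1,0,1\}^n$ encodes a Borwein--Erd\'elyi polynomial and is shielded by long zero buffers — and your reduction-to-subword-queries step and your trace-probability kernel formula are both correct. But you have not identified, and your sketch does not contain, the key structural idea that makes the rest of the argument go through: the paper's $x,x'$ are made \emph{$t$-gappy}, meaning of the form $b_0 0^{t-1} b_1 0^{t-1}\cdots b_{n/t-1}0^{t-1}$ with $t = \tilde\Theta(n^{1/5})$. Gappiness is what lets the multivariate $\ell$-subword problem collapse to a univariate one: since $\rho t \gg \ell$, any $\ell$-subword of a random trace of a gappy string contains two or more $1$s with probability $n^{-\Theta(\ell)}$, so all weight-$\ge 2$ subword queries are automatically close between $x$ and $x'$, weight-$1$ subword queries reduce to one-bit statistics $\Pr[\by_i=1]$, and the all-zeros query follows because the probabilities sum to $1$. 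The remaining task — constructing two gappy strings with close one-bit statistics — is genuinely univariate, and the gappy spacing cooperates with the Borwein--Erd\'elyi argument because the deletion-channel polynomial then evaluates the coefficient vector at $w^{jt}$ for $w$ on the Nazarov--Peres arc.

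Your proposed workaround — that the $\ell$-linear subword forms in $d$ ``approximately factor into products of univariate evaluations'' of $p_d$ on the arc — is precisely the step that I do not believe, and it is where your argument breaks. For a non-gappy difference pattern sitting in a single dense active block, the subword kernel $\sum_{j_1<\cdots<j_\ell}\binom{j_1-1}{i-1}\delta^{j_\ell-i-\ell+1}\prod_s \Indicator{x_{j_s}=w_s}$ genuinely couples different positions of $d$ through cross-terms that are not products of linear forms, and no approximate factorization is available. Indeed, if such a factorization held, one could lift any one-bit-statistics lower bound to an $\ell$-local one for free, which would make the entire distinction between $1$-local and $\ell$-local SQ trivial. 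The gappy construction is not a cosmetic choice: it is the mechanism that kills the cross-terms. Without it, or some comparable structural device, the multilinear bookkeeping you flag as a ``main obstacle'' is not merely tedious — it is an unresolved gap in the proof.
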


Our algorithmic result for the worst-case problem shows that this lower bound is essentially optimal:

\begin{theorem} [Worst-case upper bound, informal version of \Cref{thm:worstupper}]  \label{thm:informal-worstupper}
  Fix any constant deletion rate $0<\delta<1$.  There is a  $\tilde{O}(n^{1/5}))$-local SQ algorithm for the worst-case trace reconstruction problem with tolerance $\tau_0=\exp(-\tilde{O}(n^{1/5}))$.
\end{theorem}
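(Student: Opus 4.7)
The plan is to realize Chase's recent $2^{\tilde{O}(n^{1/5})}$-trace algorithm for worst-case trace reconstruction \cite{Chase20} inside the local SQ framework. Chase's algorithm reconstructs $x$ by estimating, for every short pattern $w$ of length $\ell = \tilde{\Theta}(n^{1/5})$, the expected count of occurrences of $w$ in a random trace $\by \sim \Del_\delta(x)$. Each such expected count decomposes as a sum over at most $n$ positions of a probability that depends only on an $\ell$-length consecutive window of the trace, and is therefore a linear combination of natural $\ell$-local statistics. This decomposition is what makes Chase's algorithm amenable to local SQ simulation.

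Concretely, the SQ algorithm I propose sets $\ell = \tilde{\Theta}(n^{1/5})$ and, for every $i \in [n]$ and every $w \in \zo^\ell$, queries the subword indicator
\[
q_{i,w}(y) \;:=\; \Indicator{y_i y_{i+1} \cdots y_{i+\ell-1} = w}
\]
with tolerance $\tau_0 = \exp(-\tilde{O}(n^{1/5}))$, obtaining an estimate $\widehat P_{i,w}$ of $P_{i,w}(x) := \Pr_{\by \sim \Del_\delta(x)}[\by_i \cdots \by_{i+\ell-1} = w]$. Each $q_{i,w}$ is $\ell$-local by construction, and the total number of queries is $n \cdot 2^\ell = 2^{\tilde{O}(n^{1/5})}$. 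The algorithm then searches over all candidate strings $x' \in \zo^n$ and outputs any one whose true subword probabilities $P_{i,w}(x')$ agree with the estimates $\widehat P_{i,w}$ to within $2\tau_0$ for every $(i,w)$; only query complexity (and locality/tolerance), not runtime, is at stake in the theorem statement.

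The main obstacle is to lift Chase's distinguishability guarantee into the exact form needed here: that for any distinct $x \neq x' \in \zo^n$, some \emph{individual} subword probability $P_{i,w}$ (rather than only an aggregated linear combination of them) differs between $x$ and $x'$ by at least $\exp(-\tilde{\Omega}(n^{1/5}))$. Chase's extremal trigonometric polynomial argument most directly separates the aggregated pattern counts $N_w(x) := \sum_i P_{i,w}(x)$ by a quantity of this order; since $N_w$ is a sum of at most $n$ terms, pigeonhole produces an index $i$ witnessing a per-position separation of at least $\exp(-\tilde{\Omega}(n^{1/5}))/n$, which remains of the form $\exp(-\tilde{\Omega}(n^{1/5}))$ since $\log n$ is absorbed into the $\tilde{\Omega}(n^{1/5})$ exponent. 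A secondary technicality is the padding convention on traces: subword windows overhanging the end of $\by$ interact with the appended zeros, but this is handled by considering windows inside the padded length-$n$ trace and does not affect the distinguishability bound. With these pieces in place, correctness is immediate: by the distinguishability guarantee, any $x' \neq x$ must violate at least one of the $2\tau_0$ constraints, so the search returns $x$ uniquely.
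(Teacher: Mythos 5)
Your high-level algorithm --- make all $\ell$-local subword queries with tolerance $\exp(-\tilde{O}(n^{1/5}))$ and then search over candidate strings $x'$ for one whose subword probabilities match the returned estimates --- coincides with the paper's. The gap is in how you justify distinguishability. You assert that Chase's argument ``most directly separates the aggregated pattern counts $N_w(x) := \sum_i P_{i,w}(x)$,'' i.e.\ expected contiguous-subword counts in traces. That is not what his complex-analytic argument shows. What it shows (\Cref{lem:distinguish-src-poly}) is that a bivariate polynomial $P_{x,w}(z,t)$ whose summation ranges over \emph{non-contiguous} tuples of positions \emph{in the source string} $x$ attains a large separation at a carefully chosen point $(z_0,t_0)$, with $z_0 = e^{i\theta}$ near $1$ on the unit circle and $t_0$ a real number in $[1-\rho, 1-\frac{3}{4}\rho]$. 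Converting this into a statement about $\tilde{O}(n^{1/5})$-local trace statistics --- which is what an SQ can actually return --- requires two further nontrivial ingredients that your proposal skips entirely: the M\"obius-type identity (\Cref{fact:mobius}, \Cref{cor:src-trace-id}) expressing the source-string polynomial as a complex-weighted sum of $\Ex_{\by}\bigl[\prod_k \Indicator{\by_{j_k}=w_k}\bigr]$ over non-contiguous tuples in the trace, and the truncation argument (\Cref{claim:SQ-1-term}) showing that tuples of span greater than $d_0 = \tilde{O}(n^{1/5})$ contribute negligibly. Only after this truncation does the quantity become a bounded-coefficient combination of genuinely $d_0$-local subword probabilities, which is what makes it an SQ.

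The truncation is also precisely where $\delta \ge 1/2$ becomes delicate: you need $\abs{\frac{t_0-(1-\rho)}{\rho}}$ strictly less than $1$ to get geometric decay in the span, and this is exactly why the Borwein--Erd\'elyi--K\'os lemma (\Cref{lem:BEK}) is invoked to produce a good real $t_0 \le 1 - \frac{3}{4}\rho$. Your sketch, which treats locality as automatic because the pattern has length $\ell$, has no analogue of this step and would not extend past the regime $\delta < 1/2$. The conclusion you want --- that for every distinct $x \ne x'$ some individual $P_{i,w}$ differs by $\exp(-\tilde{\Omega}(n^{1/5}))$ --- is indeed true, but it falls out of \Cref{lem:distinguish-src-poly,lem:SQ-alg-for-src-poly} taken together, not from the pigeonhole-on-aggregated-counts route you describe.
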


\noindent {\bf The average-case problem.} As mentioned earlier, the state-of-the-art average-case trace reconstruction algorithms of \cite{PeresZhai17,HPP18,HPPZ20,Rubinstein23} do not seem to be compatible with the SQ model.
Recall that those algorithms use $2^{O((\log n)^c)}$ traces, for $c \in \{1/5, 1/3, 1/2\}$, and thus any SQ analogue of those algorithms would have tolerance $\approx 2^{-O((\log n)^{c})}$.  We show that no $O(\log n)$-local (or even $n^{0.49}$-local) SQ algorithm for average-case trace reconstruction can succeed with such a coarse tolerance parameter:

\begin{theorem} [Average-case lower bound, informal version of \Cref{thm:avglower}]  \label{thm:informal-avglower}
Fix any constant deletion rate $0<\delta<1$.  Any $\ell$-local SQ algorithm for average-case trace reconstruction must have tolerance $\tau_0 \leq \ell/\sqrt{n}.$
\end{theorem}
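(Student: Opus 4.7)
The plan is a pairing-and-coupling argument. For each $x \in \zo^n$, let $x^{(n)}$ denote the string obtained by flipping $x_n$. Since $x \mapsto x^{(n)}$ preserves the uniform distribution on $\zo^n$, the $2^n$ source strings partition into $2^{n-1}$ equally weighted pairs, and any algorithm that produces the same reconstruction on both elements of every pair must err on at least half of all source strings. It therefore suffices to exhibit, for every tolerance $\tau_0 > C\ell/\sqrt{n}$ (where $C = C(\delta)$ is a suitable constant), an adversarial SQ oracle whose answers are simultaneously valid for \emph{both} $x$ and $x^{(n)}$; the algorithm's transcript, and hence its output, will then be identical on the two source strings, forcing the average-case failure probability to be at least $1/2$.

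The core estimate is that for every $\ell$-local query $q$ examining the trace window $[i, i+\ell-1]$, one has $|P_q(x) - P_q(x^{(n)})| = O(\ell/\sqrt{n})$, with constant depending only on $\delta$. To prove it, couple $\by \sim \Del_\delta(x)$ and $\by' \sim \Del_\delta(x^{(n)})$ by drawing a single deletion pattern $S \subseteq [n]$ and applying it to both source strings. The traces then have equal length $|S|$, and they agree in every coordinate except possibly the one to which $x_n$ maps, namely trace position $|S \cap [1, n-1]|$ (provided $n \in S$). Hence $\by$ and $\by'$ differ inside the query's window only on the event
\[
E \;=\; \{n \in S\} \;\cap\; \{|S \cap [1, n-1]| \in [i, i+\ell-1]\},
\]
whose probability is $(1-\delta) \cdot \Pr[\mathrm{Binom}(n-1, 1-\delta) \in [i, i+\ell-1]]$. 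A standard Stirling (or local CLT) estimate gives $\max_k \Pr[\mathrm{Binom}(n-1, 1-\delta) = k] = O(1/\sqrt{n})$, so $\Pr[E] = O(\ell/\sqrt{n})$, and since $|q| \leq 1$ the coupling yields $|P_q(x) - P_q(x^{(n)})| \leq 2 \Pr[E] = O(\ell/\sqrt{n})$.

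To finish, let the adversarial oracle answer each query $q_t$ with $\tfrac{1}{2}(P_{q_t}(x) + P_{q_t}(x^{(n)}))$. By induction on $t$, this value lies within $\tau_0$ of the true expectation under both source strings whenever $2\tau_0 \geq |P_{q_t}(x) - P_{q_t}(x^{(n)})|$, which holds by the previous paragraph once $\tau_0 \geq C\ell/\sqrt n$. The entire query/answer transcript is therefore identical on $x$ and $x^{(n)}$, so the algorithm outputs the same reconstruction on the two strings, giving the contradiction. The main technical step is the uniform-in-$i$ Binomial anti-concentration bound in the second paragraph, which is a classical estimate; a minor bookkeeping issue is that when $i + \ell$ exceeds $|S|$ the window includes deterministic zero padding which is identical for $\by$ and $\by'$, so this regime only shrinks $E$ and the bound remains valid.
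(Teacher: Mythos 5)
Your proof is correct and essentially the same as the paper's: both pair each source string with a one-bit-flipped version (the paper flips the middle bit $x_{n/2}$ rather than the last bit, but either choice works since the binomial anti-concentration bound is uniform over the target position), couple the two trace distributions via a shared retention set, and bound the probability that the flipped bit lands in the $\ell$-bit query window by $O(\ell/\sqrt{n})$. The paper's version of the key estimate is phrased via a union bound $\sum_{j\in[\ell]}\Pr[\br_{i_j}=n/2]$ and actually covers arbitrary $\ell$-junta queries, slightly more general than your interval bound $\Pr[\mathrm{Binom}(n-1,1-\delta)\in[i,i+\ell-1]]$, but for $\ell$-local queries the two are equivalent, and your explicit adversarial-oracle transcript argument is just a more spelled-out version of what the paper leaves implicit.
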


Finally, we give an average-case $O(\log n)$-local SQ algorithm that has inverse polynomial tolerance:

\begin{theorem} [Average-case upper bound, informal version of \Cref{thm:avgupper}]  \label{thm:informal-avgupper}
Fix any constant deletion rate $0<\delta<1$.  There is an $O(\log n)$-local SQ  algorithm for average-case trace reconstruction with tolerance $\tau_0=1/\poly(n)$.
\end{theorem}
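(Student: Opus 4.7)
\noindent\emph{Proof plan.}
The plan is to implement, inside the local SQ model, an alignment-and-extension strategy inspired by the classical average-case algorithms of \cite{PeresZhai17,HPP18}. Set $\ell := C\log n$ for a sufficiently large constant $C = C(\delta)$ so that, with probability $1-o_n(1)$ over the uniformly random source string $x$, every length-$\ell$ substring of $x$ is distinct; this holds by a standard Chernoff + union bound, and we condition on this event throughout. The algorithm will reconstruct $x$ left to right, extending an already-known prefix one bit at a time using only $(\ell{+}1)$-local subword queries of tolerance $1/\poly(n)$.

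To bootstrap, I would run an $\ell$-local worst-case SQ algorithm (for example, \Cref{thm:informal-worstupper} applied at source length $\ell$) on the initial $\ell$ bits of $x$; since the subproblem has size $O(\log n)$, the tolerance requirement $\exp(-\tilde O(\ell^{1/5})) = n^{-o(1)}$ is easily met. For the inductive step, given $x_1,\dots,x_k$ with $k \geq \ell$, let $A := x_{k-\ell+1}\cdots x_k$ be the current anchor and $j^\star := \lfloor (1-\delta)(k-\ell+1)\rfloor$ the expected trace position where $A$ first aligns under no internal deletions. For each $b\in\{0,1\}$ and each $j$ in the window $[j^\star{-}W,\,j^\star{+}W]$ with $W = O(\sqrt{n}\log n)$, I would issue the $(\ell{+}1)$-local subword query
\begin{align*}
\widehat p_{j,b} \;\approx\; \Pr_{\by \sim \Del_\delta(x)}\!\bigl[\by_j \by_{j+1}\cdots \by_{j+\ell} = Ab\bigr]
\end{align*}
to tolerance $1/\poly(n)$, form $S(b) := \sum_{j} \widehat p_{j,b}$, and declare $x_{k+1}$ to be whichever $b$ maximizes $S(b)$.

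Correctness reduces to a signal-versus-noise separation. For the \emph{signal}, the ``zero-internal-deletion'' event, in which source positions $k{-}\ell{+}1,\dots,k{+}1$ all survive and map onto consecutive trace positions $j,\dots,j{+}\ell$, contributes at least $\Omega(1)\cdot(1-\delta)^{\ell+1} = n^{-O(C)}$ to $S(x_{k+1})$: the arrival-time of source position $k{-}\ell{+}1$ into the trace is a negative binomial with mean $j^\star$ and standard deviation $\Theta(\sqrt{n})$, so summing over the $W$-wide window captures $\Omega(1)$ of its mass and the subsequent $\ell$ retentions contribute an independent factor $(1-\delta)^\ell$. For the \emph{noise}, the contribution to $S(b)$ with $b \neq x_{k+1}$ must come either from alignments with at least one internal deletion inside $\{k{-}\ell{+}1,\dots,k{+}1\}$, or from spurious anchor matches originating at source positions other than $k{-}\ell{+}1$. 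The first type is smaller than the signal by a factor $\le \delta/(1-\delta)$ and, by the uniform randomness of $x_{k+2},x_{k+3},\dots$, is symmetric in $b$ to leading order. The second type is controlled by the uniqueness of $A$ in $x$: a Chernoff + union bound over the $n$ candidate source positions and the $2^{O(\ell)}$ deletion patterns shows the total spurious contribution is at most $n^{-\Omega(C)}$. Hence $|S(x_{k+1}) - S(1-x_{k+1})| = \Omega((1-\delta)^{\ell+1}) = 1/\poly(n)$, comfortably above the tolerance; iterating over the $n-\ell$ extension steps and union-bounding over the at most $\poly(n)$ queries completes the proof.

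The main obstacle will be the quantitative spurious-alignment estimate, namely showing that summing $\Pr[\by_j\cdots \by_{j+\ell}=Ab]$ over the window $[j^\star{-}W,\,j^\star{+}W]$ and over all wrong alignment positions does not build up $\Omega((1-\delta)^\ell)$ mass on the incorrect value of $b$. This requires carefully controlling how a length-$(\ell + O(\log n))$ source subword originating far from $k$ can, via a non-trivial pattern of internal deletions, reproduce $Ab$ as a trace subword. The randomness of $x$ at positions distant from $\{k{-}\ell{+}1,\dots,k{+}1\}$, combined with the uniqueness of length-$\ell$ source substrings, should suffice to keep the total spurious mass below $n^{-\Omega(C)}$, but making this bound fully rigorous (in particular, correctly accounting for dependencies across different alignment positions $j$ within the window) is the technical crux of the argument.
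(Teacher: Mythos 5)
Your plan takes a genuinely different route from the paper, and it also has a real gap that you partly acknowledge.

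The paper does not attempt a left-to-right alignment/extension scheme. Instead it adapts the smoothed-analysis algorithm of \cite{CDLSS22} (with smoothening parameter $\sigma = 1/2$) to the SQ model. The only obstruction is that \cite{CDLSS22} needs subword statistics $\mathsf{SW}_{x,w}(\delta')$ at a range of \emph{larger} deletion rates $\delta' \in \{\delta, \delta + \Delta, \ldots\}$, whereas our oracle answers queries about $\Del_\delta(x)$ only. The paper's technical content is a simulation lemma (\Cref{lem:simulate-local-different-delta}, via the ``selector'' process $\mathsf{Hypernb}$): any $\ell'$-local subword query to $\Del_{\delta'}(x)$, $\delta' \ge \delta$, can be computed to tolerance $\tau'$ from $\ell$-local subword queries to $\Del_\delta(x)$ with $\ell = \Theta\bigl(\frac{\ell'}{1-\delta}\log\frac{2}{1-\delta} + \frac{\log n}{1-\delta}\bigr)$ and tolerance $\tau'/2$. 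This is modular and sidesteps entirely the signal-vs-noise cancellation analysis your plan requires.

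Regarding your plan: you already flag the ``quantitative spurious-alignment estimate'' as the crux and leave it open, which by itself means the proof is incomplete. But there is a sharper problem upstream. Your claim that the internal-deletion contribution is ``smaller than the signal by a factor $\le \delta/(1-\delta)$'' does not hold. A single extra internal deletion still costs $(1-\delta)^{\ell+1}$ for the $\ell+1$ retained symbols, so its marginal cost is just one factor of $\delta$, and there are $\Theta(\ell)$ choices of which position is deleted and a growing number of choices at higher deletion counts; the raw mass of such events is \emph{larger} than the no-internal-deletion signal by roughly a factor of $\Theta(\ell\delta/(1-\delta))$, not smaller, and this blows up further as $\delta \to 1$. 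What would save you is not a magnitude bound but the asserted symmetry between $b = x_{k+1}$ and $b = 1-x_{k+1}$; yet that symmetry is exactly where the bias of the anchor creates trouble. The internal-deletion events that still read $A$ are precisely those forced by runs and repeated patterns inside the \emph{fixed, already-reconstructed} block $x_{k-\ell+1}\cdots x_k$ (and its few neighbors), and the character they read in the $(\ell+1)$-st slot is then determined by $x$ near position $k$, not by the fresh random bits $x_{k+2}, x_{k+3}, \ldots$. So ``symmetric to leading order'' needs an actual proof; it is not an immediate consequence of $x$ being uniform beyond position $k+1$, and the needed cancellation would have to beat the signal's $(1-\delta)^{\ell+1} = n^{-\Theta(C\log(1/(1-\delta)))}$ margin, which is delicate precisely when $\delta$ is close to $1$. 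Until both the incorrect magnitude bound and the unproven cancellation are replaced with a correct argument, the proposal does not establish \Cref{thm:informal-avgupper}.
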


Our results can be summarized as follows:  As discussed immediately before \Cref{sec:results}, we may say that an $\ell$-local SQ algorithm with tolerance $\tau_0$ has overall complexity $\poly(n2^\ell,1/\tau_0)$. 
\Cref{thm:informal-worstlower,thm:informal-worstupper} together say that the optimal complexity of worst-case local SQ trace reconstruction is $2^{\tilde{\Theta}(n^{1/5})}$, and \Cref{thm:informal-avglower,thm:informal-avgupper} together say that the optimal complexity of average-case local SQ trace reconstruction is $n^{\Theta(1)}.$

\subsection{Discussion and techniques}

\noindent {\bf The worst-case setting.}
\Cref{thm:informal-worstlower} and \Cref{thm:informal-worstupper} should be contrasted with recent results of Cheng et al.~\cite{CGLSZ23}, which consider a restricted class of local SQ algorithms known as \emph{$\ell$-mer based} algorithms. 
As defined by Mazooji and Shomorony~\cite{MS23}, the \emph{$\ell$-mer density map} is a certain vector of statistics about the frequency of length-$\ell$ subwords\footnote{Recall that a \emph{subword} of $x$ is a sequence of bits that occur consecutively in $x$, i.e.~$x_i x_{i+1} \cdots x_{i+\ell-1}$, whereas a \emph{substring} of $x$ is a subsequence of bits that need not occur consecutively, i.e. $x_{i_1} x_{i_2} \cdots x_{i_\ell}$.} of the source string $x \in \zo^n$.
\cite{MS23} gave an algorithm which, for constant deletion rate $0 < \delta < 1/2$,  constructs an $\eps$-accurate (in $\ell_\infty$ distance) estimate of the $\ell$-mer density map using $\poly(n,2^\ell,1/\eps)$ traces.  
Cheng et al.~\cite{CGLSZ23} defined a trace reconstruction algorithm to be \emph{$\ell$-mer based} if it only uses the $\ell$-mer density map of $x$, and observed that the algorithm of \cite{MS23} (see in particular Lemma~6 of \cite{MS23} and its proof) only uses local statistical information about traces, and hence is a local SQ algorithm.

The main result of Cheng et al.~is a proof that any $n^{1/5}$-mer based algorithm for worst-case trace reconstruction must have tolerance $\tau_0 = 2^{-\tilde{\Omega}(n^{1/5})}$.
Our \Cref{thm:informal-worstlower} generalizes this result  because it gives a lower bound for the entire class of $n^{1/5}$-local SQ algorithms, which includes the class of $n^{1/5}$-mer based algorithms by the results described above.
We remark that \Cref{thm:informal-worstlower} also has a shorter and simpler proof than Theorem~2 of \cite{CGLSZ23}.

At a high-level, we obtain \Cref{thm:informal-worstlower} by a reduction to proving a 1-local SQ lower bound on $n$-bit source strings that are  ``gappy.''
These are strings in which every two 1s are separated by $\gg n^{1/5}$ zeros (see \Cref{def:gappy} for the precise definition).
The intution here is that for a gappy string, any $n^{1/5}$-bit subword in its traces is very unlikely to contain two 1s, and so all the useful information is contained in subword queries of Hamming weight at most 1, which can then be further reduced to 1-bit queries.
Then we can adapt the lower bound arguments in \cite{DOS17} to gappy strings to obtain our lower bound.


Turning to \Cref{thm:informal-worstupper}, Cheng et al.~observed that 
the $2^{\tilde{O}(n^{1/5})}$-trace algorithm of \cite{Chase20} for worst-case trace reconstruction can be interpreted as a $\tilde{O}(n^{1/5})$-mer based algorithm with tolerance $\tau_0 = 2^{-\tilde{O}(n^{1/5})}$. By the earlier observation of Cheng et al.~mentioned in the first paragraph and the \cite{MS23} algorithm, which works provided that the deletion rate $\delta$ lies in $(0,1/2)$, this means that Chase's algorithm can be expressed as a $\tilde{O}(n^{1/5})$-local SQ algorithm which has tolerance $\tau_0 = 2^{-\tilde{O}(n^{1/5})}$ when $\delta \in (0,1/2).$
Our \Cref{thm:worstupper} is based on a similar observation about Chase's algorithm, but applied directly to the local SQ model without going through the notion of $k$-mer statistics. Our approach is based on techniques and arguments from \cite{CDLSS20smoove}; using these techniques allows our argument to apply more generally to the entire range of deletion rates $\delta \in (0,1)$.
%
%
%
%
%
\medskip

\noindent {\bf Average-case.}
The average-case lower bound of \Cref{thm:informal-avglower} is proved using a fairly simple argument based on ``hiding'' a bit which might be either 0 or 1 in the middle of the source string.  We turn to the average-case upper bound.

The average-case SQ algorithm described in \Cref{thm:informal-avgupper} is obtained by adapting an algorithm for \emph{smoothed} trace reconstruction to the SQ model. The \cite{CDLSS20smoove} paper gives an algorithm for ``smoothed'' trace reconstruction, which is a generalization of the average-case trace reconstruction problem.
While the algorithm of \cite{CDLSS20smoove} only interacts with the input traces by using them to form empirical estimates of subword frequencies in traces, it is not trivially an SQ algorithm.  This is because the \cite{CDLSS20smoove} algorithm estimates these subword frequencies across a range of different deletion probabilities $\delta, \delta + \Delta, \delta + 2 \Delta, \dots$ up to $(\delta + 1)/2.$
In the usual (non-SQ) trace reconstruction setting where traces are available, it is trivial to simulate access to $\Del_{\delta'}(x)$ given access to $\Del_\delta(x)$ for any $\delta' > \delta$, simply by drawing $\by \sim \Del_\delta(x)$ and deleting each bit of $\by$ independently with probability ${\frac {1-\delta'}{1-\delta}}$. But in the SQ setting, we only have access to statistical queries of traces drawn from $\Del_\delta(x)$ rather than individual traces.
We circumvent this issue by showing that any algorithm that makes $\ell$-local statistical queries with tolerance $\tau$ to $\Del_{\delta'}(x)$, for $\delta' > \delta$, can be simulated by an algorithm that makes only $\ell'$-local statistical queries with tolerance $\tau'$ to $\Del_{\delta}(x)$, where (roughly speaking) $\ell' \approx \ell/(1-\delta')$ and $\tau' = \Theta(\tau).$
With this ingredient in hand, the algorithm of \cite{CDLSS20smoove} is easily adapted to give \Cref{thm:informal-avgupper}.


%
%

\subsection{Future work}

Several natural questions suggest themselves for future work.  
Perhaps the foremost among these is the following:  Given \Cref{thm:informal-worstupper}, the current state-of-the-art unrestricted algorithm for the general worst-case trace reconstruction problem is an $\tilde{O}(n^{1/5})$-local SQ algorithm.  Might it be the case that this is in fact an optimal algorithm for trace reconstruction?  We currently seem quite far from being able to resolve this (recall that the state of the art in lower bounds for unrestricted worst-case trace reconstruction algorithms is only $\tilde{\Omega}(n^{3/2})$ traces \cite{Chase20}).

A partial step towards answering the above bold question would be to establish lower bounds on general SQ algorithms for worst-case trace reconstruction, i.e.~SQ algorithms that are not assumed to have bounded locality.  It is difficult to imagine how queries that depend on far-separated portions of an input trace could be useful, but proving this seems quite challenging.  

As a concrete first goal along these lines, a generalization of the notion of an $\ell$-local SQ is the notion of a \emph{size-$s$} SQ. A size-$s$ SQ is an SQ which asks for the expected value of some $s$-junta function $q'(\by_{i_1},\dots,\by_{i_s})$ of a random trace $\by$, but unlike an $\ell$-local  SQ the input bits of the junta do not need to form a consecutive block of positions in $\by$.  Similar to \Cref{lem:universal}, a size-$s$ SQ algorithm can be assumed  without loss of generality to use only query functions of the form $\Indicator{y_{i_1},\dots,y_{i_s}}=w$ as $(i_1,\dots,i_s)$ ranges over ${[n] \choose s}$ and $w$ ranges over $\zo^s$.
 Even the following goal appears to be  quite challenging:  
 \begin{quote}
 Show that any SQ algorithm for the worst-case trace reconstruction problem that makes only size-2 queries must have tolerance $\tau=1/n^{\omega(1)}.$
 \end{quote}
 We believe that this is an interesting target problem for future work.


\section{Preliminaries} \label{sec:prelims}

\noindent {\bf Notation.}
Given integers $a\le b$ we write $[a:b]$ to denote $\{a,\ldots,b\}$. 
It will be convenient for us to index a binary string~$x \in \zo^n$
  using $[0:n-1]$ as $x=(x_0,\dots,x_{n-1})$. 
We write $\ln$ to denote natural logarithm and $\log$ to denote logarithm to the base 2.
We write $|x|$ to denote the length of a string $x$.

We denote the set of non-negative integers by $\Z_{\geq 0}$.
We write $D_r(z)$ to denote the closed disk in the complex plane of radius $r$ centered at $z \in \C$, and $\partial D_r(z)$ to denote the circle which is the boundary of that disk.
%

\medskip

\noindent {\bf Subwords.} Fix a string $x \in \zo^n$ and an integer $k \in [n]$. A $k$-subword of $x$ is a 
 (contiguous) subword of $x$ of length $k$, given by $(x_a, x_{a+1}, \dots, x_{a+k-1})$ for some $a \in [0: n-k]$. 
 Given such a string $x$ and integers $0 \leq a < b \leq n-1$, we write $x[a:b]$ to denote the subword $(x_a, x_{a+1}, \dots, x_b)$.
 For a string $w \in \zo^k$, let $\#(w,x)$ denote the number of occurrences of $w$ as a subword of $x$. 

\medskip

\noindent {\bf Distributions.}
We use bold font letters to denote probability distributions and 
  random variables, which should be clear from the context.
We write ``$\bx \sim \bX$'' to indicate that random variable~$\bx$~is 
  distributed according to distribution $\bX$.

\medskip

\noindent {\bf Deletion channel and traces.}
Throughout this paper the parameter $\delta:0 <$ $\delta < 1$ denotes~the \emph{deletion probability}, and we write $\rho$ to denote the retention probability $\rho=1-\delta$.  Given a string $x \in \zo^n$, we write $\Del_\delta(x)$ to denote the distribution of the string that results from passing  $x$ through the $\delta$-deletion channel (so the distribution $\Del_\delta(x)$ is supported on $\zo^{\leq n}$), and we refer to a string drawn from $\Del_\delta(x)$ as a \emph{trace} of $x$.  Recall that a random trace $\by \sim \Del_\delta(x)$ is obtained by independently deleting each bit of $x$ with probability $\delta$ and concatenating the surviving bits.\hspace{0.05cm}\footnote{For simplicity in this work we assume that the deletion probability $\delta$ is known to the reconstruction algorithm.}

For $x \in \zo^n$ (recall that we index the bits of $x$ as $(x_0,\dots,x_{n-1}$)) we view a draw of a trace $\by \sim \Del(x)$ as corresponding to a $\rho$-biased random draw of a subset $\bR \subseteq [0:n-1]$, where the elements of $\bR$ are the bits that are \emph{retained} in $x$ to obtain $\by$.  So if the sorted elements of $\bR$ are $\bR=\{\br_0 < \br_1 < \cdots < \br_{\boldm-1}\}$ for some $\boldm \leq n$, then the bits of the trace $\by=(\by_0,\by_1,\dots,\by_{\boldm-1})$ are $\by_0=x_{\br_0}, \by_1 = x_{\br_1},$ and so on.  
\medskip

\subsection{Local Statistical Query algorithms}
\label{sec:localSQ}

As described earlier, an $\ell$-local query function $q: \zo^n \to [-1,1]$ is a function
\[
q(y)=q'(y_i,y_{i+1},\dots,y_{i+\ell-1})
\]
for some index $i$ and some function $q': \zo^\ell \to [-1,1],$ i.e.~a real-valued bounded $\ell$-junta over consecutive input variables.  
An algorithm is an \emph{$\ell$-local SQ algorithm with tolerance $\tau_0$} if all of its calls to the SQ oracle are made with $\ell$-local query functions and the tolerance parameter for each call is at least $\tau_0$.

Let us say that an $\ell$-local query function is a \emph{subword query} if it is of the form
\begin{equation}
\label{eq:subword}
q'(y)= \Indicator{(y_i,\dots,y_{i+\ell-1})=w}
\end{equation}
for some string $w \in \zo^\ell$.
The following simple lemma shows that without loss of generality, every $\ell$-local SQ algorithm makes at most $n 2^\ell$ (non-adaptive) queries, corresponding to all possible length-$\ell$ subword queries:

\begin{lemma} \label{lem:universal}
Let $A$ be an $\ell$-local SQ algorithm with tolerance $\tau_0$ (note that $A$ may make any number of calls to the SQ oracle and may be adaptive, i.e.~the choice of later queries may depend on the responses received on earlier queries). 
Then there is an algorithm $A'$ with the same behavior as $A$ which makes $n 2^\ell$ queries (all possible length-$\ell$ subword queries), each with tolerance $\tau_0/2^\ell.$
\end{lemma}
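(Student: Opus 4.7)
The plan is to exploit the fact that any bounded $\ell$-junta is a signed linear combination of at most $2^\ell$ subword indicators with coefficients in $[-1,1]$, so any single $\ell$-local query can be simulated by estimating those $2^\ell$ subword probabilities and taking a weighted sum. Concretely, given an $\ell$-local query $q(y) = q'(y_i, \ldots, y_{i+\ell-1})$ with $q' : \zo^\ell \to [-1,1]$, I would write
\[
q'(z) \;=\; \sum_{w \in \zo^\ell} q'(w)\cdot \Indicator{z=w},
\]
so that $P_q = \sum_{w} q'(w) \cdot P_{q_{i,w}}$, where $q_{i,w}(y) := \Indicator{(y_i,\ldots,y_{i+\ell-1})=w}$ is the subword query of the form \eqref{eq:subword}.

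Given this decomposition, the algorithm $A'$ is defined as follows. First, nonadaptively, $A'$ queries the SQ oracle on all $n \cdot 2^\ell$ subword queries $q_{i,w}$ for $i \in [0:n-\ell]$ and $w \in \zo^\ell$, each with tolerance $\tau_0 / 2^\ell$, obtaining estimates $\widehat{P}_{q_{i,w}}$. Then $A'$ internally simulates $A$ step by step: whenever $A$ would issue an $\ell$-local query $q$ of the form $q(y) = q'(y_i,\ldots, y_{i+\ell-1})$ with tolerance at least $\tau_0$, $A'$ returns the value $\widehat{P}_q := \sum_{w \in \zo^\ell} q'(w) \cdot \widehat{P}_{q_{i,w}}$ to $A$, and continues the simulation. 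Because the simulation is deterministic in the oracle's answers, $A'$ has the same external behavior as $A$ provided each simulated response is within tolerance $\tau_0$ of the true value $P_q$.

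To verify the tolerance, I would apply the triangle inequality: since $|q'(w)| \leq 1$ for all $w$,
\[
\bigl|\widehat{P}_q - P_q\bigr| \;\leq\; \sum_{w \in \zo^\ell} |q'(w)| \cdot \bigl|\widehat{P}_{q_{i,w}} - P_{q_{i,w}}\bigr| \;\leq\; 2^\ell \cdot \frac{\tau_0}{2^\ell} \;=\; \tau_0,
\]
which matches the tolerance promised to $A$. There is no serious obstacle here; the only subtle point worth flagging in the write-up is that adaptivity of $A$ is handled for free because the full family of $n\cdot 2^\ell$ subword queries is fixed in advance and depends on neither the earlier oracle responses nor the internal randomness of $A$, so $A'$ can collect all of them upfront and then replay $A$'s adaptive interaction using only post-processing of the stored estimates.
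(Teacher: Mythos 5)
Your proposal is correct and takes essentially the same approach as the paper: decompose each $\ell$-local query as a $[-1,1]$-weighted sum of the $2^\ell$ subword indicators on the same block, precompute all $n2^\ell$ subword estimates at tolerance $\tau_0/2^\ell$, and use the triangle inequality to conclude. Your explicit remark about adaptivity being handled by collecting all subword estimates upfront and replaying $A$ is a helpful point that the paper leaves implicit.
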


\begin{proof}
The algorithm $A'$ makes all $n 2^\ell$ subword queries of the form given in \Cref{eq:subword}, where $i$ ranges over $[n]$ and $w$ ranges over $\zo^\ell$.  It makes each such subword query with tolerance parameter $\tau_0/2^\ell$.
Let $p_{i,w} = \Prx_{\by \sim \Del(x)}[(\by_i,\dots,\by_{i+\ell-1})=w]$ and let $\widehat{p}_{i,w}$ be the value received from the SQ oracle in response to the query \eqref{eq:subword}, so $|\widehat{p}_{i,w} - p_{i,w}| \leq \tau_0/2^\ell$.

Let $(q,\tau_0)$ be any (query function, tolerance) pair that $A$ may make in the course of its execution. We show that a $\pm \tau_0$-accurate estimate $\widehat{P}_q$ of $P_q$ can be computed from the responses to the $n 2^\ell$ queries of $A'$.
This is easily seen to imply the lemma.

Since $A$ is $\ell$-local, the expected value $P_q$ is
\[
P_q = \Ex_{\by \sim \Del_\delta(x)}[q'(\by_i,\dots,\by_{i+\ell-1})]
\]
for some $q': \zo^\ell \to [-1,1]$ and some $i \in [n]$. Since
\[
P_q = \sum_{w \in \zo^\ell} p_{i,w} \cdot q'(w),
\]
by setting $\widehat{P}_q$ to be
\[
P_q = \sum_{w \in \zo^\ell} \widehat{p}_{i,w} \cdot q'(w),
\]
recalling that $|q'(w)| \leq 1$ for all $w$, the triangle inequality gives
\[
|\widehat{P}_q-P_q|
=
\left | \sum_{w \in \zo^\ell} (\widehat{p}_{i,w} - p_{i,w}) \cdot q'(w)
\right |
\leq \max_{w} |q'(w)| \cdot 
\sum_{w} |\widehat{p}_{i,w} - p_{i,w}|
\leq \sum_{w} \tau_0/2^\ell \leq \tau_0
\]
as desired.
\end{proof}


\section{Worst-case lower bounds} \label{sec:worst-lower}

In this section we prove the following lower bound on local SQ algorithms for the worst-case trace reconstruction problem:

\begin{theorem} [Worst-case lower bound]  \label{thm:worstlower}
Fix any constant deletion rate $0<\delta<1$.  For a suitable absolute constant $c_0$, any $c_0n^{1/5}/(\log n)^{2/5}$-local SQ algorithm for worst-case trace reconstruction must have tolerance $\tau_0 < \exp(-\Omega(n^{1/5}/(\log n)^{2/5})).$
\end{theorem}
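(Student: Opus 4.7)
The plan is to apply \Cref{lem:universal} to reduce to subword queries, then restrict the hard instances to \emph{gappy} strings in which any two $1$-bits are far apart. On gappy sources, $\ell$-local subword queries carry essentially no more information than $1$-local bit-probability queries, so the lower bound will follow from adapting the complex-analytic Littlewood-polynomial construction of \cite{DOS17,NazarovPeres17} to gappy strings.

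First I would invoke \Cref{lem:universal} and assume that the algorithm makes all $n \cdot 2^\ell$ length-$\ell$ subword queries nonadaptively, each with tolerance $\tau_0/2^\ell$. I then introduce \emph{$g$-gappy} length-$n$ strings, meaning strings in which every two $1$-bits are separated by more than $g$ zeros, where $g$ will be chosen polynomially larger than $\ell$; the optimization will force $g = \Theta(n^{2/5})$ up to logarithmic factors.

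The core reduction has two parts, both valid on $g$-gappy sources $x$. (i)~Every subword query $q$ of Hamming weight $\ge 2$ satisfies $\Ex_{\by \sim \Del_\delta(x)}[q(\by)] \le \exp(-\Omega(g))$: two source $1$-bits starting at source-distance $\ge g$ end at trace-distance exceeding $\ell$ outside an event of probability $\exp(-\Omega(g))$, by a Chernoff/Hoeffding bound on the number of retained bits between them. (ii)~Every subword query of Hamming weight $0$ or $1$ can, by inclusion–exclusion using (i), be written as an explicit linear combination of the single-bit probabilities $p_{j'}(x) := \Pr_{\by \sim \Del_\delta(x)}[\by_{j'} = 1]$ modulo an additive error of magnitude $\exp(-\Omega(g))$. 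Consequently, if a pair of $g$-gappy strings $x \neq x'$ satisfies $\max_j |p_j(x) - p_j(x')| \le \exp(-\Omega(n^{1/5}/(\log n)^{2/5}))$, then all $\ell$-local subword queries agree on $x$ and $x'$ up to additive error $\tau_0/2^\ell$, so no $\ell$-local SQ algorithm of tolerance $\tau_0$ can distinguish them.

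The main obstacle I expect is the final ingredient: constructing such a pair of gappy strings. Following the template of \cite{DOS17,NazarovPeres17}, the quantities $p_j(x) - p_j(x')$ are controlled by the supremum of the generating polynomial $P(z) := \sum_{i=0}^{n-1}(x_i - x'_i) z^i$ on a small arc near $z = 1$, and the $g$-gappiness of $x, x'$ forces $P$ to have $\{-1,0,1\}$-coefficients supported on the arithmetic progression $g\Z$; one then writes $P(z) = Q(z^g)$ for a $\{-1,0,1\}$-coefficient polynomial $Q$ of degree $n/g$ and applies the Littlewood-polynomial construction of \cite{DOS17,NazarovPeres17} to $Q$. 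The delicate point is to calibrate the original arc (on which $P$ must be small to control the $p_j$ differences) and the image arc under $w = z^g$ (on which the Littlewood bound for $Q$ is applied) so that they coexist in the regime where the bound yields $\exp(-\tilde\Omega((n/g)^{1/3}))$. Balancing this against the constraint $g \gg \ell + \log(2^\ell/\tau_0)$ inherited from part (i) of the reduction forces the choice $g = \Theta(n^{2/5})$ and yields the tolerance lower bound $\exp(-\Omega(n^{1/5}/(\log n)^{2/5}))$ claimed by the theorem.
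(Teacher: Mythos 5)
Your overall strategy matches the paper's: reduce to nonadaptive subword queries via \Cref{lem:universal}, restrict the hard pair to gappy strings so that weight-$\ge 2$ subword queries are exponentially unlikely, reduce weight-$\le 1$ subword queries to single-bit probabilities, and then adapt the \cite{DOS17,NazarovPeres17} Borwein--Erd\'elyi construction to $g$-gappy difference polynomials $P(z)=Q(z^g)$. This is precisely the paper's plan.

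However, your calibration of the gap parameter contains a genuine error. You claim the adapted Littlewood/Borwein--Erd\'elyi argument gives a one-bit-statistics bound of $\exp(-\tilde\Omega((n/g)^{1/3}))$, and you accordingly set $g = \Theta(n^{2/5})$, ``polynomially larger than $\ell$.'' Both statements are off. The correct bound is $\exp(-\tilde\Omega((n/g^2)^{1/3}))$, and the right gap is $g = \Theta(n^{1/5}(\log n)^{3/5})$, only a polylog factor above $\ell$. The missing factor of $g$ in the denominator comes from the arc-scaling you allude to but do not carry through. The Borwein--Erd\'elyi bound gives $|Q(\zeta)|\le \exp(-c_3 m)$ on a disk $D_{6/m}(1)$, and needs $\deg Q \ge c_2 m^2$, i.e.\ $m\le \sqrt{n/g}$; but the off-arc decay that makes the $\theta$ large regime harmless is supplied by the padding factor $w^{n/2}$ (with $w=1-\rho+\rho e^{i\theta}$), which gives $\exp(-\Omega(n\theta^2))$. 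Since $|1-w^g|\le 6/m$ only forces $|\theta|\le 1/(mg)$ (the power map $w\mapsto w^g$ stretches angles by a factor of $g$), the padding decay at the boundary of the arc is $\exp(-\Omega(n/(mg)^2))$, and balancing against $\exp(-c_3 m)$ yields $m^3 g^2 \asymp n$, i.e.\ $m = (n/g^2)^{1/3}$, not $(n/g)^{1/3}$. Plugging in your choice $g = n^{2/5}$ then gives only $\exp(-\tilde\Omega(n^{1/15}))$, which is far too weak to beat the effective tolerance $\tau_0 2^{-\ell}=\exp(-\Omega(n^{1/5}/(\log n)^{2/5}))$. You get the stated final bound because your two errors cancel: the correct optimization subject to $g\gtrsim\ell\log n$ (to kill weight-$\ge 2$ queries even after multiplying by $2^\ell$) and $(n/g^2)^{1/3}\gtrsim\ell$ forces $g=\Theta(n^{1/5}(\log n)^{3/5})$ and yields $\exp(-\Omega(n^{1/5}/(\log n)^{2/5}))$.
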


\noindent {\bf Setup.}
Fix any $0<\delta<1$.
For notational clarity let us write $\ell := c_0n^{1/5}/(\log n)^{2/5}$.
Given an $n$-bit source string $x$, an index $i \in [0:n-1]$, and an $\ell$-bit string $w$, we define the value
\begin{equation} \label{def:pxiw}
p_{x,i,w} := \Prx_{\by \sim \Del_\delta(x)}[(\by_i,\dots,\by_{i+\ell-1})=w],
\end{equation}
so $p_{x,i,w}$ is the probability that a random trace of $x$ has $w$ as the subword starting in position $i$. 
We refer to the vector $(p_{x,i,w})_{i \in [0:n-1],w \in \zo^\ell}$ as the \emph{$\ell$-subword signature} of $x$.

We will prove the following:
\begin{lemma} \label{lem:twodistinct}
  For a suitable absolute constant $c_0$, there are distinct $n$-bit strings $a \neq a' \in \zo^n$ whose $\ell$-subword signatures are very close to each other in $\ell_\infty$-distance: more precisely,
\begin{equation} \label{eq:close}
\text{For all $i \in [0:n-1], w \in \zo^\ell$, we have~}
|p_{a,i,w} - p_{a',i,w}| \leq 
\exp(-2c_0n^{1/5}/(\log n)^{2/5}).
\end{equation}
\end{lemma}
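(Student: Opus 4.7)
The plan is a two-stage reduction. First, restrict the source strings to be \emph{gappy}, meaning that any two $1$-bits are separated by at least $g$ zeros, where $g = \tilde{\Theta}(n^{2/5})$ is a gap parameter to be balanced. For gappy strings, the full $\ell$-subword signature is essentially determined by the $1$-local bit-mean statistics $q_{x,k} := \Pr_{\by \sim \Del_\delta(x)}[\by_k = 1]$. Second, invoke a Borwein--Erdelyi-style construction in the spirit of \cite{DOS17} to produce two distinct gappy strings whose bit means agree pointwise to within $\exp(-\Omega(n^{1/5}/(\log n)^{2/5}))$.

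For the first reduction, note that in a gappy source $x$, two trace $1$s can appear in a common $\ell$-window only if two source $1$s (necessarily $\ge g$ apart) survive while $\ge g-\ell$ of the zero bits between them are deleted. A union bound over pairs of window and source positions gives $p_{x,i,w} \le \poly(n)\cdot \exp(-\Omega(g))$ whenever $|w|\ge 2$, which is already far below the target tolerance $\exp(-2c_0 n^{1/5}/(\log n)^{2/5})$; so the weight-$\ge 2$ coordinates of the signature match automatically. For $|w|\le 1$ an inclusion--exclusion expansion yields
\[
p_{x,i,0^\ell} = 1 - \sum_{j=0}^{\ell-1} q_{x,i+j} + O\bigl(2^\ell \exp(-\Omega(g))\bigr), \qquad p_{x,i,e_j} = q_{x,i+j} + O\bigl(2^\ell \exp(-\Omega(g))\bigr).
\]
With $\ell = c_0 n^{1/5}/(\log n)^{2/5}$ and $g = \tilde{\Theta}(n^{2/5})$ the error $2^\ell \exp(-\Omega(g))$ is $\exp(-\tilde{\Omega}(n^{2/5}))$, again negligible. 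Hence it suffices to find distinct gappy $a \neq a'$ with $|q_{a,k}-q_{a',k}| \le \exp(-3c_0 n^{1/5}/(\log n)^{2/5})$ for every $k \in [0:n-1]$.

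To produce such a pair I pass to the mean-trace generating function $A_x(z) := \sum_k q_{x,k}\, z^k$. Parametrizing a gappy $x$ by its core $c \in \zo^{n/g}$ (so $x_j = c_{j/g}$ when $g \mid j$), one computes $A_x(z) = \rho\cdot Q_c\bigl((\delta + \rho z)^g\bigr)$ with $Q_c(u) := \sum_i c_i u^i$. The difference of two such candidates is $A_a(z)-A_{a'}(z) = \rho\cdot D\bigl((\delta + \rho z)^g\bigr)$, where $D(u) = \sum_i (c_i - c_i')\, u^i$ is a Littlewood-type polynomial of degree $N-1$ with $N = n/g = \tilde{\Theta}(n^{3/5})$ and coefficients in $\{-1,0,1\}$. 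A Borwein--Erdelyi construction in the spirit of \cite{DOS17} produces a nonzero $D$ with $|D(u)| \le \exp(-\Omega(N^{1/3})) = \exp(-\Omega(n^{1/5}/(\log n)^{2/5}))$ on the image of the unit circle under $z \mapsto (\delta + \rho z)^g$; extracting Fourier coefficients of $A_a - A_{a'}$ by Cauchy's integral formula then transfers this bound uniformly to each $|q_{a,k}-q_{a',k}|$. Taking $c_0$ small enough relative to the implicit constant in the Borwein--Erdelyi estimate closes the loop.

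The main obstacle is executing the Borwein--Erdelyi construction on the correct curve: whereas \cite{DOS17} constructs Littlewood-type polynomials that are small on a canonical arc, here we need smallness on the $g$-fold wrap of the circle $\{w : |w - \delta| = \rho\}$, with all estimates uniform in $\delta \in (0,1)$. Verifying that a suitable modification of the polynomial construction delivers the required smallness on this image curve is the delicate analytic step; once in hand, combining it with the two reductions above yields \Cref{lem:twodistinct} with the claimed parameters.
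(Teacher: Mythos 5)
Your high-level plan is the same as the paper's: restrict to gappy strings so that weight-$\ge 2$ subword statistics are negligible, reduce the weight-$\le 1$ statistics to one-bit means, and then use a Borwein--Erd\'elyi-type polynomial together with the substitution $w = 1-\rho+\rho z$ to produce two gappy strings with nearly identical one-bit means. This matches the paper in structure. However, your parameter choice $g = \tilde{\Theta}(n^{2/5})$ is wrong, and the error is not incidental -- it causes the step you flag as ``the delicate analytic step'' to fail.

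The correct balancing is as follows. With gap $g$ the core has length $N = n/g$, and after padding the core with $\sim N/2$ zeros, the relevant quantity is $|u^{N/2} \cdot D(u)|$ with $u = (\delta+\rho e^{i\theta})^g$. The Borwein--Erd\'elyi bound $\exp(-1/a)$ applies only where $|u-1|\lesssim a$, i.e.\ where $|\theta| \lesssim a/(g\rho)$; outside this range one relies on $|u^{N/2}| = |\delta+\rho e^{i\theta}|^{n/2} \le \exp(-c_\rho n\theta^2)$. At the crossover $|\theta|\approx a/(g\rho)$ this decay factor is $\exp\bigl(-\Theta(n a^2/g^2)\bigr)$, which must also be $\le \exp(-\Omega(1/a))$. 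Balancing $1/a \approx n a^2/g^2$ gives $a \approx (g^2/n)^{1/3}$ and final bound $\exp\bigl(-\Omega((n/g^2)^{1/3})\bigr)$ -- \emph{not} $\exp(-\Omega(N^{1/3}))$ as you assert. For this to reach $\exp(-\tilde\Omega(n^{1/5}))$ you need $g = \tilde\Theta(n^{1/5})$, which is exactly what the paper uses (their $t = 100\log(n)\,\ell/\rho = \Theta(n^{1/5}(\log n)^{3/5})$). With your $g = \tilde\Theta(n^{2/5})$ the balanced exponent is only $(n/g^2)^{1/3} = \tilde\Theta(n^{1/15})$, far too weak; worse, at the crossover angle the padding factor $|u^{N/2}|$ is $\exp(-\tilde O(n^{-1/5}))\approx 1$, so the two-regime argument simply does not close.

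Two further comments. First, $g = \tilde\Theta(n^{1/5})$ is still comfortably large enough for your weight-$\ge 2$ union bound: the paper's Lemma~\ref{lem:handleheavy} shows $p_{x,i,w}\le n^{-49\ell}$ already with this smaller gap, so there is no tension in shrinking $g$. Second, the paper does not merely hand-wave the ``smallness on the wrapped curve'' issue -- it resolves it explicitly via the pad-with-half-zeros trick and a two-case bound on $|\theta|\lessgtr 1/(mt)$ (their \Cref{eq:AB} and surrounding discussion); once the gap parameter is set correctly this is a short, routine argument, not a modification of the Borwein--Erd\'elyi construction itself.
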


To see why \Cref{lem:twodistinct} implies \Cref{thm:worstlower}, let $A$ be any $\ell$-local SQ algorithm with tolerance $\exp(-c_0 n^{1/5}/(\log n)^{2/5}).$
By \Cref{lem:universal}, there is an algorithm $A'$ with the same behavior as $A$ which makes only subword queries for subwords of length $\ell$, where each query of $A'$ has tolerance $\exp(-c_0 n^{1/5}/(\log n)^{2/5})/2^\ell>\exp(-2c_0 n^{1/5}/(\log n)^{2/5})$.
By \Cref{eq:close}, a query for the value of $p_{x,i,w}$ can be answered with the value
$
q_{i,w} =
{\frac {p_{a,i,w} + p_{a',i,w}} 2}
$
whether the source string $x$ is $a$ or $a'$. But this means that it is impossible for $A$ to be an algorithm which successfully solves the worst-case trace reconstruction problem.

In the rest of this section, we focus on establishing \Cref{lem:twodistinct}.

We require the following simple definition:
\begin{definition} \label{def:gappy}
Given $t > 1$, we say that a string $x \in \zo^n$ is \emph{$t$-gappy} if it is of the form
\[
x = b_0 0^{t-1} b_1 0^{t-1}\cdots b_{n/t - 1} 0^{t-1}
\]
for some string $b_0,b_1,\dots,b_{n/t - 1} \in \zo^{n/t}$.
\end{definition}
{Recall $\rho = 1-\delta$.
Fix 
\begin{equation} \label{eq:t}
t := {\frac {100 \log(n) \ell} \rho} = \Theta(n^{1/5} (\log n)^{3/5}).
\end{equation}
The two strings $a,a'$ whose existence is asserted by \Cref{lem:twodistinct} will both be $t$-gappy.  (We note that the argument of Cheng et al.~\cite{CGLSZ23} also used gappy strings.)

One reason that gappy strings are useful for us because they make it very easy to handle almost all of the $\ell$-bit strings $w \in \zo^\ell$ that we need to consider in order to establish \Cref{lem:twodistinct}. To see this, observe that any string $w$ containing at least two ones is very unlikely to be a length-$\ell$ subword of a random trace $\by$: since the source string is $t$-gappy, we expect consecutive ones in a random trace $\by$ to be at least $\rho t \gg \ell$ positions apart from each other.  More precisely, we have the following lemma:

\begin{lemma} \label{lem:handleheavy}
Let $x \in \zo^n$ be any $t$-gappy string, and let $w \in \zo^\ell$ be any string with at least two ones.  Then for any $i \in [0:n-1]$ we have
$p_{x,i,w} \leq 1/n^{49\ell}.$
\end{lemma}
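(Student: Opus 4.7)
The plan is to exploit the $t$-gappy structure of $x$ together with a Chernoff bound on the number of bits retained between two 1-positions of $x$. Fix any $w \in \zo^\ell$ containing at least two ones, and let $0 \le j_1 < j_2 \le \ell - 1$ be the positions of (say) the first two ones in $w$. If $\by \sim \Del_\delta(x)$ satisfies $(\by_i,\dots,\by_{i+\ell-1}) = w$, then $\by_{i+j_1} = \by_{i+j_2} = 1$; in the retention representation this means that $\br_{i+j_1}$ and $\br_{i+j_2}$ are positions at which $x$ equals $1$. Since every 1-bit of the $t$-gappy string $x$ lies at a multiple of $t$, we therefore have $\br_{i+j_2} - \br_{i+j_1} \ge t$, while exactly $j_2 - j_1 - 1 \le \ell - 2$ retained positions lie strictly between $\br_{i+j_1}$ and $\br_{i+j_2}$.

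Next I would union-bound over all pairs $(p_1, p_2)$ of 1-positions of $x$ with $p_2 - p_1 \ge t$:
$$
p_{x,i,w} \;\le\; \sum_{\substack{0 \le p_1 < p_2 \le n-1 \\ x_{p_1} = x_{p_2} = 1,\ p_2 - p_1 \ge t}} \rho^2 \cdot \Pr\bigl[X_{p_1,p_2} \le \ell - 2\bigr],
$$
where $X_{p_1,p_2} \sim \mathrm{Binomial}(p_2 - p_1 - 1, \rho)$ counts the retentions strictly between $p_1$ and $p_2$. Each summand's Binomial has mean $\mu \ge \rho(t-1) \ge 99 \log n \cdot \ell$, so $\ell - 2 \le \mu / (99 \log n)$, and the standard multiplicative Chernoff lower-tail bound yields
$$
\Pr[X_{p_1,p_2} \le \ell - 2] \;\le\; \exp\!\bigl(-(1 - 1/(99 \log n))^2 \mu / 2\bigr) \;\le\; \exp(-48 \log n \cdot \ell) \;\le\; n^{-60 \ell}
$$
for $n$ sufficiently large, where the last step uses the identity $\exp(-c \log_2 n) = n^{-c / \ln 2}$.

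Combining with the fact that there are at most $n^2$ such pairs $(p_1,p_2)$ yields $p_{x,i,w} \le n^2 \cdot n^{-60\ell} \le n^{-49\ell}$ as desired, for all $\ell \ge 1$ and $n$ sufficiently large. The main obstacle is really just a careful constants chase: one has to verify that the Chernoff exponent --- paid for by the multiplicative constant $100$ in \Cref{eq:t} --- dominates both the $n^2$ loss from the union bound and the lower-order $\log n$, $\rho$, and $\ln 2$ factors floating around. The choice $t = (100 \log n)\ell / \rho$ is made with exactly this slack in mind, so no structural input about $x$ beyond $t$-gappiness is needed.
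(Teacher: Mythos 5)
Your proof is correct, and it takes a mildly different route from the paper's. Both arguments share the key reduction: the constraint $(\by_i,\dots,\by_{i+\ell-1})=w$ forces two retained positions that are both 1s of $x$, hence (by $t$-gappiness) at least $t$ source positions apart, with at most $\ell-2$ further retentions strictly between them. The paper then conditions on the retention process up to $\br_{i+\alpha}$ and directly bounds $\sum_{j\ge 1}\binom{jt}{k}\rho^k\delta^{jt-k}$ by elementary binomial estimates and a geometric series, which yields the bound with no polynomial overhead. You instead union-bound over the (at most $n^2$) ordered pairs of 1-positions of $x$ and apply a multiplicative Chernoff lower tail to the Binomial$(p_2-p_1-1,\rho)$ counting retentions between them. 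Your route is more modular and does not require any bespoke manipulation of binomial coefficients, at the price of an $n^2$ loss from the union bound; you correctly note that the generous constant in $t=100(\log n)\ell/\rho$ absorbs it. One small point worth spelling out: the Chernoff bound you invoke needs $\epsilon=1-(\ell-2)/\mu\in(0,1]$, which holds automatically since $\mu\ge 99(\log n)\ell>\ell-2\ge 0$; the boundary case $\ell=2$ gives $\epsilon=1$, where the bound reduces to $\Pr[X=0]=(1-\rho)^{p_2-p_1-1}\le e^{-\mu}$ and is still valid. That is a detail to make explicit, not a gap.
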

\begin{proof}
Fix $0 \leq \alpha <\beta \leq \ell-1$ to be any two positions in $w$ such that $w_\alpha=w_\beta=1$, and let $\by \sim \Del_\delta(x)$. 
Observe that we have
$p_{x,i,w} \leq \Pr[\by_{i+\alpha}=\by_{i+\beta}=1]$.

Let $\bR=\{\br_0 < \br_2 < \cdots < \br_{\boldm-1}\} \subseteq [0:n-1]$ be the $\rho$-biased random subset of $[0:n-1]$ consisting of the indices that are retained in  $x$ to obtain $\by$.  
We may view the draw of $\bR$ as being carried out sequentially in independent stages $0,1,\dots$, where in each stage $s$ the element $s$ is included in $\bR$ with probability $\rho$.
Fix any outcome of stages $0,1,\dots$ up until $r_{i+\alpha}$ has been included in $\bR$. Even supposing that $x_{r_{i+\alpha}}=1$ (so that $y_{i+\alpha}=1$),  the probability that $x_{\br_{i+\beta}}=1$ (which equals $\Pr[\by_{i+\beta}=1]$) is at most (writing $k$ for $\beta-\alpha$)
\begin{align}
  \MoveEqLeft
\sum_{j \geq 1} \Pr[\text{exactly $k$ of the $jt$ indices in $r_{i+\alpha}+1,\dots,r_{i+\alpha} + jt$ are retained}] \label{eq:prob}\\
&= \sum_{j \geq 1} {jt \choose k} \rho^k \delta^{jt-k}\nonumber \\
&\leq \sum_{j \geq 1} (\rho jt)^k \cdot \delta^{jt/2}
= (\rho t)^k \sum_{j \geq 1} j^k \delta^{jt/2}
\tag{since $k \leq jt/2$}\nonumber\\
&\leq (\rho t)^\ell \sum_{j \geq 1} j^\ell (1-\rho)^{jt/2}
\tag{using $k \leq \ell$}\nonumber\\
&\leq (100 \log(n) \ell)^\ell \sum_{j \geq 1} j^\ell e^{-50 \log(n)\ell j}
\tag{by the choice of $t$, and using $(1-\rho)^{1/\rho} \leq e^{-1}$}. \nonumber
\end{align}
When $j=1$ the first term of the sum $\sum_{j \geq 1} j^\ell e^{-50 \log(n) \ell j}$ is $e^{-50 \log(n) \ell}$.
The ratio of successive terms of the sum is
\[
{\frac {(j+1)^\ell e^{-50 \log(n) \ell (j+1)}}
{j^\ell e^{-50 \log(n) \ell j}}} \leq 2^\ell e^{-50\log(n)\ell} = (2/n^{50})^\ell \ll 1/2.
\]
So the sum $\sum_{j \geq 1} j^\ell e^{-50 \log(n) \ell j}$ is at most $2e^{-50 \log(n) \ell}=2/n^{50\ell}$, and since $100 \log(n) \ell<n$ for $n$ sufficiently large, we get that $\eqref{eq:prob} \leq 1/n^{49\ell}.$ It follows that  $p_{x,i,w} \leq \Pr[\by_{i+\alpha}=\by_{i+\beta}=1] \leq 1/n^{49\ell}$ as claimed.
\end{proof}

Given \Cref{lem:handleheavy} it remains to argue about the $\ell+1$ strings $w \in \zo^\ell$ of Hamming weight 0 or 1. We handle the weight-1 strings by reducing their analysis to the analysis of \emph{one-bit} strings as follows: fix any $\alpha \in [0:\ell-1]$ and let $w=e_\alpha \in \zo^\ell$ be the string with a single 1 coordinate in position $\alpha$.
The following lemma, which we prove using \Cref{lem:handleheavy}, shows that for any gappy source string $x$ the value of $p_{x,i,e_{\alpha}}$ is very  close to the expected value of a \emph{single} location in a random trace.  (A sharper bound could be obtained with a bit more work, but the bound given by \Cref{lem:weightone} is sufficient for our purposes.)

\begin{lemma} \label{lem:weightone}
Let $x \in \zo^n$ be any $t$-gappy string, and let $w = e_\alpha \in \zo^\ell$ be the string containing a single 1 in coordinate $\alpha$.  Then for any $i \in [0:n-1]$ we have
\[
\left|
\Prx_{\by \sim \Del_\delta(x)}[\by_{i+\alpha}=1]
-
p_{x,i,e_{\alpha}} 
\right|
\leq 2^{\ell-1} / n^{49 \ell}.
\]
\end{lemma}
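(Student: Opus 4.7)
The plan is to prove the lemma by a direct decomposition of $\Pr[\by_{i+\alpha} = 1]$ into a sum of subword probabilities and then apply \Cref{lem:handleheavy} to control the terms other than $p_{x,i,e_\alpha}$. The main observation is that the event $\{\by_{i+\alpha} = 1\}$ is exactly the disjoint union, over all $w \in \zo^\ell$ with $w_\alpha = 1$, of the events $\{(\by_i, \dots, \by_{i+\ell-1}) = w\}$. Therefore
\[
\Prx_{\by \sim \Del_\delta(x)}[\by_{i+\alpha} = 1] \;=\; \sum_{w \in \zo^\ell:\, w_\alpha = 1} p_{x,i,w},
\]
and subtracting $p_{x,i,e_\alpha}$ yields
\[
\Prx_{\by \sim \Del_\delta(x)}[\by_{i+\alpha} = 1] - p_{x,i,e_\alpha} \;=\; \sum_{\substack{w \in \zo^\ell:\, w_\alpha = 1\\ w \neq e_\alpha}} p_{x,i,w}.
\]

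Every $w$ appearing in the sum on the right-hand side has a $1$ in coordinate $\alpha$ and at least one additional $1$ in some other coordinate, so $w$ has Hamming weight at least $2$. Hence \Cref{lem:handleheavy} applies and gives $p_{x,i,w} \leq 1/n^{49\ell}$ for each such $w$. Since the number of strings $w \in \zo^\ell$ with $w_\alpha = 1$ is exactly $2^{\ell-1}$, the number of terms in the sum is at most $2^{\ell-1} - 1 < 2^{\ell-1}$, and the triangle inequality yields
\[
\left| \Prx_{\by \sim \Del_\delta(x)}[\by_{i+\alpha} = 1] - p_{x,i,e_\alpha} \right| \;\leq\; \sum_{\substack{w:\, w_\alpha = 1\\ w \neq e_\alpha}} p_{x,i,w} \;\leq\; \frac{2^{\ell-1}}{n^{49\ell}},
\]
as desired.

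There is essentially no obstacle here: the lemma is a routine combination of the identity expressing a single-coordinate event as a union of subword events with the weight-$\geq 2$ tail bound already established in \Cref{lem:handleheavy}. The only thing to be a little careful about is the counting of how many $w$ contain a $1$ in position $\alpha$ (which is $2^{\ell-1}$, not $2^\ell$), but this is what produces the $2^{\ell-1}$ factor in the stated bound.
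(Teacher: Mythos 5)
Your proof is correct and follows essentially the same approach as the paper: decompose $\Prx_{\by \sim \Del_\delta(x)}[\by_{i+\alpha}=1]$ as a sum of subword probabilities over all $w$ with $w_\alpha = 1$, observe that every term other than $p_{x,i,e_\alpha}$ has $|w| \geq 2$, and apply \Cref{lem:handleheavy} together with the count of $2^{\ell-1} - 1 < 2^{\ell-1}$ such terms. No meaningful difference from the paper's argument.
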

\begin{proof}
We have
\[
\Prx_{\by \sim \Del_\delta(x)}[\by_{i+\alpha}=1] =
\sum_{w \in \zo^\ell: w_\alpha=1}
p_{x,i,w}, \quad \text{so}
\]
\[
0 \leq 
\Prx_{\by \sim \Del_\delta(x)}[\by_{i+\alpha}=1] - 
p_{x,i,e_{\alpha}}  =
\sum_{w \in \zo^\ell: w_\alpha=1, |w| \geq 2}
p_{x,i,w}
\leq (2^{\ell-1} - 1) /n^{49 \ell}
\]
where the inequality is \Cref{lem:handleheavy}.
\end{proof}

The one remaining $\ell$-bit string to consider is $w = 0^\ell$.  However, if all $2^\ell-1$ other strings have been handled successfully then this string is automatically handled as well:

\begin{lemma} \label{lem:laststring}
Fix $a,a' \in \zo^n$ and $i \in [0:n-1]$. Suppose that
for all $w \in \zo^\ell \setminus \{0^\ell\}$ we have
$|p_{a,i,w} - p_{a',i,w}| \leq \kappa$. Then $|p_{a,i,0^\ell} - p_{a',i,0^\ell}| \leq (2^\ell - 1)\kappa.$
\end{lemma}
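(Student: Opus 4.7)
The plan is to exploit the basic observation that for any fixed source string $x$ and any fixed starting index $i$, the values $p_{x,i,w}$ as $w$ ranges over $\zo^\ell$ form a probability distribution on $\zo^\ell$. Indeed, using the padding convention (whereby a short trace is viewed as being extended by a suffix of $0$-bits), for every outcome of $\by \sim \Del_\delta(x)$ the block $(\by_i, \dots, \by_{i+\ell-1})$ takes exactly one value in $\zo^\ell$, so the events $\{(\by_i, \dots, \by_{i+\ell-1}) = w\}$ are pairwise disjoint and exhaustive over $w \in \zo^\ell$. Hence
\[
\sum_{w \in \zo^\ell} p_{a,i,w} \;=\; 1 \;=\; \sum_{w \in \zo^\ell} p_{a',i,w}.
\]

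Taking the difference of these two identities and isolating the $w = 0^\ell$ term yields
\[
p_{a,i,0^\ell} - p_{a',i,0^\ell} \;=\; -\!\!\sum_{w \in \zo^\ell \setminus \{0^\ell\}} \bigl(p_{a,i,w} - p_{a',i,w}\bigr).
\]
Applying the triangle inequality to the right-hand side and invoking the hypothesis $|p_{a,i,w} - p_{a',i,w}| \leq \kappa$ for each of the $2^\ell - 1$ nonzero strings $w$ gives the claimed bound $|p_{a,i,0^\ell} - p_{a',i,0^\ell}| \leq (2^\ell - 1)\kappa$.

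There is no real obstacle here; the only subtlety worth flagging is the padding convention, which is what makes the subword probabilities at a fixed starting position $i$ sum to $1$ (otherwise one would pick up a defect from traces that are too short to contain positions $i, \dots, i+\ell-1$, and the identity would fail). Since the paper's footnote in \Cref{sec:intro} explicitly makes this padding convention, the argument goes through cleanly.
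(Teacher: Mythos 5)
Your proof is correct and takes essentially the same route as the paper's, which likewise invokes the identity $\sum_{w \in \zo^\ell} p_{x,i,w} = 1$ and leaves the triangle-inequality step implicit. Your explicit note that the padding convention is what makes these probabilities sum to $1$ is a helpful clarification but does not change the argument.
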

\begin{proof}
This is an immediate consequence of $\sum_{w \in \zo^\ell} p_{x,i,w} = 1$, which holds for every $x$ and  $i$.
\end{proof}

Thus, it suffices to construct two $t$-gappy strings $a,a'$ whose one-bit statistics are very close:

\begin{lemma} \label{lem:onebitenough}
For $x \in \zo^n$ and $i \in [0:n-1]$ define 
\begin{equation} \label{eq:pxi}
p_{x,i} := \Prx_{\by \sim \Del_\delta(x)}[\by_i=1].
\end{equation}
Suppose that $a \neq a' \in \zo^n$ are two $t$-gappy strings such that for each $i \in [0:n-1]$ we have $|p_{a,i}-p_{a',i}| \leq \exp(-\Omega(n^{1/5}/(\log n)^{2/5})).$  Then for all $i \in [0:n-1]$, $w \in \zo^\ell$ we have 
\[
|p_{a,i,w} - p_{a',i,w}| \leq 2^\ell \cdot \exp(-\Omega(n^{1/5}/(\log n)^{2/5})) + 4^\ell / n^{49 \ell}
\leq \exp(-2c_0n^{1/5}/(\log n)^{2/5}).
\]
\end{lemma}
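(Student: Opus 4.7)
The plan is to split the argument into three cases based on the Hamming weight of $w$, using the three preceding lemmas as black boxes. Write $\epsilon := \exp(-\Omega(n^{1/5}/(\log n)^{2/5}))$ for the hypothesized bound on one-bit statistics.

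First, for any $w \in \zo^\ell$ with $|w| \geq 2$, I will apply \Cref{lem:handleheavy} to both gappy strings $a$ and $a'$ separately. Each of $p_{a,i,w}$ and $p_{a',i,w}$ is at most $1/n^{49\ell}$, so the triangle inequality gives $|p_{a,i,w} - p_{a',i,w}| \leq 2/n^{49\ell}$, which is vastly smaller than all other terms.

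Second, for the $\ell$ weight-one strings $w = e_\alpha$, I will use \Cref{lem:weightone} to replace the $\ell$-bit subword probability with the one-bit probability, up to error $2^{\ell-1}/n^{49\ell}$. Applying the triangle inequality then gives
\[
|p_{a,i,e_\alpha} - p_{a',i,e_\alpha}| \;\leq\; |p_{a,i+\alpha} - p_{a',i+\alpha}| + 2 \cdot \frac{2^{\ell-1}}{n^{49\ell}} \;\leq\; \epsilon + \frac{2^\ell}{n^{49\ell}},
\]
where the last step uses the hypothesis of the lemma.

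Third, having established a uniform bound of $\kappa := \epsilon + 2^\ell/n^{49\ell}$ on $|p_{a,i,w} - p_{a',i,w}|$ for every $w \neq 0^\ell$, I will apply \Cref{lem:laststring} to conclude that
\[
|p_{a,i,0^\ell} - p_{a',i,0^\ell}| \;\leq\; (2^\ell - 1)\kappa \;\leq\; 2^\ell \epsilon + \frac{4^\ell}{n^{49\ell}}.
\]
This bound dominates the bounds from the first two cases, so it gives the first displayed inequality in the lemma statement.

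The only remaining step is the arithmetic check of the second inequality, $2^\ell \epsilon + 4^\ell/n^{49\ell} \leq \exp(-2c_0 n^{1/5}/(\log n)^{2/5})$. With $\ell = c_0 n^{1/5}/(\log n)^{2/5}$, the term $2^\ell \epsilon$ equals $\exp(O(\ell) - \Omega(\ell))$, which is $\exp(-\Omega(\ell))$ and hence at most $\frac{1}{2}\exp(-2c_0 n^{1/5}/(\log n)^{2/5})$ provided $c_0$ is chosen small enough (equivalently, provided the constant hidden in the $\Omega(\cdot)$ in the hypothesis is large enough relative to $c_0 \ln 2$). The term $4^\ell/n^{49\ell} = \exp(\ell \ln 4 - 49 \ell \ln n)$ is negligible compared to $\exp(-2\ell)$ for large $n$. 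Neither step presents any real obstacle, since the three preceding lemmas do all the actual work; the present lemma is essentially a bookkeeping assembly of those bounds.
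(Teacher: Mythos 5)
Your proof is correct and follows essentially the same three-case decomposition (by Hamming weight of $w$) that the paper uses, invoking \Cref{lem:handleheavy}, \Cref{lem:weightone}, and \Cref{lem:laststring} in the same roles. The only cosmetic difference is that you carefully write $2/n^{49\ell}$ for the weight-$\geq 2$ case via the triangle inequality while the paper records $1/n^{49\ell}$, and you spell out the final arithmetic check that the paper leaves implicit; neither affects the argument.
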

\begin{proof}
\Cref{lem:handleheavy} gives $|p_{a,i,w} - p_{a',i,w}| \leq  1/n^{49\ell}$ for $|w| \geq 2$.
\Cref{lem:weightone} and the assumption on $|p_{a,i}-p_{a',i}|$ gives $|p_{a,i,w}-p_{a',i,w}| \leq \exp(-\Omega(n^{1/5}/(\log n)^{2/5})) + 2^\ell/n^{49\ell}$ for $|w| = 1.$
Given these bounds, \Cref{lem:laststring} gives $|p_{a,i,0^\ell} - p_{a',i,0^\ell}| \leq 2^\ell \cdot \exp(-\Omega(n^{1/5}/(\log n)^{2/5})) + 4^\ell / n^{49\ell}.$
\end{proof}

\subsection{Establishing closeness of one-bit statistics}

Let us write $p_x = (p_{x,0},\dots,p_{x,n-1})$ to denote the $n$-dimensional vector in $[0,1]^n$ whose coordinates are given by \Cref{eq:pxi}.
From the results in the previous subsection it suffice to prove the following:

\begin{lemma} \label{lem:onebitclose}
There are two distinct $t$-gappy strings $a,a' \in \zo^n$ such that
for all $i \in [0:n-1]$ we have $\|p_a - p_{a'}\|_\infty \leq \exp(-\Omega(n^{1/5}/(\log n)^{2/5})).$
\end{lemma}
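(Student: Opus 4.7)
The plan is to adapt the polynomial-based pigeonhole argument of~\cite{DOS17, NazarovPeres17} to the gappy setting, exploiting the fact that a $t$-gappy string has only $m := n/t$ degrees of freedom.

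First I would establish a polynomial reformulation. From the elementary formula $p_{x,i} = \rho^{i+1}\sum_j x_j \binom{j}{i}\delta^{j-i}$ one derives the generating-function identity $\sum_i p_{x,i} z^i = \rho\,Q_x(\delta+\rho z)$, where $Q_x(w) := \sum_j x_j w^j$; Cauchy's integral formula on $|z|=1$ then yields
\[
|p_{a,i}-p_{a',i}| \;\le\; \rho\,\sup_{w\in C_{\delta,\rho}}|Q_a(w)-Q_{a'}(w)|,\qquad C_{\delta,\rho} := \{w\in\C : |w-\delta|=\rho\}.
\]
For $t$-gappy $a,a'$, writing $c_k:=a_{kt}-a'_{kt}\in\{-1,0,1\}$ (nonzero iff $a\ne a'$), the difference factors as $Q_a(w)-Q_{a'}(w)=R_c(w^t)$ with $R_c(v):=\sum_{k=0}^{m-1}c_k v^k$, a nonzero $\{-1,0,1\}$-coefficient polynomial of degree less than $m$. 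Thus the lemma reduces to producing such an $R_c$ with $\sup_{w\in C_{\delta,\rho}}|R_c(w^t)| \le \exp(-\Omega(n^{1/5}/(\log n)^{2/5}))$.

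Second, I would find such an $R_c$ by pigeonholing the $2^m$ $t$-gappy strings, following the approach of~\cite{DOS17}. Split $C_{\delta,\rho}$ into a near-$1$ arc $N := \{w : |w-1|\le r_0\}$ and its complement $F$. On $F$, the identity $1-|w|^2=(\delta/\rho)|w-1|^2$ yields $|w|^t \le n^{-\Omega(1)}$ once $r_0 \gtrsim \sqrt{(\log n)/t}$; combining this with a restriction to $c$'s whose first $L$ coefficients vanish (costing $L$ from the pigeonhole budget) gives $|R_c(w^t)|\le m\,|w|^{tL}$, comfortably below the target for $L$ polynomially small compared to $m$. On $N$, I would use the Taylor expansion $Q_x(w)=\sum_j T_j(x)(w-1)^j$ with integer coefficients $T_j(x):=\sum_k b_k\binom{kt}{j}$ satisfying $|T_j(x)|\le m\binom{n}{j}$, round the tuple $(T_0(x),\dots,T_{J-1}(x))$ to a grid of spacing $\Delta$, and pigeonhole: two distinct gappy strings $a, a'$ sharing a cell satisfy $Q_{a-a'}(w) = \sum_{j<J}e_j(w-1)^j + (\text{tail})$ with $|e_j|\le\Delta$, giving on $N$ the bound $|Q_{a-a'}(w)|\le J\Delta(2\rho)^J + \sum_{j\ge J}2m\binom{n}{j}r_0^j$.

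The main obstacle is the careful balancing of $r_0, J, \Delta, L$ to satisfy three competing constraints: (i) the pigeonhole budget $J\log(m\binom{n}{J}/\Delta) \lesssim m - L$; (ii) control of the Taylor tail on $N$, which forces $J \gtrsim nr_0$ so that successive binomial ratios are below $1$; and (iii) the $F$-region decay, $tLr_0^2 \gtrsim n^{1/5}/(\log n)^{2/5}$. The particular choice $t = \Theta(n^{1/5}(\log n)^{3/5})$ in~\eqref{eq:t} is precisely what brings these three constraints into mutual compatibility with the target exponent $n^{1/5}/(\log n)^{2/5}$; executing this balance is the most delicate step and mirrors the analogous calculation in~\cite{DOS17}, now with effective dimension $m = n/t$ rather than $n$.
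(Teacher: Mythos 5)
Your first half---the polynomial reformulation $\|p_a-p_{a'}\|_\infty \lesssim \sup_{|z|=1}|\rho\,Q_{a-a'}(\delta+\rho z)|$ via Cauchy, and the observation that for $t$-gappy $a,a'$ the difference factors as $Q_a(w)-Q_{a'}(w)=R_c(w^t)$ for a $\{-1,0,1\}$-polynomial $R_c$ of degree $<m=n/t$---matches the paper (\Cref{eq:delicious}) exactly. The gap is in the second half: the pigeonhole you propose does not close.

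The three constraints you name are in fact \emph{never} mutually compatible, for any choice of $t$. From (ii), $r_0\lesssim J/n$; from (iii) and $L\le m$, $\mu := n^{1/5}/(\log n)^{2/5}\lesssim tLr_0^2\le tm\,(J/n)^2=J^2/n$ (using $tm=n$); and from (i), $J^2\lesssim m/\log n$. Chaining these gives $\mu\lesssim m/(n\log n)=1/(t\log n)<1$, i.e.\ no nontrivial closeness at all. The culprit is constraint (ii): because you pigeonhole over all $2^{m-L}$ admissible $c$'s, the uncontrolled Taylor tail of $Q_{a-a'}$ is that of a degree-$(n-1)$ polynomial, forcing $r_0\lesssim J/n$, and this $1/n$ scale is far too small to make the $F$-region decay $|w|^{tL}\le\exp(-\Omega(tLr_0^2))$ useful within the budget for $L$ and $J$.

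The paper does not pigeonhole at all here. It invokes the Borwein--Erd\'elyi theorem (\Cref{thm:BE}, Theorem~3.3 of \cite{BE97}): there is a \emph{single} nonzero $\{-1,0,1\}$-polynomial $u$ of degree $\le c_2/a^2$ that is $\le\exp(-c_3/a)$ on the disk $D_{6a}(1)$. Applied with $a=1/m'$, $m'=n^{1/5}/(\log n)^{2/5}$, this produces a polynomial of degree $\ll m/2$ that is exponentially small on a disk of radius $\Theta(1/m')\gg 1/\sqrt{m}$---a radius unattainable by your Taylor-truncation pigeonhole, which is stuck at $\lesssim 1/\sqrt{m\log m}$. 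The paper then pads with $m/2$ leading zeros and splits the circle at $|\theta|=1/(m't)$ exactly as you envision for the $N/F$ decomposition; but the near-$1$ bound comes from \Cref{thm:BE}, not a coefficient-rounding argument. (You cite \cite{DOS17} as the source of a pigeonhole, but \cite{DOS17} itself cites \cite{BE97} for this step.) If you insist on a pigeonhole route, you would need to pigeonhole only over $c$'s supported on a short window of length $k\approx\mu^2\log\mu\ll m$ so that the Taylor tail is that of a degree-$k$ polynomial; that variant does balance, but gives a slightly weaker $\exp(-\Omega(n^{1/5}/(\log n)^{4/5}))$.
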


This is very similar to the main lower bound statement that was established in the two works \cite{DOS17,NazarovPeres17} (independently of each other); those papers considered ``one-bit statistics'' which correspond precisely to our $p_{x,i}$ quantities, and showed that there are two distinct strings $x,x' \in \zo^n$ (not restricted to be gappy) such that $|p_{x,i} - p_{x',i}| \leq \exp(-\Omega(n^{1/3}))$ for all $i \in [0:n-1]$.
In what follows we adapt their techniques to deal with $t$-gappy source strings.

Following \cite{DOS17}, given a pair of source strings $a,a' \in \zo^n$ we define the corresponding \emph{deletion-channel polynomial} (over $\C$) to be
\begin{equation} \label{eq:dcp}
P_{a,a'}(z) := \sum_{i=0}^{n-1} (p_{a,i} - p_{a',i}) \cdot z^i.
\end{equation}
We have
\begin{equation} \label{eq:a}
\|p_a - p_{a'}\|_\infty \leq  \|p_a - p_{a'}\|_1 \leq \sqrt{n} \max_{z \in \partial D_1(0)} |P_{a,a'}(z)|,
\end{equation}
where the second inequality is by Proposition~3.5 of \cite{DOS17} (the proof is a simple and standard computation about complex polynomials).
Thus our goal is to establish the existence of two distinct $t$-gappy strings $a \neq a' \in \zo^n$ for which 
$\max_{z \in \partial D_1(0)} |P_{a,a'}(z)|$ is small.
To do this, we begin by observing that since bit $j$ of a source string ends up in location $i$ of a trace with probability ${j \choose i} \rho^{i+1}\delta^{j-i}$, we have
\[
p_{a,i} =  \Prx_{\by \sim \Del_\delta(a)}[\by_i=1]
= \sum_{j=0}^{n-1} {j \choose i} \rho^{i+1} \delta^{j-i} a_j,
\quad \text{and hence} \quad
p_{a,i} - p_{a',i} = \sum_{j=0}^{n-1} (a_j-a'_j){j \choose i} \rho^{i+1} \delta^{j-i}.
\]
Hence (following \cite{DOS17,NazarovPeres17}) we get
\begin{align}
  P_{a,a'}(z) &= 
\sum_{i=0}^{n-1} \left(\sum_{j=0}^{n-1} (a_j - a'_j) {j \choose i} \rho^{i+1} \delta^{j-i}\right)z^i 
=
\rho \sum_{j=0}^{n-1} (a_j - a'_j) \delta^j \sum_{i=0}^{n-1} {j \choose i} \left( {\frac {\rho z} \delta}\right)^i\nonumber \\
&=
\rho \sum_{j=0}^{n-1} (a_j - a'_j) w^j   \quad \text{(taking $w=1-\rho + \rho z$)} \label{eq:wz}
\end{align}
where the last line used the binomial theorem and $\delta=1-\rho$. Now, let us write the $t$-gappy strings $a,a'$ as
\begin{equation}
a := b_0 0^{t-1}b_1 0^{t-1} \cdots b_{n/t} 0^{t-1},
\quad \quad \quad \quad
a' := b'_0 0^{t-1}b'_1 0^{t-1} \cdots b'_{n/t} 0^{t-1}
\label{eq:aaprime}
\end{equation}
for some $b,b' \in \zo^{n/t}$. From \Cref{eq:wz} we get that
\begin{equation} \label{eq:delicious}
P_{a,a'}(z) =  \rho \cdot \sum_{j=0}^{n/t-1} (b_j - b'_j) w^{jt}
\end{equation}
(the structure afforded by \cref{eq:delicious} is another reason why $t$-gappy strings are useful for us).  
Since $0<\rho=1-\delta<1$ is a constant, recalling \cref{eq:a} our goal is to establish the existence of a string $0^{n/t} \neq v = (v_0,\dots,v_{n/t-1}) \in \{-1,0,1\}^{n/t}$ such that 
\begin{equation} \label{eq:v}
\max_{\theta \in (-\pi,\pi]} \left|\sum_{j=0}^{n/t-1} v_j \left((1-\rho + \rho e^{i \theta})^t\right)^j \right|
\end{equation}
is small.

As described in Theorem~6.2 of \cite{DOS17}, a result of Borwein and Erd\'elyi~\cite{BE97} 
(specifically, the first proof of Theorem~3.3 in the ``special case'' on p. 11 of \cite{BE97}) establishes the following:

\begin{theorem} [\cite{BE97}] \label{thm:BE}
There are universal constants $c_1, c_2, c_3 > 0$ such that the following holds: For all $0 < a \leq c_1$ there exists an integer $2 \leq k \leq c_2/a^2$ and a nonzero vector $u \in \{-1,0,1\}^{k+1}$ such that $\max_{w \in D_{6a}(1)} |\sum_{j=0}^{k} u_j w^j | \leq \exp(-c_3/a)$.
\end{theorem}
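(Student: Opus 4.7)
This statement is the theorem of Borwein and Erd\'elyi quoted here for downstream use, so my plan is to outline a self-contained proof strategy in the Littlewood-polynomial style, following the standard pigeonhole plus complex-analytic extension blueprint. The goal is to produce a nonzero polynomial $Q$ of degree at most $c_2/a^2$ with coefficients in $\{-1, 0, 1\}$ whose sup-norm on $D_{6a}(1)$ is at most $\exp(-c_3/a)$.

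The first step is pigeonhole on Littlewood polynomials. I would set $k := \lfloor c_2 / a^2 \rfloor$ and consider the family of $2^{k+1}$ polynomials $P_\epsilon(z) = \sum_{j=0}^{k} \epsilon_j z^j$ with $\epsilon \in \{0, 1\}^{k+1}$. Fix a collection of $N$ test points $w_1, \ldots, w_N$ on $\partial D_{6a}(1)$, equispaced in angle. Since $|w_s| \leq 1 + 6a$, each evaluation $P_\epsilon(w_s)$ lies in the complex disk of radius at most $(k+1) e^{6ak}$ about $0$. Partitioning this disk (in each of the $N$ coordinates) into cells of diameter $\eta$, the total number of ``evaluation signatures'' is at most $\bigl((k+1) e^{6ak}/\eta\bigr)^{2N}$; once $2^{k+1}$ exceeds this cell count, pigeonhole forces two distinct sign patterns $\epsilon \neq \epsilon'$ with $|P_\epsilon(w_s) - P_{\epsilon'}(w_s)| \leq \eta$ at every test point. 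The difference $Q := P_\epsilon - P_{\epsilon'}$ is then the desired candidate: a nonzero polynomial of degree at most $k$ with $\{-1, 0, 1\}$ coefficients that is $\eta$-small on the net.

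The second step is to convert smallness on the finite test set into smallness on the full disk $D_{6a}(1)$. For the equispaced nodes chosen above, a polynomial-interpolation bound of Bernstein--Walsh type controls $\max_{w \in D_{6a}(1)}|Q(w)|$ by a Lebesgue-constant factor (at most polynomial in $N$) times the net bound $\eta$. The main obstacle, and the technical heart of the Borwein--Erd\'elyi argument, is balancing two competing constraints: the pigeonhole step yields useful savings only when $N$ is much smaller than $1/a$, whereas classical interpolation bounds for degree-$k$ polynomials demand $N$ at least of order $k \sim 1/a^2$. Reconciling this tension is exactly what forces the bound to take the shape $\exp(-c_3/a)$ at degree at most $c_2/a^2$, and requires sharper tools such as Remez-type inequalities on subarcs of a circle in order to transfer control from the discrete net uniformly to the disk.
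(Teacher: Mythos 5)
The paper does not prove Theorem~\ref{thm:BE}; it quotes it from Borwein and Erd\'elyi~\cite{BE97} (the first proof of their Theorem~3.3, as pointed to via the exposition in~\cite{DOS17}), so there is no in-paper proof to compare against. Assessing your sketch on its own terms: the opening move is right — pigeonhole over the $2^{k+1}$ polynomials with $\{0,1\}$ coefficients so that the difference of two colliding ones is a nonzero $\{-1,0,1\}$-coefficient polynomial that is small — but the quantity you pigeonhole on (function values at a net of $N$ points on $\partial D_{6a}(1)$) creates a gap that you correctly flag and do not close. With $N \ll k \sim 1/a^2$, the values of a degree-$k$ polynomial at $N$ points do not determine it, so Bernstein--Walsh interpolation has no finite Lebesgue constant, and Remez or Coppersmith--Rivlin-type inequalities that transfer a sup bound from a set of $N$ discrete nodes lose factors exponential in the degree-to-node ratio $k/N \sim 1/a$, exactly cancelling the $\exp(-\Omega(1/a))$ gain from the pigeonhole. ``Sharper Remez tools'' is not the missing ingredient.

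The cited argument sidesteps the tension by pigeonholing on different data: the low-order coefficients of $P$ expanded about the center of the disk. Writing $P(1-y)=\sum_{i\ge 0} d_i y^i$, one has $|d_i|\le\binom{k+1}{i+1}$, and the structural fact that drives the whole proof is that $\binom{k+1}{i+1}(6a)^i$ decays geometrically once $i \gtrsim a(k+1) \sim 1/a$. Hence the tail $\sum_{i\ge m} d_i y^i$ is already $\exp(-\Omega(1/a))$ on $|y|\le 6a$ for $m=\Theta(1/a)$, and one only needs to pigeonhole the $m$ scaled quantities $d_0, d_1(6a),\ldots,d_{m-1}(6a)^{m-1}$ (each of magnitude $\le (1+6a)^{k+1}=e^{O(1/a)}$) into cells of width $\exp(-c'/a)$; the count $2^{k+1}$ of source polynomials exceeds the cell count precisely when $k\sim c_2/a^2$. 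The resulting difference $Q=P_\epsilon-P_{\epsilon'}$ then satisfies $|Q|\le\exp(-\Omega(1/a))$ on all of $D_{6a}(1)$ with no interpolation step at all. In short: same pigeonhole philosophy, but pigeonhole on Taylor data at one point rather than on function values at many points; the tool you were reaching for is not a Remez inequality but the elementary tail estimate forced by the $\{-1,0,1\}$ coefficient bound, which is the observation absent from your sketch.
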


Let $m = n^{1/5}/(\log n)^{2/5} = 1/a$, so $a = 1/m = (\log n)^{2/5}/n^{-1/5}.$
Recalling \cref{eq:t}, we have that $c_2/a^2 = c_2m^2 \ll n/(2t)$, so we get that there exists a vector $0^{n/(2t)} \neq u \in \{-1,0,1\}^{n/(2t)}$ such that
\begin{equation} \label{eq:boundy}
\max_{w \in D_{6/m}(1)} \left|
\sum_{j=0}^{n/(2t)-1} u_j w^j
\right| \leq \exp(-c_3 m).
\end{equation}
Routine geometry shows that if $|\theta| \leq {\frac 1 {mt}}$ then
$|1 - (1-\rho + \rho e^{i \theta})^t| \leq 6/m$, so we get that
\begin{equation} \label{eq:boundybound}
\max_{|\theta| \leq 1/(mt)} \left|
\sum_{j=0}^{n/(2t)-1} u_j \left((1-\rho + \rho e^{i \theta})^t\right)^j
\right| \leq \exp(-c_3 m).
\end{equation}



Now we can describe our final desired string $v \in \{-1,0,1\}^{n/t}$: it is obtained by padding $u$ with a prefix of $n/(2t)$ many zeros. We thus have
\begin{equation}
\label{eq:AB}
\eqref{eq:v} = \max_{\theta \in (-\pi,\pi]} \sum_{j=0}^{n/t - 1} v_j \left((1 - \rho + \rho e^{i\theta})^t\right)^j = 
\max_{\theta \in (-\pi,\pi]} \left(
\overbrace{(1 - \rho + \rho e^{i \theta})^{n/2}}^A \cdot
\overbrace{\sum_{j=0}^{n/(2t)-1} u_j \left( (1-\rho + \rho e^{i \theta})^t \right)^j}^B
\right).
\end{equation}
Since $|1-\rho + \rho e^{i \theta}| \leq 1$ for all $\theta \in (-\pi,\pi]$, we have that $|A|$ is always at most 1 and $|B|$ is always at most $n/(2t)$.  We bound \cref{eq:AB} by considering two possible ranges for $|\theta|$.
If $|\theta| \leq 1/(mt)$, then since $|A| \leq 1$, from \cref{eq:boundybound} we have that 
$
\eqref{eq:AB}
\leq 1 \cdot  |B| \leq \exp(-c_3 m).
$
On the other hand, if $|\theta|>1/(mt)$ then since $|B| \leq n/(2t)$ and $\rho$ is a constant between 0 and 1, we get that
$|1-\rho+\rho e^{i \theta}| \leq 1 - {\frac {c_\rho}{(mt)^2}}$, and hence
\[
\eqref{eq:AB}
\leq {\frac n {2t}} \cdot |A| \leq {\frac n {2t}} \cdot \left(1- \frac {c_\rho}{(mt)^2}\right)^{n/2}
\leq \exp(-c'_\rho n/(mt)^2)
\]
for two constants $c_\rho,c'_\rho>0$ that depend only on $\rho.$ 
Since $m = n^{1/5}/(\log n)^{2/5}$ and $n/(mt)^2 = \Theta(n^{1/5}/(\log n)^{2/5})$, for all $\theta \in (-\pi,\pi]$ we have that $\eqref{eq:v} \leq \exp(-\Omega(n^{1/5}/(\log n)^{2/5}))$,
so the proof of \Cref{lem:onebitclose} and hence of \Cref{thm:worstlower} is complete.

%
%
%
%
%
%

%
%
%
%
%



\section{Worst-case upper bounds} \label{sec:worst-upper}

In this section we will give a local SQ algorithm for worst-case trace reconstruction, proving \Cref{thm:informal-worstupper}.

\begin{theorem} [Worst-case upper bound]  \label{thm:worstupper}
Fix any constant deletion rate $0 < \delta < 1$.
There is a worst-case SQ trace reconstruction algorithm that makes only $(O(n^{1/5}\log^5 n))$-local queries with tolerance $\tau=2^{-O(n^{1/5}\log^5 n)}$.
\end{theorem}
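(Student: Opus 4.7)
The plan is to adapt Chase's $2^{\tilde{O}(n^{1/5})}$-trace worst-case trace reconstruction algorithm~\cite{Chase20} to the local SQ setting, using the complex-analytic framework of \cite{CDLSS20smoove} in order to obtain an argument that is uniform across all constant $\delta \in (0,1)$.

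By \Cref{lem:universal}, it suffices to construct an algorithm that, given the $n2^\ell$ estimates $\widehat{p}_{x,i,w}$ of $p_{x,i,w}$ for all $i \in [0{:}n{-}1]$ and $w \in \zo^\ell$, with $\ell = O(n^{1/5}\log^5 n)$ and with each estimate accurate to within $\tau_0/2^\ell = 2^{-O(n^{1/5}\log^5 n)}$, outputs $x$. The algorithm itself is the brute-force one: it returns any $x' \in \zo^n$ whose true statistics $\{p_{x',i,w}\}$ agree with the received estimates to within $2\tau_0/2^\ell$. Correctness reduces to the following robust identifiability claim: \emph{for any distinct $a \neq a' \in \zo^n$, there exist $i \in [0{:}n{-}1]$ and $w \in \zo^\ell$ such that}
\[
|p_{a,i,w} - p_{a',i,w}| \geq 2^{-O(n^{1/5}\log^5 n)}.
\]

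To establish this claim I would package the local subword statistics into a single deletion-channel-like polynomial whose coefficients encode the pairwise differences, in the spirit of \eqref{eq:dcp} but now aggregating across all $\ell$-bit subwords. Applying the same substitution $w = 1-\rho + \rho z$ as in \eqref{eq:wz}, one obtains a polynomial with $\{-1,0,1\}$-valued coefficients determined by the bitwise difference $a - a'$. I would then invoke the complex-analytic lemma that underlies Chase's upper bound: any nonzero degree-$n$ polynomial with coefficients in $\{-1,0,1\}$ has sup-norm at least $2^{-\tilde{O}(n^{1/5})}$ on a suitable arc of the unit circle centered at $1$. A standard Jacobi-style conversion between polynomial sup-norm lower bounds and $\ell_\infty$ lower bounds on the coefficient vector then produces some $(i,w)$ where the pointwise gap $|p_{a,i,w} - p_{a',i,w}|$ is as large as required.

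The main obstacle, and the reason for invoking \cite{CDLSS20smoove} rather than just \cite{Chase20,MS23}, is making the argument work uniformly in $\delta$. The natural route of passing through source-string $k$-mer densities via Mazooji--Shomorony~\cite{MS23} degrades badly as $\delta \to 1/2$ because the linear map from source $k$-mers to trace $k$-mers becomes ill-conditioned, and Chase's analysis is typically presented only in that regime. My plan is to bypass source-string $k$-mer statistics entirely, arguing directly about trace subword statistics via the contour-packaging framework from \cite{CDLSS20smoove} (originally developed for smoothed trace reconstruction), whose choice of contours in the $z$-plane remains well-conditioned for every $\rho \in (0,1)$. The extra $\log^5 n$ factor relative to Chase's $\tilde{O}(n^{1/5})$ trace bound is absorbed into the slack of this contour analysis and into the sup-norm-to-coefficient conversion, both of which incur additional polylogarithmic overhead when translated to the local SQ model.
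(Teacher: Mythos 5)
Your high-level plan (adapt Chase via the \cite{CDLSS20smoove} framework, reduce to subword-query estimates via \Cref{lem:universal}, then prove a robust identifiability statement) is the right shape, and your robust identifiability claim is in fact true, but the central analytic step you propose does not go through and would not yield the $n^{1/5}$ exponent.

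The concrete gap is in the sentence ``any nonzero degree-$n$ polynomial with coefficients in $\{-1,0,1\}$ has sup-norm at least $2^{-\tilde{O}(n^{1/5})}$ on a suitable arc of the unit circle centered at $1$.'' That statement is false as written: the Borwein--Erd\'elyi bound for such \emph{univariate} $\{-1,0,1\}$-coefficient polynomials gives only $2^{-\tilde{O}(n^{1/3})}$ on the relevant arc, and this is exactly what underlies the $n^{1/3}$ algorithms of \cite{DOS17,NazarovPeres17}. Chase's $n^{1/5}$ rate does \emph{not} come from a sharper sup-norm bound for the same class of univariate polynomials; it comes from enriching the object. Concretely, for each pattern $w \in \zo^\ell$ one forms a \emph{bivariate} polynomial $P_{x,w}(z,t)$ whose coefficient on $z^{j_1} t^{j_\ell - j_1 - (\ell-1)}$ is $\prod_k \Indicator{x_{i_k}=w_k}$ summed over $\ell$-tuples of source positions with given $j_1$ and span; the extra degree of freedom from the pattern $w$ and the span variable $t$ is what buys the improvement from $1/3$ to $1/5$. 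If you carry your sketch through with the univariate substitution $w = 1-\rho+\rho z$ applied to the bitwise difference $a-a'$, you will recover only the $n^{1/3}$ bound, so your stated $\ell = O(n^{1/5}\log^5 n)$ will not materialize.

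A second, related gap: you gesture at the conditioning problem for $\delta \geq 1/2$ and at ``contours from \cite{CDLSS20smoove},'' but you do not say what actually fixes it. The issue is that the natural trace-to-source change of variable sends a point on the unit circle to $(z-(1-\rho))/\rho$, which for $\delta \geq 1/2$ can have modulus $\geq 1$, so the high-degree coefficients of the relevant polynomial do not decay and cannot be discarded; you would then need $\Omega(n)$-local queries to estimate the evaluation, not $\tilde{O}(n^{1/5})$-local ones. The fix is not a choice of contour in the $z$-plane but the Borwein--Erd\'elyi--K\'os inequality (Lemma~\ref{lem:BEK}): applied in the $t$-variable (the span variable), it lets you move the evaluation point to a real $t_0 \in [1-\rho, 1-\tfrac34 \rho]$ while losing only $\exp(O_\rho(1))$ per exponent, and at such a $t_0$ one has $|(t_0-(1-\rho))/\rho| \leq 1/4$, so the degree-$d$ terms decay like $4^{-d}$ and truncation at degree $d_0 = \tilde{O}(n^{1/5})$ is justified. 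Without this step, or an explicit substitute, your locality bound has no support. Finally, a small but real point: the distinguishing quantity you end up with is a weighted sum of $\ell$-index substring statistics $\E[\prod_k \Indicator{\by_{j_k}=w_k}]$ over non-consecutive tuples, not directly a single trace subword probability $p_{a,i,w}$; translating the separation into a bound on some individual $p_{a,i,w}$ costs a $\poly(n,2^{d_0})$ factor, which is fine, but ``Jacobi-style coefficient extraction'' is not the mechanism, and that step needs to be made explicit rather than waved at.
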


\noindent {\bf Overview.}
As discussed in the introduction, \cite{CGLSZ23} showed that the state-of-the-art worst-case trace reconstruction algorithm of Chase~\cite{Chase20} can be interpreted as a $\tilde O(n^{1/5})$-mer based algorithm, and further observed that the work \cite{MS23} implicitly showed that for deletion rate $\delta < 1/2$, any $k$-mer based algorithm only relies on local statistics of random traces.
The same observation can also be inferred from the work \cite{CDLSS20smoove};  more generally, that work implicitly showed that for \emph{any} deletion rate $0<\delta<1$ (not just $\delta < 1/2$), Chase's algorithm can be interpreted as a local SQ algorithm.
We obtain \Cref{thm:worstupper} by making this interpretation explicit, without going through the notion of $k$-mer statistics.

In the case of $\delta < 1/2$, the observation in \cite{CDLSS20smoove,MS23} is the following.
Chase's algorithm is based on estimating (from below) a certain univariate polynomial $Q_x(z_0)$ 
at some point $z_0$ inside the shifted complex disc $D := \{\frac{z-\delta}{1-\delta}: \abs{z} \le 1\}$.
Moreover, the degree-$\ell$ coefficient of $Q_x$
can be estimated using $\ell$-local statistics.
When $\delta$ is bounded away from $1/2$, these works observed that the magnitude of the degree-$\ell$ term of $Q_x$ decays exponentially in $\ell$, and so the contribution from the high-degree terms is negligible and can be truncated from the evaluation.

In the case of $\delta \ge 1/2$, a point in $D$ can have magnitude $1$ or more, and so the high-degree terms in $Q_x$ need not decay in magnitude.
Instead of evaluating the polynomial on some point in $D$, \cite{CDLSS20smoove} applies a result by Borwein, Erd\'elyi, and K\'os~\cite{BEK99} (see \Cref{lem:BEK} below) which shows that there exists a value $t_0$ in the real interval $[\delta, \frac{1}{4} + \frac{3}{4}\delta]$ such that $Q_x(t_0)$ is almost as large as $Q_x(z_0)$, and as a result, we can estimate the truncation of $Q_x(t_0)$ instead.

\medskip

We now proceed to a detailed proof of \Cref{thm:worstupper}.
Let $\ell := 2n^{1/5}$.
Our $(O(n^{1/5}\log^5 n))$-local SQ algorithm in \Cref{thm:worstupper} is based on the following two lemmas.
(Throughout this section, it will be more convenient for us to phrase various quantities in terms of the retention rate $\rho = 1-\delta$.)
For a source string $x \in \zo^n$ and an $\ell$-bit pattern $w \in \zo^\ell$, let $P_{x,w}(z,t)$ be the following bivariate polynomial: 
\[
  P_{x,w}(z,t)
  := \sum_{0 \le i_1 < \cdots < i_\ell \le n-1} \prod_{k=1}^\ell \Indicator{x_{i_k} = w_k} z^{i_1}  \cdot t^{i_\ell - i_1 - (\ell-1)} .
\]

\begin{lemma} \label{lem:distinguish-src-poly}
  For every deletion rate $\delta \in (0,1)$, there is a constant $C_\rho$ such that the following holds.
  For every distinct pair of source strings $x,x' \in \zo^n$, there is a pattern $w \in \zo^\ell$, a point $z_0 \in \{ e^{i\theta}: \abs{\theta} \le n^{-2/5}\} \cup [1-\rho, 1-\frac{3}{4}\rho]$, and a real value $t_0 \in [1-\rho,1-\frac{3}{4}\rho]$, such that
  \[
    \abs{ P_{x,w}(z_0,t_0) - P_{x',w}(z_0,t_0) }
    \ge \exp\bigl(-C_\rho n^{1/5} \log^5 n\bigr) .
  \]
\end{lemma}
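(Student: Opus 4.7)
The plan is to view $\Delta_w(z,t) := P_{x,w}(z,t) - P_{x',w}(z,t)$ as a bivariate polynomial with integer coefficients bounded by $\binom{n}{\ell}$ and degrees at most $n$ in each of $z$ and $t$, and then to locate a joint evaluation $(z_0,t_0)$ in the required regions where $|\Delta_w(z_0,t_0)|$ is large by combining Chase's tight univariate polynomial analysis from \cite{Chase20} (to handle the $z$-direction when $\delta < 1/2$) with the BEK lemma (\Cref{lem:BEK}) from \cite{BEK99} (to handle the $z$-direction when $\delta \ge 1/2$, and the $t$-direction in all cases).

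First, I would establish the existence of a distinguishing pattern $w \in \zo^\ell$ for which $\Delta_w \not\equiv 0$. Since $x \ne x'$, there is some index $j$ with $x_j \ne x_j'$; for any $\ell$-tuple $(i_1,\ldots,i_\ell)$ that passes through $j$ together with a pattern $w$ matching the corresponding $\ell$ bits of (one of) $x$ or $x'$, the coefficient of the monomial $z^{i_1} t^{i_\ell - i_1 - (\ell-1)}$ in $\Delta_w$ is nonzero. Fix such a $w$.

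Second, I would reduce the bivariate problem to a univariate one in the $z$-direction. Writing $\Delta_w(z,t) = \sum_k C_k(z)\, t^k$, there is some $k^\ast$ for which $C_{k^\ast}$ is a nonzero polynomial in $z$ with integer coefficients of magnitude at most $\binom{n}{\ell}$. Apply the polynomial-extraction lemma at the heart of Chase's $2^{\tilde O(n^{1/5})}$-trace worst-case algorithm (a refinement of the Nazarov-Peres / DOS analysis \cite{DOS17,NazarovPeres17}) to produce $z_0$ on the short arc $\{e^{i\theta}:|\theta|\le n^{-2/5}\}$ when $\delta<1/2$, with $|C_{k^\ast}(z_0)| \ge \exp(-O(n^{1/5}\log^4 n))$. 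When $\delta \ge 1/2$, invoke \Cref{lem:BEK} on $C_{k^\ast}$ directly to obtain $z_0$ in the real interval $[1-\rho, 1-\tfrac{3}{4}\rho]$ at the same magnitude lower bound.

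Third, I would extract $t_0$ by a second application of \Cref{lem:BEK}. Set $q(t) := \Delta_w(z_0,t) = \sum_k C_k(z_0)\, t^k$; since $|z_0| \le 1$, every coefficient $|C_k(z_0)|$ is at most $\binom{n}{\ell}$, while $|C_{k^\ast}(z_0)| \ge \exp(-O(n^{1/5}\log^4 n))$ is ``large.'' Applying BEK to $q$ yields a real $t_0 \in [1-\rho,1-\tfrac{3}{4}\rho]$ with $|q(t_0)| \ge \exp(-C_\rho n^{1/5}\log^5 n)$, completing the proof.

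The main obstacle I expect is the second step: the extraction of $z_0$ requires Chase's genuinely new complex-analytic bound for polynomials with restricted coefficients (the $n^{1/5}$ improvement over the $n^{1/3}$ bound from \cite{DOS17,NazarovPeres17}), and this has to be coupled correctly with BEK so that the arc case and the real-interval case combine uniformly over $\delta \in (0,1)$. A secondary technical point is that in the third step, the coefficients $C_k(z_0)$ for $k \ne k^\ast$ may be as large as $\binom{n}{\ell}$; controlling this ratio inside the BEK extraction is what absorbs the extra $\log n$ factor, producing the final $\log^5 n$ polylog overhead on the $n^{1/5}$ exponent.
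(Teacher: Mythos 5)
Your high-level plan — two invocations of a Borwein–Erdélyi-type polynomial lower bound, one for the $z$-direction (via Chase's analysis or BEK) and one for the $t$-direction (via BEK) — is close in spirit to the paper's proof, but the two places where you deviate are exactly where the substantive content lives, and both deviations contain gaps.

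The first gap is in your Step 1. You claim that for \emph{any} $\ell$-tuple $(i_1,\dots,i_\ell)$ passing through a disagreement index $j$, with $w$ the corresponding bits of $x$, the coefficient of $z^{i_1} t^{i_\ell-i_1-(\ell-1)}$ in $\Delta_w$ is nonzero. That coefficient is a \emph{sum} over every tuple sharing the same $i_1$ and $i_\ell$, so cancellation is entirely possible; your argument only goes through for the consecutive tuple $(j,j+1,\dots,j+\ell-1)$ (which forces $i_\ell-i_1-(\ell-1)=0$, hence a unique contributing tuple). Even after that repair you have a nonzero coefficient of $z^j t^0$ — but BEK anchors at $f(0)$, i.e.\ at the coefficient of $z^0$, and when the first disagreement between $x$ and $x'$ occurs at $j\ge \ell$ the constant term of $\Delta_w(z,0)$ vanishes for every $w$. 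This is precisely why the paper splits on whether $x[0:\ell-1]=x'[0:\ell-1]$: in the agreeing case one genuinely needs Chase's \cite[Corollary~6.1]{Chase20}, which gives the distinguishing $w$, the arc point $z_0$, and the lower bound all at once; a coefficient-by-coefficient application of a univariate lemma to some $C_{k^\ast}(z)$ does not reproduce it.

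The second gap is your $\delta<1/2$ versus $\delta\ge 1/2$ split for choosing $z_0$. In the paper that $\delta$-split is relevant only to the truncation analysis in the \emph{algorithmic} part (\Cref{lem:SQ-alg-for-src-poly}, \Cref{claim:SQ-1-term}), not to the existence statement \Cref{lem:distinguish-src-poly}. In the paper's proof the domain for $z_0$ is the union of an arc and a real interval because Case~1 (disagreement in the first $\ell$ bits) produces a real $z_0\in[1-\rho,1-\tfrac34\rho]$ via two BEK applications, while Case~2 (agreement there) produces $z_0$ on the arc via Chase's corollary — for all $\delta$ simultaneously. Your last paragraph correctly flags this coupling as the hard point, but the resolution is the $x[0:\ell-1]$ case split, not a deletion-rate threshold.
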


\begin{lemma} \label{lem:SQ-alg-for-src-poly}
  For every deletion rate $\delta \in (0,1)$, there exists an SQ algorithm that makes $C_\rho n^{1/5} \log^5 n$-local queries with tolerance $\exp(-C_\rho n^{1/5} \log^5 n)$ such that for every $w \in \zo^\ell$, $z \in \{ e^{i\theta}: \abs{\theta} \le n^{-2/5}\} \cup [1-\rho, 1-\frac{3}{4}\rho]$, and $t \in [1-\rho,1-\frac{3}{4}\rho]$ it outputs an estimate $\wh{P_{x,w}}(z,t)$ of $P_{x,w}(z,t)$ that is accurate to within $\pm 0.1 \cdot \exp(-C_\rho n^{1/5} \log^5 n)$.
\end{lemma}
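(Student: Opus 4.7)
The plan is to realize $P_{x,w}(z_0, t_0)$ as a truncated linear combination of $L$-local statistical queries on $\Del_\delta(x)$ for $L = \Theta_\rho(n^{1/5}\log^5 n)$, via two reductions. First I pass from the bivariate polynomial to the generating function of subword frequencies under a simulated channel of rate $t_0 > \delta$; then I simulate that channel from the true channel $\Del_\delta(x)$ by a secondary deletion.

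For the first step, let $q^{(t)}_{x,i,w} := \Prx_{\by\sim\Del_t(x)}[(\by_i,\dots,\by_{i+\ell-1})=w]$. Decomposing the event by which bits of $x$ land at positions $i,\dots,i+\ell-1$ of $\by$ and applying the binomial theorem to the ``prefix'' sum over how many retentions occur before the window yields the identity
\[
  (1-t)^\ell\, P_{x,w}\!\left(z(1-t)+t,\,t\right) \;=\; \sum_{i\ge 0} z^i\, q^{(t)}_{x,i,w}.
\]
Setting $t := t_0$ and $z := (z_0-t_0)/(1-t_0)$ produces the desired evaluation. For the real interval case $z_0 \in [1-\rho, 1-\tfrac{3}{4}\rho]$, elementary geometry gives $|z| \le 1/3$; for the unit-circle case $z_0 = e^{i\theta}$ with $|\theta|\le n^{-2/5}$ it gives $|z| \le 1 + O_\rho(n^{-4/5})$, hence $|z|^i \le \exp(O_\rho(n^{1/5}))$ uniformly over $i \le n$. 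The prefactor $(1-t)^{-\ell}$ contributes an additional factor of $\exp(O_\rho(\ell))$.

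For the second step, since $t_0 > \delta$, I couple $\Del_{t_0}$ to $\Del_\delta$ by independently deleting each bit of $\by \sim \Del_\delta(x)$ with probability $p := (t_0-\delta)/\rho \le 1/4$, and expand over which positions $j_1 < \cdots < j_\ell$ of $\by$ survive to become positions $i,\dots,i+\ell-1$ of the further-deleted trace. Applying the binomial theorem once more yields
\[
  q^{(t_0)}_{x,i,w} \;=\; \sum_{j_1<\cdots<j_\ell} \binom{j_1}{i}(1-p)^{i+\ell} p^{j_\ell - i - (\ell-1)} \cdot \Ex_{\by\sim\Del_\delta(x)}\!\left[\prod_{k=1}^{\ell}\Indicator{\by_{j_k}=w_k}\right].
\]
Each inner expectation is the value of a $\zo$-valued $L$-local query on $\Del_\delta(x)$ whenever $j_\ell - j_1 < L$, which via \Cref{lem:universal} can be recovered from $L$-local subword SQ responses. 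I truncate the outer sum to $j_\ell - j_1 < L$; the truncated tail is bounded by $\sum_{s \ge L-\ell+1}\binom{s+\ell-2}{\ell-2}p^s \le \sum_{s \ge L-\ell+1}(s+\ell)^\ell 4^{-s} = \exp(-\Omega_\rho(L))$ for $\ell = 2n^{1/5}$ and $p \le 1/4$. With $L := C_\rho n^{1/5}\log^5 n$ (and $C_\rho$ large enough), the per-$(i,w)$ truncation error is at most $\exp(-\Omega(n^{1/5}\log^5 n))$.

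Finally I assemble $\widehat{P_{x,w}}(z_0,t_0) := (1-t_0)^{-\ell}\sum_{i \le n} z^i\, \widehat{q}^{(t_0)}_{x,i,w}$, where each $\widehat{q}^{(t_0)}_{x,i,w}$ is formed from the truncated expansion above with each expectation replaced by the corresponding $L$-local subword-query response. Setting the subword tolerance to $\tau_0 = \exp(-\Theta_\rho(n^{1/5}\log^5 n))$ and invoking \Cref{lem:universal} (which costs a factor of $2^L = \exp(O(n^{1/5}\log^5 n))$ in tolerance) keeps all the error sources below the $0.1 \cdot \exp(-C_\rho n^{1/5}\log^5 n)$ target. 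The main technical obstacle is error accounting: in the unit-circle regime every per-term error is multiplied by $|z|^i = \exp(O(n^{1/5}))$, the outer sum contains up to $n$ terms, and the subword reduction further degrades tolerance by $2^L$; balancing all of these simultaneously against the truncation tail $\exp(-\Omega_\rho(L))$ is what forces $L$ to the $\Theta(n^{1/5}\log^5 n)$ scale and $\tau_0$ to $\exp(-\Theta(n^{1/5}\log^5 n))$.
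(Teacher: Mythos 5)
Your proposal is correct and, once its two stages are composed, yields the same expansion, the same truncation, and the same error budget as the paper's proof, so the argument is essentially the paper's in a slightly re-derived form. Where the paper invokes the known bivariate identity (Fact~\ref{fact:mobius}, specialized in Corollary~\ref{cor:src-trace-id}) to write
$P_{x,w}(z_0,t_0) = \rho^{-\ell}\sum_{j_1<\cdots<j_\ell}\Ex_{\by\sim\Del_\delta(x)}\bigl[\prod_k\Indicator{\by_{j_k}=w_k}\bigr]\bigl(\tfrac{z_0-(1-\rho)}{\rho}\bigr)^{j_1}\bigl(\tfrac{t_0-(1-\rho)}{\rho}\bigr)^{j_\ell-j_1-(\ell-1)}$
directly, you derive the same kernel in two probabilistic steps: first relating $P_{x,w}(z_0,t_0)$ to the subword statistics $q^{(t_0)}_{x,i,w}$ of the fictional channel $\Del_{t_0}(x)$, and then simulating $\Del_{t_0}$ from $\Del_\delta$ by a secondary deletion at rate $p=(t_0-\delta)/\rho$. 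Summing your $i$-variable out and substituting $z=(z_0-t_0)/(1-t_0)$ gives back exactly the paper's expression, since $z(1-p)+p=\tfrac{z_0-(1-\rho)}{\rho}$, $p=\tfrac{t_0-(1-\rho)}{\rho}$, and $(1-t_0)^{-\ell}(1-p)^\ell=\rho^{-\ell}$. Your channel-coupling derivation is a bit more self-contained and makes the role of $p\le 1/4$ (hence the geometric tail) transparent; the paper's route is shorter because it quotes the identity from prior work. The truncation threshold at window width $\Theta_\rho(n^{1/5}\log^5 n)$, the reduction of each $\ell$-junta over a short window to length-$L$ subword queries (with the attendant $2^{L}$ tolerance degradation), and the final accounting balancing the $\exp(O_\rho(n^{1/5}))$ growth of $|z|^{i}$ against the $\exp(-\Omega_\rho(L))$ tail are all the same as in the paper's Claim~\ref{claim:SQ-1-term} and the algorithm that follows it.
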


\paragraph{Our $\ell$-local SQ algorithm (Proof of \Cref{thm:worstupper} assuming \Cref{lem:distinguish-src-poly,lem:SQ-alg-for-src-poly}).}
Given an unknown source string $x \in \zo^n$, our reconstruction algorithm enumerates every pair of distinct strings $x_1 \neq x_2 \in \zo^n$. For each such pair, it considers the triple $(w, z_0, t_0)$ for that pair whose existence is given by \Cref{lem:distinguish-src-poly}.
(Hence there are at most $2^{2n}$ many such triples $(w,z_0,t_0)$ considered in total.)
Then it uses the SQ algorithm in \Cref{lem:SQ-alg-for-src-poly} to obtain an accurate estimate $\wh{P_{x,w}}(z_0,t_0)$ of $P_{x,w}(z_0,t_0)$ for each $w$ within an additive factor of $\pm 0.1 \cdot \exp(-C_\rho n^{1/5} \log^5 n)$, and outputs the $x'$ such that $\wh{P_{x,w}}(z_0,t_0)$ and $P_{x',w}(z_0,t_0)$ are $\pm 0.5 \cdot \exp(-C_\rho n^{1/5} \log^5 n)$-close to each other for every $w, z_0, t_0$.
The correctness follows immediately from \Cref{lem:distinguish-src-poly}, because if $x' \ne x$, then by that lemma there is some $(w,z_0,t_0)$ such that by the triangle inequality we have 
\begin{align*}
  \abs{\wh{P_{x,w}}(z_0,t_0) - P_{x',w}(z_0,t_0)}
  &\ge \abs{P_{x,w}(z_0,t_0) - P_{x',w}(z_0,t_0)} - \abs{\wh{P_{x,w}}(z_0,t_0) - P_{x,w}(z_0,t_0)} \\
  &\ge 0.9 \cdot \exp(-C_\rho n^{1/5} \log^5 n) . 
\end{align*}
\subsection{Proof of \Cref{lem:distinguish-src-poly}}

In this subsection we prove \Cref{lem:distinguish-src-poly}.
We first recall the following result from \cite{BEK99}.

\begin{lemma}[Theorem 5.1 in \cite{BEK99}] \label{lem:BEK}
  There are constants $c_1, c_2 > 0$ such that for every analytic function $f$ on the open unit disc $\{z: \abs{z} < 1\}$ with $\abs{f(z)} < \frac{1}{1-\abs{z}}$ and every $a \in (0,1]$, we have
    \[
      \abs{f(0)}^{\frac{c_1}{a}}
      \le \exp(c_2/a) \sup_{t \in [1-a,1-\frac{3}{4}a]} \abs{f(t)} .
    \]
\end{lemma}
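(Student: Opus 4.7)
The plan is to adapt the reduction-and-transfer framework underlying Chase's algorithm and \cite{CDLSS20smoove}: I would first show that for some pattern $w$ the bivariate difference polynomial $Q_w(z,t) := P_{x,w}(z,t) - P_{x',w}(z,t)$ is not identically zero, then reduce the resulting bivariate lower-bound problem to a univariate polynomial of the form used by Chase, and finally invoke the BEK lemma (\Cref{lem:BEK}) to locate a suitable $t_0 \in [1-\rho, 1-\tfrac{3}{4}\rho]$ at which the desired evaluation lower bound holds.

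Finding such a $w$ is the easy first step. The coefficient of $t^0$ in $P_{x,w}(z,t)$ forces $i_k = i_1 + k - 1$ for each $k$ (all positions consecutive), so as a polynomial in $z$ this coefficient equals $\sum_{i : (x_i, \ldots, x_{i+\ell-1}) = w} z^i$, the positional occurrence polynomial of $w$ as a consecutive $\ell$-subword of $x$. Choosing $w$ to be an $\ell$-subword of $x$ starting at (or containing) a position where $x$ and $x'$ disagree makes the $t^0$-coefficients of $P_{x,w}$ and $P_{x',w}$ differ as polynomials in $z$, and hence $Q_w \not\equiv 0$.

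Next, I would identify $Q_w(z,t)$ (up to a linear change of variables) with the Chase polynomial $Q_x(z)$ parameterized by $t$, so that a Chase-style arc-to-disc argument applies. Fixing $t_0 \in [1-\rho, 1-\tfrac{3}{4}\rho]$ with $Q_w(\cdot,t_0) \not\equiv 0$, the Borwein--Erd\'elyi argument from \cite{Chase20,CDLSS20smoove} would produce a point $z_0$ in either the tiny arc $\{e^{i\theta} : |\theta| \le n^{-2/5}\}$ or the real interval $[1-\rho, 1-\tfrac{3}{4}\rho]$ at which $|Q_w(z_0, t_0)| \ge \exp(-O(n^{1/5}\log^5 n))$. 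The choice $\ell = 2n^{1/5}$ is tuned so that the polynomial degree bound $n$ and coefficient magnitudes $n^{O(\ell)}$ yield precisely this exponent. To guarantee that such a $t_0$ exists with a quantitative lower bound, I would apply BEK in the $t$-variable to the univariate function $t \mapsto Q_w(z_0, t)$ (after a suitable rescaling meeting the growth hypothesis $|f(z)| < 1/(1-|z|)$): with $a = \rho$, BEK transfers a nonzero evaluation of this function from near $t = 0$ to some $t_0 \in [1-\rho, 1-\tfrac{3}{4}\rho]$, losing only an $\exp(O(1/\rho))$ factor that can be absorbed into $C_\rho$.

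The main obstacle is carrying out the arc-to-disc reduction with exactly the exponent $n^{1/5}\log^5 n$. Each of the polynomial-extremal estimates in use (Borwein--Erd\'elyi for the arc and BEK for the real interval) introduces logarithmic losses depending on the polynomial degree, the arc radius, and the coefficient magnitudes, and matching the Chase-polynomial setup requires a change of variables that shifts the natural disc; tracking these losses carefully to arrive at exactly the fifth power of $\log n$ — rather than some other polylogarithmic factor — will be the most delicate computation, and depends on the specific choices $\ell = 2n^{1/5}$ and arc radius $n^{-2/5}$.
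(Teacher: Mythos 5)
There is a fundamental mismatch here: the statement you were asked to prove is \Cref{lem:BEK} itself, i.e.\ the Borwein--Erd\'elyi--K\'os inequality --- a pure complex-analysis fact asserting that for any analytic $f$ on the open unit disc with $\abs{f(z)} < \frac{1}{1-\abs{z}}$ one has $\abs{f(0)}^{c_1/a} \le \exp(c_2/a)\sup_{t\in[1-a,1-\frac{3}{4}a]}\abs{f(t)}$. Your proposal never engages with this statement. What you sketch is a proof of \Cref{lem:distinguish-src-poly} (existence of a pattern $w$ and a point $(z_0,t_0)$ separating two source strings $x \neq x'$), and in carrying it out you explicitly ``invoke the BEK lemma (\Cref{lem:BEK})'' as a black box. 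With respect to the target statement this is circular: you assume exactly the inequality you were asked to establish, and nothing in your write-up supplies a mechanism for transferring a lower bound on $\abs{f(0)}$ to a lower bound on the supremum of $\abs{f}$ over the real interval $[1-a,1-\tfrac{3}{4}a]$ for a general analytic function obeying the growth hypothesis. The discussion of choosing $w$ via the $t^0$-coefficient, the arc $\{e^{i\theta}:\abs{\theta}\le n^{-2/5}\}$, the choice $\ell = 2n^{1/5}$, and the bookkeeping of $\log^5 n$ factors are all about the trace-reconstruction application and are irrelevant to \Cref{lem:BEK}, which involves no source strings, traces, or parameters $n$, $\ell$, $\rho$ at all.

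For comparison, the paper does not reprove the lemma either: it quotes Theorem~5.1 of \cite{BEK99}, notes that the interval stated there is $[1-a,1]$ but that an inspection of their argument (their Corollary~5.2 feeding into Corollary~5.3) shows it can be restricted to $[1-a,1-\tfrac{3}{4}a]$, and points to a self-contained proof of this refined form in \cite[Theorem~9]{CDLSS20smoove}. A genuine proof attempt for this statement would have to reproduce that analytic argument --- Chebyshev/Remez-type extremal estimates for the class of functions with $\abs{f(z)} < \frac{1}{1-\abs{z}}$, or the Hadamard-three-circles-style machinery of \cite{BEK99} --- or at minimum verify the interval refinement quantitatively. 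None of that appears in your proposal, so the gap is not a fixable technical step but the absence of any argument for the claimed inequality.
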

Note that polynomials with coefficients bounded by $1$ are clearly analytic and satisfy the condition that $\abs{f(z)} < \frac{1}{1-\abs{z}}$ on the open unit disc $\{z: \abs{z} < 1\}$.

We note that in the actual statement in \cite[Theorem~5.1]{BEK99}, the interval containing $t$ is $[1-a,1]$.
However, a close inspection of the proof reveals that the interval can be restricted to be $[1-a, 1-\frac{3}{4}a]$.
Specifically, their Theorem~5.1 is based on their Corollary~5.3, which in turn is based on their Corollary~5.2, where the interval is taken to be $[1-a, 1-a+\frac{1}{4}a]$.
A self-contained proof using essentially the same argument can also be found in \cite[Theorem 9]{CDLSS20smoove}.

We further note that the difference between $[1-a,1]$ and $[1-\frac{3}{4}a]$ is crucial in showing that the contribution of the high-degree terms of the relevant polynomial (\Cref{eq:trace-poly}) is negligible.
Had $t$ been $1$, then $\frac{t - (1-\rho)}{\rho} = 1$ and there would have been no exponential decay in the high-degree terms.

\medskip

\Cref{lem:distinguish-src-poly} follows from two cases below.

\paragraph{Case 1:  $x_i \ne x'_i$ for some $0 \le i \le \ell-1$.}
In this case, we consider the $\ell$-bit pattern $w := x[0:\ell-1]$.
Note that $P_{x,w}(0,0) - P_{x',w}(0,0) = \Indicator{x[0:\ell-1] = w} - \Indicator{x'[0:\ell-1] = w} = 1$.
We now apply \Cref{lem:BEK} twice. The first application is to the polynomial $Q_1(z_1) := P_{x,w}(z_1,0) - P_{x',w}(z_1,0)$, which implies that there exists some $z_0 \in [1-\rho,1-\frac{3}{4}\rho]$ such that
\[
  \abs{Q_1(z_0)}
  \ge e^{-c_2/\rho} \abs{Q_1(0)}^{c_1/\rho}
  = e^{-c_2/\rho} \abs{P_{x,w}(0,0) - P_{x',w}(0,0)}^{c_1/\rho}
  = e^{-c_2/\rho} .
\]
We now apply \Cref{lem:BEK} again to the polynomial 
\[
  Q_2(z_2) := \frac{P_{x,w}(z_0,z_2) - P_{x',w}(z_0,z_2)}{\binom{n}{\ell}} .
\]
Note that all coefficients in $Q_2$ have magnitude at most $1$. 
This implies the existence of some $t_0 \in [1-\rho,1-\frac{3}{4}\rho]$ such that 
\begin{align*}
  \abs{P_{x,w}(z_0,t_0) - P_{x',w}(z_0,t_0)}
  = \binom{n}{\ell} \abs{Q_2(t_0)}
  &\ge \binom{n}{\ell} e^{-c_2/\rho} \abs{Q_2(0)}^{c_1/\rho}
  = \binom{n}{\ell} e^{-c_2/\rho} \left({\frac {\abs{Q_1(z_0)}}{{n \choose \ell}}}\right)^{c_1/\rho} \\
  &\ge \frac {e^{-\frac{c_2}{\rho} - \frac{c_1 c_2}{\rho^2}}}{{n \choose \ell}^{\frac{c_1}{\rho} - 1}}
  \ge e^{-\Omega_\rho(\ell \log n)} = e^{-\Omega_\rho(n^{1/5} \log n)} ,
\end{align*}
where the last inequality used $\binom{n}{\ell} \ge (n/\ell)^\ell$, and the last equality follows from our choice of $\ell = 2n^{1/5}$.
To conclude, there exists some $(z_0,t_0) \in [1-\rho,1-\frac{3}{4}\rho]^2$ such that 
$\abs{P_{x,w}(z_0,t_0) - P_{x',w}(z_0,t_0)} \ge \Omega_\rho({n \choose \ell}^{-c_1/\rho})$.

\paragraph{Case 2:  $x_i = x'_i$ for all $0 \le i \le \ell-1$.}
%
For this case, \cite[Corollary~6.1]{Chase20} (with the interval $[1-2\rho,1]$ replaced with $[1-\rho,1-\frac{3}{4}\rho]$) can be restated, using \Cref{lem:BEK} in a similar fashion as Case 1, as follows: 

\begin{lemma}[Corollary 6.1 in \cite{Chase20}, slightly rephrased and refined]
  For every $\rho > 0$, there exists a constant $C_\rho$ such that the following holds.
  Let $\ell= 2n^{1/5}$.
  For every distinct $x, x' \in \zo^n$ where $x_i = x'_i$ for every $0 \le i < \ell-1$, there exists a pattern $w \in \zo^\ell$, a $z_0 = e^{i\theta}$ for some $\theta \in [-n^{-2/5},n^{-2/5}]$ and a $t_0 \in [1-\rho,1-\frac{3}{4}\rho]$ such that
  \[
  \abs{ \sum_{0 \le i_1 < \cdots < i_\ell \le n-1} \Bigl(\tprod_{k=1}^\ell \Indicator{x_{i_k} = w_k} - \tprod_{k=1}^\ell \Indicator{x'_{i_k} = w_k}\Bigr) z_0^{i_1}  \cdot t_0^{i_\ell - i_1 - (\ell-1)} }
    \ge \exp\bigl(-C_\rho n^{1/5} \log^5 n\bigr) .
  \]
\end{lemma}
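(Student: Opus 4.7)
The plan is to follow Chase's original proof of Corollary~6.1 in \cite{Chase20} almost verbatim, modifying only the single step where the Borwein--Erd\'elyi--K\'os lemma is applied: in place of the original BEK statement (which produces a point $t_0$ in an interval of the form $[1-a,1]$), we invoke our slightly refined \Cref{lem:BEK}, whose conclusion already restricts $t_0$ to the strictly smaller interval $[1-\rho, 1-\frac{3}{4}\rho]$ appearing in the lemma statement.

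Concretely, the first step is to reduce the bivariate problem to a univariate one by specialising $t=0$.  When $t=0$ the factor $t^{i_\ell - i_1 - (\ell-1)}$ vanishes unless $i_\ell = i_1 + \ell-1$, i.e.\ unless the indices $i_1 < \cdots < i_\ell$ are consecutive; so the bivariate expression at $t=0$ collapses to the ordinary length-$\ell$ subword polynomial
\[
Q(z) := \sum_{0 \le i \le n-\ell}\bigl(\Indicator{x[i:i+\ell-1]=w} - \Indicator{x'[i:i+\ell-1]=w}\bigr) z^i.
\]
Chase's Corollary~6.1, applied in the Case~2 regime where $x$ and $x'$ agree on the first $\ell-1$ positions, establishes that one can pick $w \in \zo^\ell$ and $z_0 = e^{i\theta}$ with $|\theta| \le n^{-2/5}$ such that $|Q(z_0)| \ge \exp(-C'_\rho n^{1/5}\log^5 n)$ for some constant $C'_\rho$ depending only on $\rho$.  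This is the nontrivial analytic content of Chase's argument, which I would invoke as a black box: its proof relies on a delicate construction of Littlewood-type polynomials and on complex-analytic estimates that we have no intention of redoing.

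With $(w,z_0)$ now fixed, the second step is to form the univariate polynomial $R(t)$ obtained by substituting these values into the bivariate expression of the lemma statement, and to observe that $R(0) = Q(z_0)$.  Normalising by $\binom{n}{\ell}$ yields a polynomial $\widetilde{R} := R/\binom{n}{\ell}$ all of whose coefficients have magnitude at most $1$; in particular $\widetilde{R}$ is analytic on the open unit disc and satisfies $|\widetilde{R}(z)| \le 1/(1-|z|)$ there, so the hypotheses of \Cref{lem:BEK} hold with $a = \rho$.  The conclusion then produces some $t_0 \in [1-\rho, 1-\frac{3}{4}\rho]$ with $|\widetilde{R}(t_0)| \ge \exp(-c_2/\rho) \cdot |\widetilde{R}(0)|^{c_1/\rho}$; unwinding the normalisation and using $\log\binom{n}{\ell} = O(n^{1/5}\log n)$ (which follows from $\ell = 2n^{1/5}$) gives $|R(t_0)| \ge \exp(-C_\rho n^{1/5}\log^5 n)$ for an appropriately chosen constant $C_\rho$ depending only on $\rho$, which is exactly the advertised bound.

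The main obstacle is that the proof is not genuinely self-contained: the heavy lifting---producing the pair $(w, z_0)$ together with the lower bound on $|Q(z_0)|$---sits entirely inside Chase's argument and we use it as a black box.  Our own contribution consists only of the bookkeeping that threads the refined \Cref{lem:BEK} through Chase's framework in place of the original BEK application; the strengthening of the interval from $[1-\rho, 1]$ down to $[1-\rho, 1-\frac{3}{4}\rho]$ is immediate from \Cref{lem:BEK} and costs nothing beyond the constants already tracked there.
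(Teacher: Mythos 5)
Your proposal is correct and follows exactly the route the paper intends: the paper itself does not write out this proof, stating only that Chase's Corollary~6.1 ``can be restated, using \Cref{lem:BEK} in a similar fashion as Case~1,'' and your write-up is a faithful unpacking of that terse remark. You specialize $t=0$ to collapse the bivariate expression to Chase's univariate $\ell$-subword occurrence polynomial $Q(z)$, invoke Chase's analytic machinery as a black box for the $|Q(z_0)| \ge \exp(-\tilde{O}(n^{1/5}))$ lower bound at some $z_0 = e^{i\theta}$ with $|\theta| \le n^{-2/5}$, normalize by $\binom{n}{\ell}$ (correctly observing the coefficients then lie in the unit disc, so BEK's hypotheses hold), apply the refined BEK lemma with $a=\rho$ to land $t_0 \in [1-\rho, 1-\tfrac{3}{4}\rho]$, and absorb the $\log\binom{n}{\ell} = O(n^{1/5}\log n)$ loss into the constant $C_\rho$ — all of which mirrors the Case~1 template the paper explicitly spells out.

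Two small bookkeeping quibbles, neither affecting correctness: (1) the intermediate univariate bound you attribute to ``Chase's Corollary~6.1'' is really a precursor result inside Chase's proof of Corollary~6.1 (Corollary~6.1 itself is already the bivariate statement with $t_0 \in [1-2\rho,1]$); you want to cite the underlying univariate theorem, not Corollary~6.1, as the black box. (2) In your closing paragraph you describe the original Chase interval as $[1-\rho,1]$, but the paper records it as $[1-2\rho,1]$; this is immaterial since in either case the refinement comes entirely from substituting \Cref{lem:BEK} with $a=\rho$ in place of the original BEK application.
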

Combining the two cases proves \Cref{lem:distinguish-src-poly}. 

\subsection{Proof of \Cref{lem:SQ-alg-for-src-poly}}

We now prove \Cref{lem:SQ-alg-for-src-poly}.
We first state the following identity relating two multivariate polynomials, each of which is defined in terms of an arbitrary $f: \zo^\ell \to \C$. One of these involves the evaluation of $f$ on the $\ell$-bit (not necessarily consecutive) substrings of the source string $x$, and the other involves the expectation of $f$ evaluated on the $\ell$-bit substrings of a random trace $\by \sim \Del_{\delta}(x)$.
This identity has now appeared in several places such as \cite{CDLSS20smoove,Chase20} (see \cite[Section~5.2]{CDLSS20smoove} for a proof).

\begin{fact} \label{fact:mobius}
  For every $f\colon\zo^\ell \to \C$, $x \in \zo^n$, $\rho \in [0,1]$, and  $z \in \C^\ell$,
  \begin{align*}
    \MoveEqLeft
    \rho^\ell \sum_{0 \le i_1 < \cdots < i_\ell \le n-1} f(x_{i_1}, \ldots, x_{i_\ell}) \bigl((1-\rho) + \rho z_1\bigr)^{i_1} \prod_{k=2}^\ell \bigl((1-\rho) + \rho z_k\bigr)^{i_k - i_{k-1} - 1} \\
    &= \sum_{0 \le j_1 < \cdots < j_\ell \le n-1} \Ex_{\by \sim \Del_{1-\rho}(x)}\bigl[f(\by_{j_1}, \ldots, \by_{j_\ell})\bigr]  z_1^{j_1} \prod_{k=2}^\ell z_k^{j_k - j_{k-1} - 1} .
  \end{align*}
\end{fact}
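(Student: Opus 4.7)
The plan is to start from the right-hand side, make the randomness of the deletion channel explicit, and collapse the resulting sums via the binomial theorem. I would first condition on which source positions produce the trace bits at positions $j_1 < \cdots < j_\ell$. If those trace positions are sourced from $i_1 < \cdots < i_\ell$, then $(\by_{j_1},\ldots,\by_{j_\ell}) = (x_{i_1},\ldots,x_{i_\ell})$, and the probability of this correspondence event factors across the $\ell$ intervals carved out by $i_1,\ldots,i_\ell$: among source positions $\{0,\ldots,i_1-1\}$ exactly $j_1$ are retained while position $i_1$ itself is retained, and for each $k \ge 2$ exactly $j_k - j_{k-1} - 1$ of the $i_k - i_{k-1} - 1$ intermediate positions are retained while position $i_k$ is retained. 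Whatever happens beyond position $i_\ell$ is irrelevant to the event. Writing out the resulting binomial factors and pulling out a $\rho^\ell$ for the $\ell$ retained anchor positions yields an explicit closed form for $\Ex_{\by}[f(\by_{j_1},\ldots,\by_{j_\ell})]$ as a sum over $(i_1,\ldots,i_\ell)$.

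Next, I would multiply this expression by $z_1^{j_1} \prod_{k \ge 2} z_k^{j_k - j_{k-1} - 1}$ and sum over $j_1 < \cdots < j_\ell$, then use Fubini to move the source-tuple sum to the outside. Setting $m_1 := j_1$ and $m_k := j_k - j_{k-1} - 1$ for $k \ge 2$ decouples the inner sum across the $\ell$ blocks, since the $k$-th block only involves $m_k$. Each block then contributes a sum of the form $\sum_{m=0}^{N_k} \binom{N_k}{m}(\rho z_k)^m (1-\rho)^{N_k - m}$, where $N_1 = i_1$ and $N_k = i_k - i_{k-1} - 1$ for $k \ge 2$; by the binomial theorem this equals $((1-\rho) + \rho z_k)^{N_k}$. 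Multiplying together the $\ell$ resulting factors, together with the $\rho^\ell$ prefactor and the sum over $(i_1,\ldots,i_\ell)$, produces exactly the left-hand side.

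The only point requiring care is bookkeeping around the ranges of summation and the trace length: one must verify that after the swap, each $m_k$ genuinely ranges over all of $\{0,1,\ldots,N_k\}$, and that the constraint $j_\ell < |\by|$ does not restrict things (it is automatically handled because the per-position deletions are independent, so retentions beyond $i_\ell$ integrate out to $1$). Beyond this bookkeeping no step presents a substantive obstacle; the identity is, at its heart, a direct application of the binomial theorem $\ell$ times, block by block.
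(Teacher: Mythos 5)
The paper cites Section~5.2 of \cite{CDLSS20smoove} for this fact and gives no internal proof, so there is no in-paper argument to compare against; your derivation is the standard one and it is correct. Conditioning on which retained source positions $(i_1,\ldots,i_\ell)$ land at trace positions $(j_1,\ldots,j_\ell)$, swapping the two sums, and collapsing each of the $\ell$ blocks via the binomial theorem is exactly the right computation, and, as you note, once $i_\ell \le n-1$ is fixed the upper bound $j_\ell \le n-1$ is vacuous so each $m_k$ ranges freely over $\{0,\ldots,N_k\}$. One point worth stating explicitly rather than leaving implicit: the identity as written requires the convention that tuples $j$ with $j_\ell \ge |\by|$ contribute zero to the expectation on the right-hand side. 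Your conditioning step silently adopts this (every tuple you count has $i_\ell$ retained, hence $|\by| > j_\ell$), which is what makes the bookkeeping close up. But under a literal zero-padding of $\by$ applied to a general $f$, those out-of-range tuples would contribute evaluations of $f$ at padded zeros and the identity would acquire extra terms; for instance with $\ell = n = 1$ and $z_1 = 1$ the left side is $\rho\, f(x_0)$ while the zero-padded right side is $\rho\, f(x_0) + (1-\rho) f(0)$. So the convention is not cosmetic and deserves a sentence in a written-out proof, but your argument is otherwise complete.
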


Letting $f(u_1, \ldots, u_\ell)$ be the indicator function $\Indicator{u = w}$ for some pattern $w \in \zo^\ell$, then performing a simple change of variable $z_i \mapsto \frac{z_i - (1-\rho)}{\rho}$, and then identifying the variables $z_3, \ldots, z_\ell$ with the variable $z_2$, we obtain the following corollary.
\begin{corollary} \label{cor:src-trace-id}
  For every $\rho \in (0,1]$, $x \in \zo^n$, $w \in \zo^\ell$, and $(z_1,z_2) \in \C^2$,
  \begin{align} 
    \MoveEqLeft
    P_{x,w}(z_1,z_2) \nonumber \\
    &= \rho^{-\ell} \sum_{0 \le j_1 < \cdots < j_\ell \le n-1} \Ex_{\by \sim \Del_{1-\rho}(x)} \Bigl[\tprod_{k=1}^\ell \Indicator{\by_{j_k} = w_k} \Bigr] \Bigl(\frac{z_1 - (1-\rho)}{\rho}\Bigr)^{j_1}  \Bigl(\frac{z_2 - (1-\rho)}{\rho}\Bigr)^{j_\ell - j_1 - (\ell-1)} . \label{eq:trace-poly}
  \end{align}
\end{corollary}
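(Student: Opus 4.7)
The plan is to derive Corollary~\ref{cor:src-trace-id} as a direct specialization of Fact~\ref{fact:mobius}. Since the fact is already stated for an arbitrary $f:\zo^\ell \to \C$ and for general $z = (z_1,\dots,z_\ell) \in \C^\ell$, the entire argument is a mechanical three-step manipulation: (i) pick the right $f$, (ii) perform an affine change of variables on each $z_k$, and (iii) collapse $z_3,\dots,z_\ell$ to a single variable $z_2$ and collect exponents.

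First, I would apply Fact~\ref{fact:mobius} with $f(u_1,\dots,u_\ell) := \prod_{k=1}^\ell \Indicator{u_k = w_k}$. Under this choice, $f(x_{i_1},\dots,x_{i_\ell}) = \prod_{k=1}^\ell \Indicator{x_{i_k}=w_k}$ on the LHS of Fact~\ref{fact:mobius}, and $\Ex_{\by}[f(\by_{j_1},\dots,\by_{j_\ell})] = \Ex_{\by}[\prod_k \Indicator{\by_{j_k} = w_k}]$ on the RHS.

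Second, I would substitute $z_k \mapsto (z_k - (1-\rho))/\rho$ for each $k \in [\ell]$. This is a valid substitution since the identity in Fact~\ref{fact:mobius} holds for all $(z_1,\dots,z_\ell) \in \C^\ell$. Because $(1-\rho) + \rho \cdot \tfrac{z_k - (1-\rho)}{\rho} = z_k$, every factor of the form $((1-\rho)+\rho z_k)^{(\cdot)}$ on the LHS collapses to $z_k^{(\cdot)}$, while on the RHS the bare $z_k$'s are replaced by $((z_k-(1-\rho))/\rho)^{(\cdot)}$. Dividing both sides by $\rho^\ell$ yields the intermediate identity
\[
\sum_{0 \le i_1 < \cdots < i_\ell \le n-1} \tprod_{k=1}^\ell \Indicator{x_{i_k} = w_k}\, z_1^{i_1} \prod_{k=2}^\ell z_k^{i_k-i_{k-1}-1}
= \rho^{-\ell}\!\!\sum_{0 \le j_1 < \cdots < j_\ell \le n-1}\!\! \Ex_\by\Bigl[\tprod_{k=1}^\ell \Indicator{\by_{j_k}=w_k}\Bigr] \Bigl(\tfrac{z_1-(1-\rho)}{\rho}\Bigr)^{j_1} \prod_{k=2}^\ell \Bigl(\tfrac{z_k-(1-\rho)}{\rho}\Bigr)^{j_k-j_{k-1}-1}.
\]

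Third, I would set $z_3 = z_4 = \cdots = z_\ell = z_2$. On the LHS, the telescoping sum $\sum_{k=2}^\ell (i_k - i_{k-1} - 1) = i_\ell - i_1 - (\ell-1)$ collapses the product of $z_k$-factors into a single factor $z_2^{i_\ell - i_1 - (\ell-1)}$, and the LHS becomes exactly $P_{x,w}(z_1,z_2)$ by definition. On the RHS, the same telescoping yields the single factor $\bigl((z_2 - (1-\rho))/\rho\bigr)^{j_\ell-j_1-(\ell-1)}$, giving exactly the claimed expression in \eqref{eq:trace-poly}. No step poses an obstacle: the only thing to verify with care is the exponent bookkeeping after the identification of variables, which is routine.
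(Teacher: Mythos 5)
Your proposal is correct and follows exactly the paper's own derivation: instantiate Fact~\ref{fact:mobius} with $f(u)=\Indicator{u=w}$, substitute $z_k \mapsto (z_k-(1-\rho))/\rho$ (valid for $\rho>0$), divide by $\rho^\ell$, and identify $z_3,\dots,z_\ell$ with $z_2$ so the exponents telescope to $i_\ell - i_1 - (\ell-1)$ and $j_\ell - j_1 - (\ell-1)$. No differences worth noting.
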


Let $Q(z_1,z_2)$ be the bivariate polynomial on the right hand side of 
\Cref{eq:trace-poly}.
Observe that for every fixed $z_1$, viewing $Q(z_1,z_2)$ as a univariate polynomial in $z_2$, its $z_2$-coefficient of degree $d$ (a univariate polynomial in $z_1$) can be estimated using $d$-local SQs.
We will first prove that $Q$, as a univariate polynomial in the second variable $z_2$, is close to its low-degree truncation $Q_{\le d}$ (for a suitable choice of $d$), defined by
\begin{align} \label{eq:low-degree-trunc}
  Q_{\le d}(z_1,z_2)
  := \rho^{-\ell} \sum_{\substack{0 \le j_1 < \cdots < j_\ell \le n-1: \\ j_{\ell}-j_1 - (\ell-1) \le d}} \Ex_{\by}\Bigl[\prod_{k=1}^\ell \Indicator{\by_{j_k} = w_k} \Bigr] \Bigl(\frac{z_1 - (1-\rho)}{\rho}\Bigr)^{j_1}   \Bigl(\frac{z_2 - (1-\rho)}{\rho}\Bigr)^{j_\ell - j_1 - (\ell-1)} ,
\end{align}
when both $z_1,z_2$ belong to the domain in \Cref{lem:distinguish-src-poly}.

\begin{claim} \label{claim:SQ-1-term}
  Let $C''_\rho$ be a constant, and $d_0 \ge C''_\rho (\ell + n^{1/5}) + 2\log n$.
  For every $z \in \{ e^{i\theta}: \abs{\theta} \le n^{-2/5} \} \cup [1-\rho,1-\frac{3}{4}\rho]$ and $t \in [1-\rho,1-\frac{3}{4}\rho]$, we have $\abs{Q_{\le d_0}(z,t) - Q(z,t)} \le 4 \cdot 2^{-d_0/2}$.
\end{claim}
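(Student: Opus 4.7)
The plan is to bound the tail $|Q(z,t) - Q_{\le d_0}(z,t)|$, which by definition is the sum in \Cref{eq:trace-poly} restricted to tuples $(j_1,\dots,j_\ell)$ with gap $m := j_\ell - j_1 - (\ell-1)$ exceeding $d_0$. I would apply the triangle inequality, combine pointwise bounds on the two base factors $(z-(1-\rho))/\rho$ and $(t-(1-\rho))/\rho$ with a count of how many tuples contribute at each value of $m$, and finish with a geometric-series estimate.

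For the pointwise bounds, note first that for every $t \in [1-\rho, 1-\tfrac{3}{4}\rho]$ we have $|(t-(1-\rho))/\rho| \le 1/4$, which supplies the $m$-th power decay $4^{-m}$ in the sum. For $z$ I would split into the two cases of the hypothesis. In the real case $z \in [1-\rho, 1-\tfrac{3}{4}\rho]$ the same bound applies, so $|(z-(1-\rho))/\rho|^{j_1} \le 1$ for every $j_1$. In the unit-circle case $z = e^{i\theta}$ with $|\theta| \le n^{-2/5}$, a short computation using $|e^{i\theta} - (1-\rho)|^2 = \rho^2 + 2(1-\rho)(1-\cos\theta) \le \rho^2 + (1-\rho)\theta^2$ gives $|(z-(1-\rho))/\rho|^2 \le 1 + O_\rho(n^{-4/5})$, and hence $|(z-(1-\rho))/\rho|^{j_1} \le \exp(O_\rho(n^{1/5}))$ uniformly in $j_1 \le n-1$. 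For the counting, for each fixed $j_1$ the number of increasing tuples $j_1 < j_2 < \cdots < j_\ell \le n-1$ with $j_\ell - j_1 - (\ell-1) = m$ equals $\binom{m+\ell-2}{\ell-2} \le 2^{m+\ell}$, while $j_1$ itself ranges over at most $n$ values.

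Bounding each expectation factor in \Cref{eq:trace-poly} by $1$ and assembling these estimates yields
\[
|Q(z,t) - Q_{\le d_0}(z,t)| \;\le\; \rho^{-\ell} \cdot \exp(O_\rho(n^{1/5})) \cdot n \cdot \sum_{m > d_0} 2^{m+\ell} \cdot 4^{-m} \;\le\; 2^{C_\rho''(\ell + n^{1/5}) + \log n - d_0 + O(1)},
\]
where the inner geometric sum is $O(2^{\ell - d_0})$ and $C_\rho''$ absorbs the $\log(1/\rho)$ from $\rho^{-\ell}$, the $O_\rho(n^{1/5})$ from the unit-circle case, and the $O(\ell)$ from the counting factor. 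With $C_\rho''$ chosen appropriately, the hypothesis $d_0 \ge C_\rho''(\ell+n^{1/5}) + 2\log n$ makes this bound at most $4 \cdot 2^{-d_0/2}$, as required. The main obstacle, and the only reason the lower bound on $d_0$ contains an $n^{1/5}$ term rather than just $\ell + \log n$, is the unit-circle regime: since $|z-(1-\rho)|$ can slightly exceed $\rho$ when $z$ is not real, the $j_1$-th power introduces an $\exp(O(n^{1/5}))$ factor that must be absorbed into $d_0/2$. Everything else is routine bookkeeping.
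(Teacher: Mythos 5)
Your proposal is correct and matches the paper's argument essentially step for step: the same $4^{-m}$ decay from the $t$ factor, the same $\exp(O_\rho(n^{1/5}))$ control of the $|(z-(1-\rho))/\rho|^{j_1}$ factor split into the two cases for $z$, the same $n\cdot\binom{m+\ell-2}{\ell-2}\le n\cdot 2^{m+\ell}$ tuple count, and the same geometric-series wrap-up. The only cosmetic difference is that the paper first shows each degree-$d$ term is at most $2^{-d/2}$ and then sums, whereas you sum the pointwise bounds directly and simplify at the end; the two orderings give the same result up to the choice of constant $C''_\rho$.
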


\begin{proof}
  It suffices to show that for every $d \ge d_0$, the homogeneous degree-$d$ (in the variable $t$) term of $Q$, that is,
  \begin{align} \label{eq:degree-d-term}
    \rho^{-\ell} \sum_{\substack{0 \le j_1 < \cdots < j_\ell \le n-1 \\ 0 \le j_{\ell}-j_1 - (\ell-1) = d}} \Ex\Bigl[\prod_{k=1}^\ell \Indicator{\by_{j_k} = w_k} \Bigr] \Bigl(\frac{z - (1-\rho)}{\rho}\Bigr)^{j_1}   \Bigl(\frac{t - (1-\rho)}{\rho}\Bigr)^{j_\ell - j_1 - (\ell-1)} ,
  \end{align}
  is bounded by $2^{-d/2}$, as then we have $\abs{Q(z,t) - Q_{\le d_0}(z,t)} \le \sum_{d > d_0} 2^{-d/2} = 4 \cdot 2^{-d_0/2}$, as desired.

  We now bound \Cref{eq:degree-d-term} as follows.
  First, the expectation in each term of the summation can be bounded by $1$.
  Second, writing $z$ as $e^{i\theta}$ for some $\abs{\theta} \le n^{-2/5}$, and using $\abs{\cos\theta} \ge 1-\theta^2/2$, we have
  \begin{align*}
    \abs{z - (1-\rho)}^2
    &= \bigl(\cos\theta - (1-\rho)\bigr)^2 + \sin^2\theta
    = 1 - 2(1-\rho)\cos\theta + (1-\rho)^2 \\
    &= 2(1-\rho)(1 - \cos\theta) + \rho^2
    \le (1-\rho) \theta^2 + \rho^2 .
  \end{align*}
  Using $\abs{\theta} \le n^{-2/5}$ and $j_1 \le n$, 
when $z=e^{i\theta}$ for some $\abs{\theta} \le n^{-2/5}$
we have that
\begin{equation}
\label{eq:zbound}
  \abs[\Big]{\frac{z - (1-\rho)}{\rho}}^{j_1}
    \le \biggl(1 + (1-\rho)\Bigl(\frac{\theta}{\rho}\Bigr)^2\biggr)^{j_1/2}
    \le e^{C'_\rho n^{1/5}}
\end{equation}
  for some constant $C'_\rho$.
And when $z \in [1-\rho,1-{\frac 3 4} \rho]$ we have $0 \leq \frac{z - (1-\rho)}{\rho} \le 1/4$ and so \Cref{eq:zbound} is again satisfied (with room to spare).
Similarly, for $t \in[1-\rho, 1-\frac{3}{4}\rho]$ we have $0 \leq \frac{t - (1-\rho)}{\rho} \le 1/4$, and so $\abs{\left(\frac{t - (1-\rho)}{\rho}\right)^d} \le 4^{-d}$.

  Finally, the number of indices $0 \le j_1 < \cdots < j_\ell \le n-1$ with $j_\ell - j_1 - (\ell-1) = d$ is at most $n \cdot \binom{(\ell-2) + d}{\ell-2} \le n \cdot 2^{d + (\ell-2)}$.
  So the degree-$d$ term \eqref{eq:degree-d-term} can be bounded by 
  \[
    \rho^{-\ell} \cdot n \cdot 2^{d + \ell-2} \cdot e^{C'_\rho n^{1/5}} \cdot 4^{-d}
    \le n \cdot (2/\rho)^{\ell} \cdot e^{C'_\rho n^{1/5}} \cdot 2^{-d},
  \]
  which is at most $2^{-d/2}$ whenever $d \ge C''_\rho (\ell + n^{1/5}) + 2\log n$, for some constant $C''_\rho$.
\end{proof}

We now describe our local SQ algorithm to approximate the low-degree polynomial $Q_{\le d}(z,t)$, for any $(z,t) \in \{e^{i\theta}: \abs{\theta} \le n^{-2/5}\} \cup [1-\rho,1-\frac{3}{4}\rho] \times [1-\rho, 1-\frac{3}{4}\rho]$.
Set $d_0 := C''_\rho n^{1/5} \log^5 n \ge  C''_\rho (\ell + n^{1/5}) + 2\log n$.
Our $d_0$-local algorithm makes the following $d_0$-local queries:
\[
  \Ex_{\by \sim \Del_{1-\rho}(x)} \Bigl[\Indicator{\by[j: j + d_0 - 1] = u} \Bigr] \text{ for every $u \in \zo^{d_0}$ and $j \in \{0, \ldots, n-1\}$} .
\]
Let $\wh{p_{u,j}}$ be the estimate of $\E[\Indicator{\by[j: j + d_0 - 1] = u}]$ that is received as a response to the query.
For every fixed tuple $0 \le j_1 < \cdots <j_\ell \le n-1$ such that $j_\ell - j_1 - (\ell-1) \le d_0$, using the identity 
\[
  \E\Bigl[\prod_{k=1}^\ell \Indicator{\by_{j_k} = w_k} \Bigr]
  = \sum_{u \in \zo^{d_0}: \forall k \in [\ell]: u_{j_k - j_1 + 1} = w_k } \E\Bigl[ \Indicator{\by[j_1: j_1+d_0-1] = u} \Bigr],
\]
which is a sum of $2^{d_0 - \ell}$ terms,
the algorithm computes the estimate $\wh{p_{u,j_1, \ldots, j_\ell}}$ of $\E\left[\prod_{k=1}^\ell \Indicator{\by_{j_k} = w_k}\right]$ (using the estimates $\wh{p_{u,j_1}}$ of $\E[\Indicator{\by[j_1:j_1+d_0-1]=u}]$) by
\[
  \wh{p_{w,j_1,\ldots,j_\ell}}
  := \sum_{u \in \zo^{d_0}: \forall k \in [\ell]: u_{j_k-j_1+1} = w_k } \wh{p_{u,j_1}},
\]
for each $w \in \zo^\ell$ and tuple of indices $0 \le j_1 < \cdots < j_\ell \le n-1$ such that $j_\ell - j_1 - (\ell-1) \le d_0$.
If the tolerance for each query is $\tau_0$, then the error of each estimate $\wh{p_{w,j_1,\ldots,j_\ell}}$ is $\pm 2^{d_0 - \ell} \cdot \tau_0$.
Finally, the algorithm computes the estimate
$\wh{Q_{\le d_0}}(z,t)$ of $Q_{\le d_0}(z,t)$ using \Cref{eq:low-degree-trunc}, as
\[
  \wh{Q_{\le d_0}}(z,t)
  := \rho^{-\ell} \sum_{\substack{0 \le j_1 < \cdots < j_\ell \le n-1: \\ j_{\ell}-j_1 - (\ell-1) \le d_0}} \wh{p_{w,j_1,\ldots,j_\ell}} \Bigl(\frac{z - (1-\rho)}{\rho}\Bigr)^{j_1}  \Bigl(\frac{t - (1-\rho)}{\rho}\Bigr)^{j_\ell - j_1 - (\ell-1)} .
\]
There are at most $n \cdot \binom{d_0 + (\ell-1)}{\ell-1} \le n \cdot 2^{d_0 + (\ell-1)}$ such tuples.
So the total error is $n \cdot 2^{d_0 + (\ell-1)} \cdot 2^{d_0 - \ell} \cdot \tau_0 \le n \cdot 2^{2d_0} \cdot \tau_0$.

By \Cref{claim:SQ-1-term}, we have that for every $(z,t)$ in the domain specified in \Cref{lem:SQ-alg-for-src-poly}
\begin{align*}
  \abs{\wh{Q_{\le d_0}}(z,t) - P_{x,w}(z,t)}
  &= \abs{\wh{Q_{\le d_0}}(z,t) - Q(z,t)} \\
  &\le \abs{\wh{Q_{\le d_0}}(z,t) - Q_{\le d_0}(z,t)} + \abs{Q_{\le d_0}(z,t) - Q(z,t)} \\
  &\le n \cdot 2^{2d_0} \cdot \tau_0 + 4 \cdot 2^{-d_0/2} \\
  &= 2^{2 C''_\rho n^{1/5} \log^5 n} \tau_0 + \exp(-C''_\rho n^{1/5} \log^5 n) .
\end{align*}
Setting the tolerance parameter $\tau_0$ to be $\exp(-C_\rho n^{1/5} \log^5 n)$ proves \Cref{lem:SQ-alg-for-src-poly}. 


\section{Average-case lower bounds} \label{sec:avg-lower}

\begin{theorem} [Average-case lower bound]  \label{thm:avglower}
Fix any constant deletion rate $0<\delta<1$.  Any $\ell$-local SQ algorithm for average-case trace reconstruction must have tolerance $\tau_0 \leq O(\ell/\sqrt{n}).$
%
%
%
\end{theorem}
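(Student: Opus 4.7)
The plan is a coupling-based SQ indistinguishability argument, as the introduction suggests. For each $x \in \zo^n$, let $x'$ denote $x$ with its middle bit $x_m$ flipped, where $m := \lfloor n/2 \rfloor$; the map $x \mapsto x'$ is a fixed-point-free involution on $\zo^n$ that partitions all $2^n$ source strings into $2^{n-1}$ pairs. If we can show that no $\ell$-local SQ algorithm with tolerance $\tau_0 > C_\rho \ell/\sqrt n$ can distinguish the two elements of any such pair, then for each pair the algorithm outputs the correct source on at most one element, giving average success probability at most $1/2$ over a uniformly random source and contradicting any nontrivial average-case guarantee.

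The key step is a sensitivity bound for $\ell$-local queries. Couple $\by \sim \Del_\delta(x)$ with $\by' \sim \Del_\delta(x')$ by using the same random retention set $\bR \subseteq [0:n-1]$; since $x$ and $x'$ agree everywhere except at index $m$, the two traces agree everywhere except that if $m \in \bR$ they differ in exactly the single position where bit $m$ lands in the trace. Hence for any $\ell$-local query $q\colon \zo^n \to [-1,1]$ supported on a window $[i:i+\ell-1]$,
\[
|P_q(x) - P_q(x')|
\;\le\; 2\,\Pr\bigl[m \in \bR \text{ and the position of } m \text{ in the trace lies in } [i:i+\ell-1]\bigr].
\]
Conditional on $m \in \bR$, the position of bit $m$ in the trace is the number of retained indices in $[0:m-1]$, which is $\mathrm{Binomial}(m,\rho)$-distributed. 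A standard local central limit (or Stirling) estimate gives that the pointwise mass of this distribution is at most $O(1/\sqrt{m\rho(1-\rho)}) = O_\rho(1/\sqrt n)$, so summing over the $\ell$ positions in the window yields $|P_q(x) - P_q(x')| = O_\rho(\ell/\sqrt n)$.

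To conclude, fix $C_\rho$ so that the above bound is at most $2\tau_0$ whenever $\tau_0 > C_\rho \ell/\sqrt n$. Then the midpoint $\tfrac12\bigl(P_q(x) + P_q(x')\bigr)$ lies within $\tau_0$ of both $P_q(x)$ and $P_q(x')$, so the adversarial oracle that answers every $\ell$-local query $q$ with this midpoint is a valid $\tau_0$-tolerant oracle for both sources. It produces identical response transcripts on $x$ and on $x'$, so any (possibly adaptive, randomized) algorithm has the same output distribution on the two, and the pair-based counting argument above applies. The argument is overall quite direct; the only nontrivial technical point is the sensitivity estimate, which rests on the classical local CLT bound for the binomial, and the fact that the midpoint oracle strategy propagates through adaptive queries unchanged.
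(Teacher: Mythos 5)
Your proposal is correct and follows the same route as the paper: flip the middle bit, couple the two trace distributions via a shared retention set, and reduce $|P_q(x)-P_q(x')|$ to the probability that the flipped source position lands in the query window, which is $O_\rho(\ell/\sqrt n)$ by binomial anti-concentration. The paper phrases the bound as $\sum_{j\in[\ell]}\Pr[\br_{i_j}=n/2]$ rather than via the local CLT on the window, and spells out the adversarial midpoint oracle a little less explicitly, but these are cosmetic differences.
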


Let $x$ be an arbitrary fixed string in $\{0,1\}^n$ and let $x'$ be the string
  obtained from $x$ by flipping the bit $x_{n/2}$ in the middle.
Let $q:\{0,1\}^n\rightarrow [-1,1]$ be any $\ell$-junta query (which is not 
  necessarily $\ell$-local), i.e., there are $0\le i_1<\ldots<i_\ell< n$
  such that $q(x)= q'(x_{i_1},\ldots,x_{i_\ell})$ for some $q':\{0,1\}^\ell\rightarrow
  [-1,1]$.
We will prove the following claim:

\begin{claim}\label{mainclaim}
Let $P_q := \Ex_{\by \sim \Del_\delta(x)}[q(\by)]$ and $P_q' := \Ex_{\by \sim \Del_\delta(x')}[q(\by)].$ Then $|P_q-P_q'|\le O(\ell/\sqrt{n})$.
\end{claim}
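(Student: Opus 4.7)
\medskip

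\noindent \textbf{Proof proposal.}
The plan is to couple the random traces $\by$ and $\by'$ of $x$ and $x'$ by using a single random retention set and then argue that the coupled traces almost always agree on all positions that the junta $q$ inspects. Recall from the Preliminaries that we may view a draw $\by \sim \Del_\delta(x)$ as being generated by a $\rho$-biased random subset $\bR \subseteq [0:n-1]$ of retained positions, where the trace is obtained by listing $x_r$ for $r \in \bR$ in increasing order of $r$. I would use the same random set $\bR$ to simultaneously generate $\by$ from $x$ and $\by'$ from $x'$. Since $x$ and $x'$ differ only in coordinate $n/2$, the traces $\by$ and $\by'$ are identical whenever $n/2 \notin \bR$, and when $n/2 \in \bR$ they differ in exactly one coordinate of the trace, namely the coordinate $\bJ := |\bR \cap [0: n/2 - 1]|$ at which the (possibly flipped) middle bit lands.

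The next step is to relate $|P_q - P_q'|$ to the event that this single differing coordinate lies inside the $\ell$-junta support $\{i_1, \dots, i_\ell\}$. Since $q$ is an $\ell$-junta on the coordinates $i_1,\dots,i_\ell$ of the trace, under the above coupling we have $q(\by) = q(\by')$ whenever either $n/2 \notin \bR$ or $\bJ \notin \{i_1, \dots, i_\ell\}$. Combining this with the trivial bound $|q(\by) - q(\by')| \le 2$ yields
\[
|P_q - P_q'| \;\le\; 2 \cdot \Pr\!\left[\, n/2 \in \bR \ \text{and}\ \bJ \in \{i_1,\dots,i_\ell\}\,\right] \;\le\; 2 \cdot \Pr\!\left[\, \bJ \in \{i_1,\dots,i_\ell\}\,\right].
\]

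The heart of the argument is then an anticoncentration bound for $\bJ$. Since the retention indicators at positions $0, 1, \dots, n/2-1$ are i.i.d.~Bernoulli$(\rho)$ and are independent of the indicator at position $n/2$, the random variable $\bJ$ is distributed as $\mathrm{Bin}(n/2, \rho)$, regardless of whether we condition on $n/2 \in \bR$. A standard (Stirling-based) point-probability bound for the binomial distribution gives that $\Pr[\bJ = j] \le O(1/\sqrt{n})$ for every $j$, where the hidden constant depends only on $\rho = 1-\delta$. A union bound over the $\ell$ positions $i_1,\dots,i_\ell$ then gives $\Pr[\bJ \in \{i_1,\dots,i_\ell\}] \le O(\ell/\sqrt{n})$, and combining with the displayed inequality above yields $|P_q - P_q'| \le O(\ell/\sqrt{n})$, as desired.

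The only step that requires any real work is the binomial point-probability bound, but this is entirely standard and the constants depend only on $\delta$; no step of the argument actually uses that $q$ is \emph{local} (i.e.\ that the $i_k$'s are consecutive), which is why the conclusion holds for arbitrary $\ell$-juntas. To turn this claim into the stated lower bound on tolerance, one would then choose a random source string and hide the bit $x_{n/2}$ (or a small batch of such ``hidden'' bits) so that the SQ oracle's responses are consistent with both $x$ and $x'$ as long as $\tau_0$ exceeds the $O(\ell/\sqrt{n})$ gap produced by \Cref{mainclaim}.
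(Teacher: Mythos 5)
Your proposal is correct and is essentially the paper's proof: you couple $\by$ and $\by'$ via a shared retention set $\bR$, observe that $q(\by) \ne q(\by')$ only if the flipped middle bit lands in one of the $\ell$ junta positions, and bound that probability by a union bound over $\ell$ events each of point-probability $O(1/\sqrt{n})$. The only cosmetic difference is that the paper bounds $\Pr[\br_{i_j}=n/2]$ (i.e., which source index lands at trace position $i_j$) while you bound $\Pr[\bJ=i_j]$ with $\bJ=|\bR\cap[0:n/2-1]|$ (i.e., which trace position the flipped bit lands in); these describe the same event, so the arguments coincide.
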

\begin{proof}
Let $\bR$ be a $\rho$-biased random draw of a subset of $[0:n-1]$
  with $\bR=\{\br_0,\br_1,\ldots,\br_{\boldm-1}\}$ for some $\boldm\le n$.
Given that the only difference between $x$ and $x'$ is the middle bit,
  we have 
$$
\big| P_q-P_q'\big|\le 2\cdot \Pr_{\bR} \left[
\text{$\br_{i_j}=n/2$ for some $j\in [\ell]$}
\right]\le 2\sum_{j\in [\ell]} \Pr_{\bR} \left[
\text{$\br_{i_j}=n/2$}
\right].
$$
Since $\delta\in (0,1)$ is a constant, 
  $\Pr_{\bR}[\br_{i}=n/2]\le O(1/\sqrt{n})$ for any $i$, from which the claim follows. 
\end{proof}

We now prove \Cref{thm:avglower}:

\begin{proof}(of \Cref{thm:avglower})
Indeed we will show that any SQ algorithm for average-case trace reconstruction that uses 
  $\ell$-junta queries with tolerance $\tau$ must satisfy $\tau\le O(\ell/\sqrt{n})$.
To see this, consider any SQ algorithm for trace reconstruction that uses
  $\ell$-junta queries with tolerance $\tau$ that is larger than the 
  $O(\ell/\sqrt{n})$ in \Cref{mainclaim}.
It follows from \Cref{mainclaim} that, for any string $x\in \{0,1\}^n$,
  such an algorithm cannot distinguish between $x$ and $x'$.
As a result, such an algorithm fails to reconstruct $\bx\sim\{0,1\}^n$ 
  with probability at least $1/2$.
\end{proof}


\section{Average-case upper bounds} \label{sec:avg-upper}

\begin{theorem} [Average-case upper bound]  \label{thm:avgupper}
Fix any constant deletion rate $0<\delta<1$.  There is an SQ algorithm for average-case trace reconstruction that uses $\ell=O(\log n)$-local queries with tolerance $\tau=1/\poly(n)$.
\end{theorem}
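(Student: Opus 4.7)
The plan is to adapt the smoothed trace reconstruction algorithm of \cite{CDLSS20smoove} to the SQ model, as previewed in the introduction. That algorithm interacts with data only through empirical estimates of subword frequencies, so its queries are local in spirit; however, it requires such estimates across a range of effective deletion rates $\delta, \delta + \Delta, \ldots, (1+\delta)/2$, whereas in the SQ framework our only access to the source is via queries on $\Del_\delta(x)$. The key technical ingredient is therefore a \emph{simulation lemma}: any $\ell$-local subword query to $\Del_{\delta'}(x)$ with tolerance $\tau$ can be simulated by a collection of $\ell'$-local subword queries to $\Del_\delta(x)$ with tolerance $\tau' = \Theta(\tau)$, where $\ell' = O(\ell/(1-\delta'))$. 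Since the algorithm of \cite{CDLSS20smoove} uses only $\delta' \leq (1+\delta)/2$, we have $1-\delta' \geq (1-\delta)/2 = \Omega(1)$, so $\ell' = O(\ell)$, and the $O(\log n)$ locality and $1/\poly(n)$ tolerance are preserved through the reduction.

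To prove the simulation lemma, I would couple $\Del_{\delta'}(x)$ with $\Del_\delta(x)$ by writing $\by' \sim \Del_{\delta'}(x)$ as the result of first drawing $\by \sim \Del_\delta(x)$ and then independently deleting each bit of $\by$ with probability $\eta := (\delta'-\delta)/(1-\delta)$. For a subword query of the form $\Indicator{\by'[i:i+\ell-1]=w}$, I would expand the probability by conditioning on the sequence of positions $j_0 < j_1 < \cdots < j_{\ell-1}$ in $\by$ that survive the secondary deletion and land at indices $i, i+1, \ldots, i+\ell-1$ of $\by'$. The gaps $j_k - j_{k-1}$ are i.i.d.\ geometric with mean $1/(1-\eta)$, so a Chernoff bound shows that the offset $j_0$ and the span $j_{\ell-1} - j_0$ deviate from their means by more than $O(\sqrt{(\ell+\log n)/(1-\eta)^{2}}\,)$ only with probability $1/n^{\omega(1)}$. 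Restricting attention to the typical window thus localises the required information in $\by$ to an $\ell' = O(\ell/(1-\delta'))$-length block and expresses the target probability as a linear combination of at most $2^{\ell'}$ subword probabilities on $\by$, whose coefficients are explicit polynomials in $\rho$ and $\eta$ that can be precomputed.

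It remains to verify that the \cite{CDLSS20smoove} algorithm, specialised to the average-case setting, queries subword statistics of locality $O(\log n)$ with tolerance $1/\poly(n)$ at a family of effective deletion rates bounded above by $(1+\delta)/2$. Combining that fact with the simulation lemma converts the algorithm into an $O(\log n)$-local SQ algorithm accessing only $\Del_\delta(x)$, with tolerance still $1/\poly(n)$: the $\poly(n)$-sized linear combinations blow up the tolerance by at most a polynomial factor, which is absorbed into the slack.

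The main obstacle I expect is the bookkeeping inside the simulation lemma: one must simultaneously control the truncation error from ignoring atypical windows (which must be kept far below the $1/\poly(n)$ target), the amplification of tolerance arising from the $\poly(n)$-sized expansion, and the boundary behaviour near the ends of the trace, where the offset $j_0$ can be exceptional and the geometric-gap concentration argument must be modified. A secondary concern is that \cite{CDLSS20smoove} is stated for smoothed reconstruction, so I would need to check that in the average-case specialisation all relevant parameters (number of queries, locality, tolerance, and range of effective $\delta'$) collapse to the values claimed here, so that the simulation lemma can be invoked as a black box.
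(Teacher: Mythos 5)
Your proposal matches the paper's proof essentially step for step: the paper proves exactly the simulation lemma you describe (as \Cref{clm:tech-def-claim-2} and \Cref{lem:simulate-local-different-delta}), using the same coupling $\Del_{\delta'}(x)=\Del_{\beta_r}(\Del_\delta(x))$ with retention $\beta_r=(1-\delta')/(1-\delta)$, the same negative-binomial/geometric concentration to bound the span of the relevant positions, and the same truncation-plus-tolerance-accounting, before plugging into the smoothed reconstruction algorithm specialised to $\sigma=1/2$. The only cosmetic differences are your phrasing in terms of i.i.d.\ geometric gaps and Chernoff rather than the paper's explicit negative-binomial tail bound (\Cref{clm:negbin-concentration}) and $\Hypernb{\cdot}{\cdot}{\cdot}$ notation, and your rough $2^{\ell'}$ count of expansion terms versus the paper's more careful bookkeeping, but these are the same argument.
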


We will prove Theorem~\ref{thm:avgupper} by showing that the algorithm in \cite{CDLSS22} can be simulated with local SQ queries.
To do so, we will need to recall the smoothed analysis model. 
\begin{definition}
Let $x^{\mathsf{worst}}$ be an unknown and arbitrary string in $\{0,1\}^n$ and $0<\sigma<1$ be a ``smoothening parameter.'' Let $\bx$ be generated by flipping every bit of $x^{\mathsf{worst}}$ independently with probability $\sigma$. 

For parameters $\eta, \tau>0$, a \emph{$(T,\eta,\tau)$-trace reconstruction algorithm in the smoothed analysis model (with smoothening parameter $\sigma$)} has the following guarantee: With probability at least $1-\eta$ (over  the random generation of $\bx$ from $x^{\mathsf{worst}}$), it is the case that the algorithm, given access to independent
traces drawn from $\Del_\delta(\bx)$, outputs the string $\bx$ with probability at least $1-\tau$ (over the random
traces drawn from $\Del_\delta(\bx)$). The time complexity as well as the number of traces is bounded by $T$. 
\end{definition}

Observe that the average case trace reconstruction setting corresponds to the smoothed analysis setting with $\sigma =1/2$ and $ x^{\mathsf{worst}}$ set to the all zeros string (though any fixed  choice of $ x^{\mathsf{worst}}$ works equally well). 

\cite{CDLSS22} gave a polynomial-time algorithm for trace reconstruction in the smoothed analysis setting.
Taking $\sigma=1/2$, the main result of \cite{CDLSS22} gives the following:
\begin{theorem} (Theorem~1 in \cite{CDLSS22}) 
There is an algorithm for trace reconstruction which for any $\eta, \tau$ and $\delta>0$, 
has the following guarantee: With probability $1-\eta$ over $\bx$ drawn uniformly at random from $\{0,1\}^n$, it is the case that the algorithm, given access to independent traces drawn from $\Del_\delta(\bx)$, outputs the string $\bx$ with probability at least $1-\tau$ (over the random
traces drawn from $\Del_\delta(\bx)$). Its running time and sample complexity are upper bounded by
\[
T= \bigg(\frac{n}{\eta} \bigg)^{O\big(\frac{1}{(1-\delta)} \cdot \log\big(\frac{2}{(1-\delta)}\big)\big)}. 
\]
\end{theorem}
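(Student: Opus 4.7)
The plan is to prove this via an anchor-and-align strategy tailored to random source strings, exploiting a strong uniqueness property of $\bx \sim \zo^n$ that is unavailable in the worst case. The algorithm will have three components: (i) a structural guarantee on random strings, (ii) an alignment step based on anchor subwords, and (iii) a local mean-based bit recovery step.

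First I would establish a structural lemma: with probability at least $1 - \eta$ over $\bx$, every index $i \in [0:n-1]$ admits a nearby \emph{anchor} --- a length-$L$ subword starting at some $i' \in [i - O(L), i]$ that appears uniquely in $\bx$, where $L = \Theta((\log(n/\eta)/\rho) \cdot \log(1/\rho))$. Existence follows from a direct union bound: the probability that two fixed length-$L$ subwords of $\bx$ agree is $2^{-L}$, and summing over $O(n^2)$ pairs suffices provided $L \geq \Omega(\log(n/\eta))$. The larger choice of $L$ is dictated by the alignment step below.

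Second, I would use these anchors to align each trace with the source. A length-$L$ anchor survives intact in $\by \sim \Del_\delta(\bx)$ with probability $\rho^L = (n/\eta)^{-O(\log(1/\rho)/\rho)}$; consequently, with $T = (n/\eta)^{O(\log(1/\rho)/\rho)}$ traces one obtains $\poly(n/\eta)$ intact copies of each anchor. Uniqueness of the anchor in $\bx$ lets one bound (again via a union bound over positions in $\by$) the probability of spurious matches elsewhere in $\by$, so that for each anchor one can reliably identify, in most traces where the anchor survives, the location in $\by$ corresponding to the anchor position in $\bx$. Given such an alignment, I would then apply a mean-based reconstructor in the style of \cite{DOS17,NazarovPeres17} locally: for each target position $i$, running the $\exp(\tilde O(L^{1/3})) = \poly(n/\eta)$-sample mean-based algorithm on the aligned sub-traces recovers $\bx_i$ with high probability, and a union bound over $n$ positions yields the overall reconstruction guarantee. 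The total sample complexity is dominated by the anchor-survival blow-up $1/\rho^L = (n/\eta)^{O(\log(1/\rho)/\rho)}$, matching the claimed bound.

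The main obstacle lies in the alignment analysis: even when an anchor survives a trace, its position in $\by$ is a random variable depending on the (random) number of deletions in the prefix of $\bx$, and different anchors within the same trace have correlated offsets. The heart of the argument is showing that, after conditioning on anchor survival, the aligned sub-windows behave like sufficiently independent mean-based-friendly samples of the target window, so that the \cite{DOS17,NazarovPeres17} local estimator applies with its claimed sample complexity. A further subtlety is controlling the very rare failure events where the anchor appears to match at an incorrect location in $\by$ (either via a coincidental match of $L$ bits in the trace, or via insertions introduced by the deletion channel near the true anchor position); both are handled by taking $L$ somewhat larger than the bare union-bound threshold so that the probability of each failure mode is $\poly(\eta/n)$, allowing a final union bound over all positions and traces.
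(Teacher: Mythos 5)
First, note that the paper does not prove this statement at all --- it is imported verbatim as Theorem~1 of \cite{CDLSS22} (specialized to smoothening parameter $\sigma=1/2$), and the algorithm behind it estimates the subword statistics $\mathsf{SW}_{x,w}(\delta')$ at several deletion rates and then runs a linear program plus a greedy step; that specific structure is what the rest of Section~6 needs in order to get a \emph{local SQ} algorithm. Your anchor-and-align proof attempt is therefore a genuinely different route, but it has a gap that I do not see how to close as written. The alignment step relies on an anchor of length $L$ surviving \emph{intact} in a trace (probability $\rho^L$) and being identified by exact subword matching. For a uniformly random source, a length-$L$ window of the trace whose underlying source bits are disjoint from the anchor matches the anchor with probability $2^{-L}$, so the expected number of spurious exact matches per trace is about $n2^{-L}$. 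The ratio of spurious to true matches is thus about $n(1/(2\rho))^{L}$: when $\rho=1-\delta<1/2$ this ratio \emph{grows} with $L$, so taking $L$ ``somewhat larger'' (your proposed fix for both failure modes) makes things strictly worse, and no choice of $L$ makes correctly aligned traces the majority among matched ones. Restricting the search to a $\Theta(\sqrt{n})$-wide window around the expected location only replaces $n$ by $\sqrt{n}$ and does not change this conclusion. So the scheme as described can only work for $\delta<1/2$, whereas the theorem is claimed for all $\delta\in(0,1)$; this is precisely the $\delta\ge 1/2$ barrier that forces \cite{CDLSS20smoove,CDLSS22} (and, in the worst case, the \cite{BEK99}-based argument in Section~4) to avoid naive exact-subword statistics.

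There are also quantitative problems even in the regime where the approach could be salvaged. With your choice $L=\Theta\bigl(\tfrac{\log(n/\eta)}{\rho}\log\tfrac{1}{\rho}\bigr)$, the survival probability is $\rho^{L}=(n/\eta)^{-\Theta(\log^2(1/\rho)/\rho)}$, not $(n/\eta)^{-O(\log(1/\rho)/\rho)}$ as claimed, so the resulting trace complexity does not match the stated bound $T=(n/\eta)^{O(\frac{1}{1-\delta}\log\frac{2}{1-\delta})}$; and if instead $L$ is chosen to beat the union bound over all $nT$ candidate windows, one needs $(2\rho)^{L}\gg \poly(n/\eta)$, which forces an exponent blowing up as $\rho\to 1/2^{+}$ and is impossible for $\rho\le 1/2$. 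Finally, the ``heart of the argument'' you flag --- that conditioning on anchor survival and correct identification leaves the post-anchor portion of the trace distributed as a clean, independent trace of the post-anchor suffix, so that a truncated mean-based estimator applies --- is plausible but is exactly the part you do not carry out; it requires handling the conditioning on the absence of spurious matches and the correlations between anchors in the same trace. As it stands the proposal does not establish the theorem (and, separately, even a corrected alignment-based proof would not substitute for \cite{CDLSS22} in this paper, since alignment is the kind of per-trace processing that the local SQ simulation in Section~6 cannot express).
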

 We begin with a short description of the algorithm in \cite{CDLSS22} (page 27 of the Arxiv version of \cite{CDLSS22}), giving only the level of detail necessary for the current paper. 
We set the following parameters: 
\begin{eqnarray}
k &=&  O(\log (n/\eta)), \quad \kappa = \bigg( \frac{1}{n} \cdot \bigg( \frac{1-\delta}{2}\bigg)^k \bigg)^{O(1/(1-\delta))}, \quad 
 \quad \theta = (1-\delta)^2/2, \quad \quad \nonumber \\ d &=& \frac{C}{\theta} \bigg(\ln n +  k\ln \frac{C}{\theta} \bigg), \quad \quad
\Delta = \frac{\kappa}{2d^2 \cdot n \cdot \binom{d+k-2}{k-2}}.  \label{eq:parameter-setting-avg}
\end{eqnarray} 
Set $L$ to be the largest integer such that $\delta + L \cdot \delta \le (1+\delta)/2$. 

Given two strings $x \in \{0,1\}^n$ and $w \in \{0,1\}^k$, \cite{CDLSS22} define a  univariate polynomial $\mathsf{SW}_{x,w}(\cdot)$.  The precise formal definition of this polynomial is not important for us; rather, the following relation (Equation~6 in the Arxiv version of \cite{CDLSS22}) is sufficient for our purposes:
\begin{equation}~\label{eq:rel-SW}
\Ex_{\by \sim \Del_{\delta'}(x)} [\#(w,\by)] =(1-\delta')^k \cdot  \mathsf{SW}_{x,w}(\delta'), 
\end{equation}
 where $\#(w,\by)$ is the number of times $w$ appears a subword of $\by$. We remark to the reader that
 $\#(w,\by)$ is a sum of $k$-local query functions (we will elaborate on this shortly).  The algorithm in \cite{CDLSS22} proceeds as follows:
 \begin{enumerate}
 \item Define set $S: = \{\delta, \delta + \Delta, \delta +2 \Delta, \ldots, \delta + L \Delta\}$.
 \item For every $w \in \{0,1\}^k$ and $\delta' \in S$, the algorithm computes $\pm \kappa$-accurate estimates of $\mathsf{SW}_{x,w}(\delta')$, using \Cref{eq:rel-SW}. 
 \item With these estimates of $\mathsf{SW}_{x,w}(\delta')$ (for $\delta' \in S$ and $w \in \{0,1\}^k$), the algorithm runs a linear program followed by a greedy algorithm to reconstruct the original string. 
 \end{enumerate}
 In particular, excluding the part of Step~(2) in which the estimates of $\mathsf{SW}_{x,w}(\delta')$ are computed, the rest of the reconstruction algorithm is deterministic and does not use the traces. Note that the reason why the \cite{CDLSS22} algorithm does not immediately translate to a local SQ algorithm for us is the following: in our model the permissible statistical queries are with respect to $\by \sim \Del_\delta(x)$, whereas the \cite{CDLSS22} algorithm, as sketched above, uses estimates of probabilities corresponding to statistical queries over $\by \sim \Del_{\delta'}(x)$ for various values of $\delta' > \delta.$

 Thus, to obtain a local SQ algorithm, it suffices to show that the values $\{\mathsf{SW}_{x,w}(\delta')\}_{\delta' \in S,  \ w \in \{0,1\}^k}$ can be estimated using local SQ queries corresponding to $\Del_\delta(x).$ More precisely, we have the following claim whose proof is immediate from the description of the above algorithm and \eqref{eq:rel-SW}.
 \begin{claim}
 For any $\delta' \in S$, to compute $\mathsf{SW}_{x,w}(\delta')$ to error $\kappa$, it is sufficient to estimate the value of $\E_{\by \sim \Del_{\delta'}(x)} [\#(w,\by)]$  up to error $\tau'$, where 
 \begin{equation}~\label{eq:def-tau}
 \tau' := \bigg( \frac{1}{n} \bigg( \frac{1-\delta}{2}\bigg)^k \bigg)^{O(1/(1-\delta))}. 
 \end{equation}
 \end{claim}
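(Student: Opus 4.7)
The plan is to directly invoke the identity \eqref{eq:rel-SW}, which rearranges to
\[
  \mathsf{SW}_{x,w}(\delta') = \frac{\Ex_{\by \sim \Del_{\delta'}(x)}[\#(w,\by)]}{(1-\delta')^k}.
\]
Thus if $\widehat{E}$ is an estimate of $\E_{\by \sim \Del_{\delta'}(x)}[\#(w,\by)]$ within additive error $\tau'$, then $\widehat{E}/(1-\delta')^k$ is an estimate of $\mathsf{SW}_{x,w}(\delta')$ within additive error $\tau'/(1-\delta')^k$, and we just need to verify that this is at most $\kappa$ for the value of $\tau'$ in \eqref{eq:def-tau}.

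The first step is to lower bound the denominator $(1-\delta')^k$. Recall from the parameter setup that every $\delta'\in S$ lies in $[\delta, (1+\delta)/2]$, which gives $1-\delta' \ge (1-\delta)/2$ and hence
\[
  (1-\delta')^k \;\ge\; \bigl((1-\delta)/2\bigr)^k .
\]
So it suffices to show $\tau'/((1-\delta)/2)^k \le \kappa$, i.e.\ $\tau' \le \kappa \cdot ((1-\delta)/2)^k$.

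The second step is to plug in the values of $\kappa$ and $\tau'$ and compare the exponents. Writing $A := \tfrac{1}{n}\bigl(\tfrac{1-\delta}{2}\bigr)^k \le 1$, the parameter setting \eqref{eq:parameter-setting-avg} gives $\kappa = A^{c}$ for some constant $c=O(1/(1-\delta))$, and we are told that $\tau' = A^{c'}$ for some (other) constant $c'=O(1/(1-\delta))$. Since $A \le 1$ and $(1-\delta)/2 \le 1$, we have
\[
  \kappa \cdot \bigl((1-\delta)/2\bigr)^k
  \;=\; A^{c} \cdot \bigl((1-\delta)/2\bigr)^k
  \;\ge\; A^{c+1},
\]
using $((1-\delta)/2)^k \ge A^1$. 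Hence it suffices to pick the hidden constant in the $O(1/(1-\delta))$ of $\tau'$ so that $c' \ge c+1$; this is permitted since both exponents are of the same order $\Theta(1/(1-\delta))$. The claim then follows.

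The argument is entirely routine algebraic bookkeeping — there is no real obstacle. The only mild care needed is to confirm that $S$ really does stop at $(1+\delta)/2$ so that the bound $1-\delta' \ge (1-\delta)/2$ applies uniformly over $\delta' \in S$; this is exactly how $L$ is defined just below \eqref{eq:parameter-setting-avg}. With that in hand, the constant inside $O(1/(1-\delta))$ in \eqref{eq:def-tau} can be absorbed to dominate the corresponding constant in $\kappa$, yielding the claim.
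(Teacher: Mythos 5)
Your argument is correct and follows the same route as the paper's own proof: rearrange \eqref{eq:rel-SW}, lower bound $(1-\delta')^k$ using $\delta' \le (1+\delta)/2$, and absorb the resulting factor of $((1-\delta)/2)^k$ into the $O(1/(1-\delta))$ exponent of $\tau'$. The paper compresses this into one line; you have merely filled in the routine bookkeeping.
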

 \begin{proof}
 We need to compute $\mathsf{SW}_{x,w}(\delta')$ for $\delta' \in S$ to error $\kappa$. By \eqref{eq:rel-SW}, it suffices to compute $\mathbf{E}_{\by \sim \Del_{\delta'}(x)} [\#(w,\by)]$ to error $\kappa \cdot (1-\delta')^k$; noting that $\delta' \le (1+\delta)/2$, the claim follows. 
 \end{proof}
 
 The main technical lemma of this section is the following.  
 \begin{lemma}~\label{lem:simulate-local-different-delta}
For the parameters defined as above, the following holds: Given the values of all subword queries  of length $\ell$ with tolerance $\tau'/2$ (with $\tau'$ defined in \eqref{eq:def-tau}) corresponding to $\Del_{\delta}(x)$, we can compute $\mathsf{SW}_{x,w}(\delta')$ for all 
 $\delta' \in S$ and $w  \in \{0,1\}^k$ to within error $\pm \kappa$.  Here $\ell$ is defined to be 
 \[
 \ell =  \Theta \bigg( \frac{k}{1-\delta} \cdot \ln \bigg(\frac{2}{1-\delta} \bigg) + 
 \frac{\ln n}{(1-\delta)} 
 \bigg).
 \]
 \end{lemma}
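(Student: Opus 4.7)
My plan is to build on the following composition property of deletion channels: for $\delta' \ge \delta$, setting $\delta'' := (\delta'-\delta)/(1-\delta)$, the distribution $\Del_{\delta'}(x)$ is identical to $\Del_{\delta''}(\Del_\delta(x))$, since each bit survives both stages with probability $(1-\delta)(1-\delta'') = 1-\delta'$ and the deletions are independent. The constraint $\delta' \le (1+\delta)/2$ from $\delta' \in S$ ensures $\delta'' \le 1/2$, i.e., $\rho'' := 1-\delta'' \ge 1/2$, which is the crucial source of exponential decay below.

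Using this composition, I can rewrite
\[
\Ex_{\by \sim \Del_{\delta'}(x)}[\#(w,\by)] = \Ex_{\bz \sim \Del_\delta(x)}\bigl[g(\bz)\bigr], \quad \text{where } g(\bz) = \Ex_{\by \sim \Del_{\delta''}(\bz)}[\#(w,\by)].
\]
An occurrence of $w$ as a subword of $\by$ starting at some position corresponds exactly to an increasing tuple $0 \le i_1 < \cdots < i_k \le |\bz|-1$ such that (i) all of $\bz_{i_1},\dots,\bz_{i_k}$ are retained, (ii) all intermediate bits are deleted, and (iii) $\bz_{i_1}\cdots\bz_{i_k}=w$. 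Summing over all such tuples and starting positions yields the explicit formula
\[
g(\bz) \;=\; (\rho'')^k \sum_{0 \le i_1 < \cdots < i_k \le |\bz|-1} \Indicator{\bz_{i_1}\cdots \bz_{i_k} = w}\cdot (\delta'')^{i_k - i_1 - (k-1)}.
\]

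Next I would truncate this sum to tuples satisfying $i_k - i_1 \le \ell - 1$, where $\ell$ is as in the lemma statement. The truncation error is bounded by $|\bz| \cdot \sum_{d > \ell - k}\binom{d+k-2}{k-2}(\delta'')^d$; since $\delta'' \le 1/2$, the ratio of consecutive terms is at most $\delta''\cdot(1+k/d) < 1$ once $d = \Omega(k)$, so the tail decays geometrically. A direct computation shows that choosing $\ell$ as in the lemma makes this truncation error (and the corresponding truncation error of $g$ viewed as a function of $\bz$) at most $\tau'/4$ after taking expectation over $\bz$, uniformly for every $\delta' \in S$.

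Finally, the truncated function is a sum over starting indices $j$ of functions $f_j(\bz[j:j+\ell-1])$ determined only by a length-$\ell$ window of $\bz$, with $|f_j(u)| \le \binom{\ell}{k}(\rho'')^k$. Expanding each $f_j$ in the subword basis $\Indicator{\bz[j:j+\ell-1]=u}$ for $u \in \zo^{\ell}$ and plugging in the $\pm \tau'/2$-accurate estimates of $\Pr_{\bz \sim \Del_\delta(x)}[\bz[j:j+\ell-1]=u]$ supplied as subword queries, I can estimate $\Ex[g(\bz)]$ to within $n \cdot 2^\ell \cdot \binom{\ell}{k}(\rho'')^k \cdot (\tau'/2)$, which, combined with the truncation error and the choice of $\ell$, totals at most $\tau'$. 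By the preceding claim, this yields the desired $\pm\kappa$ estimate of $\mathsf{SW}_{x,w}(\delta')$.

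The main obstacle is the two-sided tuning of $\ell$: it must be large enough (roughly $k/(1-\delta'')$ up to logarithmic factors) for the geometric truncation tail to fall below $\tau'$, yet small enough that the $2^\ell$ blow-up from expanding each windowed function in the subword basis does not overwhelm the $\tau'/2$ per-query tolerance. Both constraints turn out to be satisfiable exactly because $\rho'' \ge 1/2$, which is why the reduction to $\delta' \le (1+\delta)/2$ is essential and why all the $1/(1-\delta)$ factors in $\ell$ are inherited from the worst case $\delta''=1/2$.
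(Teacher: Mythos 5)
Your argument is essentially the same as the paper's, just packaged with different bookkeeping. Both exploit the composition property $\Del_{\delta'}(x) = \Del_{\delta''}(\Del_\delta(x))$, both use $\delta' \le (1+\delta)/2 \Rightarrow \rho'' = \beta_r \ge 1/2$ as the source of geometric decay, and both reduce the $\Del_{\delta'}$-statistics to $\ell$-local statistics of $\Del_\delta(x)$. Where the paper reasons ``forward'' — it introduces the $\Hypernb{\beta_r}{k}{\ell'}$ selector process (\Cref{def:hyperlocation}, \Cref{clm:simul-1}), conditions on the event $\mathcal{E}$ that the selected indices span a window of width at most $\ell$, and bounds $\Pr[\overline{\mathcal{E}}]$ via negative-binomial concentration (\Cref{clm:negbin-concentration,clm:negbin-2}) — you instead reason ``backward'' by writing a closed-form expression for $g(\bz) = \Ex_{\by \sim \Del_{\delta''}(\bz)}[\#(w,\by)]$ as a sum over increasing tuples in $\bz$, and then truncating away tuples that spread beyond $\ell$ positions. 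These are two views of the same quantity: your truncation error is exactly the contribution of the ``unbounded'' selector outcomes. Your route has a small advantage in that it works directly with $\#(w,\by)$ and bypasses the paper's intermediate \Cref{clm:tech-def-claim-2}, but it buys nothing essential.

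One thing to tighten: your final error accounting claims the estimation error from the $n \cdot 2^\ell$ subword-query terms, each weighted by $|f_j(u)| \le \binom{\ell}{k}(\rho'')^k$, ``totals at most $\tau'$'' when the per-query tolerance is $\tau'/2$. As written, that is $n \cdot 2^\ell \cdot \binom{\ell}{k}(\rho'')^k \cdot \tau'/2$, which is $\poly(n) \cdot \tau'$, not $\tau'$. The resolution (also implicit in the paper, which glides over the same $2^{\ell}$-type expansion when it calls $\Prx_{\bz}[\bz_{\mathbf{i}_1},\ldots,\bz_{\mathbf{i}_{k+\ell'-1}}=w]$ an ``$\ell$-local subword query'') is that $\ell = O(\log n)$, so this blow-up is only polynomial and gets absorbed into the $O(\cdot)$ exponents defining $\tau'$ and $\kappa$; the tolerance of \Cref{thm:avgupper} is only claimed to be $1/\poly(n)$. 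You should state explicitly that the subword-query tolerance must be $\tau'/\poly(n)$ rather than literally $\tau'/2$, and that this rescaling is harmless for the stated parameters.
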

 Before proving this lemma, we observe that \Cref{thm:avgupper} follows immediately from the lemma:
 \begin{proof}(of \Cref{thm:avgupper})
 For any constant $0<\delta<1$ and $\eta = n^{-\Theta(1)}$, by our choice of parameters we have $k =O(\log n)$ (see \eqref{eq:parameter-setting-avg}). With this choice, $\ell = O(\log n)$ and $\tau' = n^{-\Theta(1)}$.
By \Cref{lem:simulate-local-different-delta}, using the values of all subword queries of length $\ell$ (with tolerance $\tau'/2$), we can compute  $\mathsf{SW}_{x,w}(\delta')$ for all 
 $\delta' \in S$ and $w  \in \{0,1\}^k$ to within error $\pm \kappa$. By the guarantee of the algorithm in \cite{CDLSS22}, this suffices to recover $x$. 
  Thus, we get~\Cref{thm:avgupper}. 
 \end{proof}
 
 \subsection{Proof of \Cref{lem:simulate-local-different-delta}}
 
We start with the following observation. 
\begin{fact} \label{fact:simulation-channel}
For $\delta' \ge  \delta$, let $0 \leq \beta_r = (1-\delta')/ (1-\delta) \leq 1$. Then $\Del_{\delta'}(x) = \Del_{\beta_r}(\Del_{\delta}(x))$. In other words, we can simulate traces from the deletion channel $\Del_{\delta'}(\cdot)$ by first getting a trace from $\Del_{\delta}(\cdot)$ and then passing it through the deletion channel $\Del_{\beta_r}$. 
\end{fact}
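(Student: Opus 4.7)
The plan is to prove this channel-composition identity by a one-line coupling argument: I would show that each index of $x$ ends up in the final output with the correct per-bit retention probability, and that these per-index survival events remain mutually independent after composing the two deletion channels.

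First, I would recall the retention-process viewpoint from Section~\ref{sec:prelims}: a trace from $\Del_p(x)$ can be sampled by independently drawing a Bernoulli$(1-p)$ indicator for each $i \in [0:n-1]$ and concatenating the retained bits in increasing order of $i$. Applying this description to both channels in the composition, I would view $\Del_{\beta_r}(\Del_\delta(x))$ as a two-stage retention process: in stage~1, pick a random subset $\bR_1 \subseteq [0:n-1]$ by retaining each index independently with probability $1-\delta$ and form the intermediate trace $\by$ by concatenating $(x_i : i \in \bR_1)$; in stage~2, pick an independent random subset $\bR_2 \subseteq [0:|\by|-1]$ of positions of $\by$ to retain under the outer channel, and output the corresponding subword of $\by$.

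The key step is to observe that, by independence of the two stages, the composite process on $x$ is itself a product-Bernoulli retention process indexed by $[0:n-1]$: index $i$ survives to the final output iff $i \in \bR_1$ and the image of $i$ inside $\by$ lies in $\bR_2$, and because $\bR_2$ is independent of $\bR_1$ and is itself a product-Bernoulli, these composite events are mutually independent across $i$, each with probability equal to $(1-\delta)$ times the per-bit retention probability of the outer $\Del_{\beta_r}$ channel. Plugging in $\beta_r = (1-\delta')/(1-\delta)$ (read here as a retention probability, consistent with the subscript ``$r$'') gives composite per-index retention probability $(1-\delta)\cdot \beta_r = 1-\delta'$, matching the single-stage retention-process description of $\Del_{\delta'}(x)$; since both procedures emit the surviving bits in the original left-to-right order, the induced distributions on $\zo^{\le n}$ coincide. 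There is no real obstacle here: the argument is essentially the remark that an AND of two independent Bernoullis is a Bernoulli with the product of their parameters, carried out uniformly and independently across the $n$ indices of $x$.
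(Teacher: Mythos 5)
Your argument is correct and is exactly the justification the paper has in mind: the paper states this as a Fact without proof (the introduction merely remarks that the simulation is trivial), and your two-stage product-Bernoulli coupling --- each original index survives iff it passes two independent Bernoulli retentions, so the composite indicators are i.i.d.\ with parameter $(1-\delta)\beta_r = 1-\delta'$ and the left-to-right order is preserved --- is the standard intended reasoning. Your explicit note that $\beta_r = (1-\delta')/(1-\delta)$ must be read as the \emph{retention} probability of the outer channel is also the right reading of the paper's slightly abusive notation.
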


As stated earlier, we will assume that our original string $x$ is padded with infinitely many $0$-symbols to its right. This means that for any $i$, the $i^{th}$ position of the trace is well-defined. We now consider the process of getting a trace $\by$ from $\Del_{\delta'}(x)$ given a trace $\bz \sim \Del_{\delta}(x)$. We will do this by thinking of $\Del_{\beta_r}(\cdot)$ as a ``selector process". We start 
with the following definition. 
\begin{definition}~\label{def:hyperlocation}
For a parameter $p \in (0,1)$ and  
integers $k >0$ and $\ell \ge 0$, we define the distribution $\Hypernb{p}{k}{\ell}$  as follows: Define an infinite random string $\bw=(\bw_0,\dots)$ in $\{0,1\}^\ast$ where each bit is independently $0$ with probability $p$ and $1$ with probability $(1-p)$. Let $\mathbf{i}_s$ be the location of the $s^{th}$ one in $\bw$. Then a sample from $\Hypernb{p}{k}{\ell}$ 
is given by $(\mathbf{i}_k, \ldots, \mathbf{i}_{k + \ell-1})$. 

Finally, we say that an outcome from $\Hypernb{p}{k}{\ell}$ is \emph{$t$-bounded} if $|\mathbf{i}_{k + \ell-1}- \mathbf{i}_{k }| \le t$. 
\end{definition} 

We note that for any fixed $s$,  the process generating $\mathbf{i}_s$ is {\em memoryless}, in the sense that 
for any fixed $r$ and $s$ (with $r \ge s$), the random variable $\mathbf{i}_r - \mathbf{i}_s$ is distributed as a negative binomial random variable. 

With the above definition, we can now state the following claim:
\begin{claim}~\label{clm:simul-1}
Fix $\delta' \ge \delta$, $k \ge 1$, and $\ell \ge 0$.  Let $\by \sim \Del_{\delta'}(x)$ and $\bz \sim \Del_{\delta}(x)$. For $\beta_r = (1-\delta')/(1-\delta)$, let $(\mathbf{i}_{k}, \ldots, \mathbf{i}_{k +\ell-1}) \sim \Hypernb{\beta_r}{k}{\ell}$.  Then the distribution of $(\by_{k}, \ldots, \by_{k +\ell-1})$ is identical to the distribution of 
$(\bz_{\mathbf{i}_k}, \ldots, \bz_{\mathbf{i}_{k+\ell-1}})$. 
\end{claim}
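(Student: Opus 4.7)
The plan is to construct an explicit coupling via the two-stage decomposition of the deletion channel provided by \Cref{fact:simulation-channel}. Concretely, I would realize $\by$ by first drawing $\bz \sim \Del_\delta(x)$ and then passing $\bz$ through the secondary Bernoulli channel that turns $\Del_\delta$ into $\Del_{\delta'}$, which by \Cref{fact:simulation-channel} is controlled by the parameter $\beta_r = (1-\delta')/(1-\delta)$. Introduce an independent infinite Bernoulli string $\bw \in \{0,1\}^{*}$ whose coordinates are i.i.d.~with the Bernoulli parameter matching the $\Hypernb{\beta_r}{k}{\ell}$ process in \Cref{def:hyperlocation}, and take $\by$ to be the subsequence of $\bz$ read off at precisely those positions $j$ for which $\bw_j$ encodes ``survives the secondary stage''. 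By the definition of a deletion channel, this construction gives the correct marginal distribution for $\by$.

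Under this coupling, the $s$-th bit of $\by$ is precisely $\bz_{\mathbf{i}_s}$, where $\mathbf{i}_s$ is the location of the $s$-th ``1'' in $\bw$; applying this identity for each $s \in \{k, \ldots, k+\ell-1\}$ identifies the tuple $(\by_k, \ldots, \by_{k+\ell-1})$ pointwise with $(\bz_{\mathbf{i}_k}, \ldots, \bz_{\mathbf{i}_{k+\ell-1}})$ under the coupling. Since $\bw$ is drawn independently of $\bz$, the vector $(\mathbf{i}_k, \ldots, \mathbf{i}_{k+\ell-1})$ is distributed exactly as $\Hypernb{\beta_r}{k}{\ell}$ by \Cref{def:hyperlocation}, and projecting the coupling onto its marginals yields the stated equality in distribution.

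The main obstacle is essentially bookkeeping rather than mathematics: one must reconcile the role of $\beta_r$ versus $1-\beta_r$ as retention probability across \Cref{fact:simulation-channel} and \Cref{def:hyperlocation} (where ``$p$'' is the probability of a ``$0$'' while it is ``$1$''s that mark the retained positions), as well as check the $0$-versus-$1$-indexed conventions used for $\by$, $\bz$, and $\mathbf{i}_s$. Once these conventions are fixed consistently, the substance of the argument reduces to two essentially tautological observations: two stages of independent Bernoulli survival compose to one stage of Bernoulli survival with the composed parameter, and the locations of the ``1''s in an i.i.d.\ Bernoulli sequence are exactly the negative-binomial-type quantities that $\Hypernb$ packages. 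One small point to note is that since $x$ is effectively padded with an infinite suffix of zeros, both $\bz$ and $\bw$ can be treated as infinite strings so that $\mathbf{i}_{k+\ell-1}$ is almost surely well-defined and no finite-length truncation issues arise.
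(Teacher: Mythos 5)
Your argument is correct and is essentially the paper's proof, fleshed out: the paper likewise invokes \Cref{fact:simulation-channel} to realize $\by$ as a secondary deletion of $\bz$ and then reads off from \Cref{def:hyperlocation} that the surviving positions in that secondary stage are exactly the ones packaged by $\Hypernb{\beta_r}{k}{\ell}$. Your remark about reconciling $\beta_r$ versus $1-\beta_r$ across \Cref{fact:simulation-channel} and \Cref{def:hyperlocation} is a fair flag of a genuine convention mismatch in the paper's statements, but it does not change the substance of the argument.
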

\begin{proof}
The proof is essentially obvious from~\Cref{fact:simulation-channel} and~\Cref{def:hyperlocation}. In particular, from~\Cref{fact:simulation-channel}, given $\bz \sim \Del_{\delta}(x)$, to get $\by \sim \Del_{\delta'}(x)$, we need to simulate the deletion channel $\Del_{\beta_r}$ on the string $\bz$. By definition of the deletion channel $\Del_{\beta_r}$, the location of the positions $(k, \ldots, k + \ell -1)$ is given by 
 $(\mathbf{i}_k, \ldots, \mathbf{i}_{k + \ell-1})$ sampled from $\Hypernb{p}{k}{\ell}$. This finishes the proof. 
\end{proof}
 
 We next need a  lower bound on the probability that $\Hypernb{p}{k}{\ell}$ is bounded. To obtain this, we first state a tail bound 
 on negative binomial random variables:
 \begin{claim} [\cite{Brown11}] \label{clm:negbin-concentration}
 Let $\mathsf{Negbin}(m, p)$ be a negative binomially distributed random variable  with parameters $m$ and $p$, i.e.~it is the number of trials needed to get $m$ heads from independent coin tosses with heads probability $p$. Then $\mathbf{E}[\mathsf{Negbin}(m, p)] = m/p$ and furthermore, for any $t>1$, 
\[
\Pr\bigl[\mathsf{Negbin}(m, p)>tm/p\bigr] \le \exp \bigg(-\frac{tm (1-1/t)^2}{2} \bigg). 
\] 
 \end{claim}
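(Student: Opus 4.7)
The plan is to reduce the tail bound on $\mathsf{Negbin}(m,p)$ to the standard multiplicative Chernoff lower-tail bound for a binomial random variable. The key elementary observation is the event equivalence: because $\mathsf{Negbin}(m,p)$ equals the index of the $m$-th head in a sequence of independent $p$-biased coin tosses, for any integer $N$ the event $\{\mathsf{Negbin}(m,p) > N\}$ coincides with the event that a $\mathrm{Bin}(N,p)$ random variable is strictly less than $m$. Once this is in hand, everything reduces to estimating a binomial lower tail.

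First I would dispose of the expectation claim by writing $\mathsf{Negbin}(m,p)$ as the sum of $m$ independent geometric random variables, each with mean $1/p$ (the number of tosses needed to obtain a single head), so linearity gives $\mathbf{E}[\mathsf{Negbin}(m,p)] = m/p$. For the tail bound, I would set $N := \lfloor tm/p \rfloor$ and let $\bX \sim \mathrm{Bin}(N,p)$; by the equivalence above, together with the fact that $\mathsf{Negbin}(m,p)$ is integer-valued,
\[
\Pr\bigl[\mathsf{Negbin}(m,p) > tm/p\bigr] \;=\; \Pr\bigl[\mathsf{Negbin}(m,p) > N\bigr] \;=\; \Pr[\bX < m].
\]

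Now $\mu := \mathbf{E}[\bX] = Np$, and since $N \le tm/p$ we have $m = \mu/t \cdot (tm/(Np)) \ge \mu/t$; writing $m \le (1-\varepsilon)\mu$ with $\varepsilon = 1 - 1/t \in (0,1)$, the standard multiplicative Chernoff lower-tail bound for the binomial gives
\[
\Pr[\bX < (1-\varepsilon)\mu] \;\le\; \exp\!\bigl(-\mu \varepsilon^2 / 2\bigr) \;\le\; \exp\!\Bigl(-\tfrac{tm(1-1/t)^2}{2}\Bigr),
\]
where in the last step I use $\mu = Np$ together with the slack introduced by the floor (one can argue monotonicity: decreasing $N$ by at most one decreases $\mu$ by at most $p \le 1$, which is absorbed into the exponent by a trivial constant). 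This yields the claimed inequality. The argument is entirely routine; the only thing to be slightly careful about is the interplay between the floor in the definition of $N$ and the form $tm$ of the exponent, and this is handled by the integrality of $\mathsf{Negbin}(m,p)$. There is no real obstacle to overcome — this is a textbook Chernoff calculation invoked via the Negbin/Binomial duality.
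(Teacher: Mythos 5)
The paper states this claim with a citation to \cite{Brown11} and gives no proof of its own, so there is nothing internal to compare against; your route (Negbin/Binomial duality followed by the multiplicative Chernoff lower tail) is indeed the standard way one would prove it. The expectation argument and the identity $\Pr[\mathsf{Negbin}(m,p) > N] = \Pr[\mathrm{Bin}(N,p) < m]$ are both fine.

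However, the Chernoff step as written contains a genuine error in the handling of the floor, and the inequalities run the wrong way. You correctly observe that $\mu = Np \le tm$, which gives $m \ge \mu/t$. You then declare ``$m \le (1-\varepsilon)\mu$ with $\varepsilon = 1-1/t$'', but $(1-\varepsilon)\mu = \mu/t$, so this directly contradicts what you just derived; in fact $m \ge (1-\varepsilon)\mu$. Consequently $\Pr[\bX < m] \ge \Pr[\bX < (1-\varepsilon)\mu]$, so bounding the latter by Chernoff does \emph{not} bound the former. The final inequality $\exp(-\mu\varepsilon^2/2) \le \exp(-tm(1-1/t)^2/2)$ is also reversed, since it requires $\mu \ge tm$ while $\mu \le tm$. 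All three slips come from the same source (the floor shrinks $\mu$ slightly below $tm$), and the parenthetical about the discrepancy being ``absorbed into the exponent by a trivial constant'' does not repair this: the stated bound has no slack constant in which to absorb anything.

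The fix is short. Use that $\mathsf{Negbin}$ is integer-valued to write $\Pr[\mathsf{Negbin}(m,p) > tm/p] = \Pr[\bX \le m-1]$ with $\bX \sim \mathrm{Bin}(N,p)$, $N = \lfloor tm/p\rfloor$, and apply the additive-deviation form of the Chernoff lower tail, $\Pr[\bX \le \mu - s] \le \exp(-s^2/(2\mu))$. Since $N \ge tm/p - 1$ we have $\mu = Np \ge tm - p \ge tm - 1$, so the deviation is $s = \mu - (m-1) \ge (tm - 1) - (m-1) = m(t-1)$; the $-1$ from passing to $\Pr[\bX \le m-1]$ exactly cancels the loss from the floor. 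Together with $\mu \le tm$ this gives $s^2/(2\mu) \ge m^2(t-1)^2/(2tm) = tm(1-1/t)^2/2$, which is the claimed exponent with no slack. In short: right approach, but the floor must be handled by the deviation form of Chernoff and the $-1$ cancellation, not by reversing inequalities and waving at a constant.
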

 From this we can obtain the following claim which lower bounds the probability that $\Hypernb{\beta_r}{k}{\ell}$ is $s$-bounded. 
 \begin{claim}~\label{clm:negbin-2}
 For any $k$, an outcome $(\mathbf{i}_k, \ldots, \mathbf{i}_{k + \ell-1}) \sim \Hypernb{\beta_r}{k}{\ell}$ is $s$-bounded with probability at least $1-\xi$ for $s= t(\ell-1)/\beta_r $, where $\xi = \exp(-t(\ell-1)/8)$ for $t \ge 2$.  
 \end{claim}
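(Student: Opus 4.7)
The plan is to deduce the claim as a direct application of the negative-binomial concentration bound in \Cref{clm:negbin-concentration}. The key observation is that by the memorylessness noted immediately after \Cref{def:hyperlocation}, conditioned on any outcome of $\mathbf{i}_k$, the gap $\mathbf{i}_{k+\ell-1} - \mathbf{i}_k$ is independent of $\mathbf{i}_k$ and is distributed as the waiting time (i.e., number of additional Bernoulli trials in $\bw$) needed to accumulate $\ell-1$ further ones. In other words, $\mathbf{i}_{k+\ell-1} - \mathbf{i}_k \sim \mathsf{Negbin}(\ell-1, \beta_r)$, with mean $(\ell-1)/\beta_r$.

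With this identification in hand, I would invoke \Cref{clm:negbin-concentration} with parameters $m := \ell-1$ and $p := \beta_r$. Substituting into the tail bound gives
\[
\Pr\bigl[\mathbf{i}_{k+\ell-1} - \mathbf{i}_k > t(\ell-1)/\beta_r\bigr]
\;\le\; \exp\!\left(-\frac{t(\ell-1)(1-1/t)^2}{2}\right).
\]
For $t \ge 2$ we have $1 - 1/t \ge 1/2$, so $(1 - 1/t)^2 \ge 1/4$, and thus the right-hand side is at most $\exp(-t(\ell-1)/8) = \xi$. Since $\mathbf{i}_{k+\ell-1} \ge \mathbf{i}_k$ always, $|\mathbf{i}_{k+\ell-1} - \mathbf{i}_k| = \mathbf{i}_{k+\ell-1} - \mathbf{i}_k$, so with probability at least $1-\xi$ the outcome is $s$-bounded for $s = t(\ell-1)/\beta_r$, exactly as claimed.

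There is no substantive obstacle in this proof; it is essentially a bookkeeping exercise once the memoryless structure of the hypergeometric-style process is unpacked. The only mild care required is to identify correctly the parameter of the negative binomial from the definition of $\Hypernb{\beta_r}{k}{\ell}$ (so that the mean $(\ell-1)/\beta_r$ matches the threshold $s = t(\ell-1)/\beta_r$ appearing in the claim), and then to apply the simple inequality $(1-1/t)^2 \ge 1/4$ valid for $t \ge 2$ to simplify the exponent.
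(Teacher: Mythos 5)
Your proof is correct and follows exactly the same route as the paper: identify the gap $\mathbf{i}_{k+\ell-1}-\mathbf{i}_k$ as a $\mathsf{Negbin}(\ell-1,\beta_r)$ random variable, apply \Cref{clm:negbin-concentration}, and simplify the exponent using $(1-1/t)^2 \ge 1/4$ for $t \ge 2$.
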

 \begin{proof}
The gap $\mathbf{i}_{k + \ell-1} - \mathbf{i}_{k }$ is a negative binomial random variable which is distributed as $\mathsf{Negbin}(\ell-1, \beta_r)$. Thus, by~\Cref{clm:negbin-concentration}, it follows that 
 \[
\Pr\Bigl[ \mathbf{i}_{k + \ell-1} - \mathbf{i}_{k } > \frac{t(\ell-1)}{\beta_r}\Bigr]
\le \exp \bigg( \frac{-t(\ell-1)(1-1/t)^2}{2}\bigg). 
 \]
 For $t \ge 2$, we can simplify the upper bound as 
 \[
\Pr\Bigl[ \mathbf{i}_{k + \ell-1} - \mathbf{i}_{k } > \frac{t(\ell-1)}{\beta_r}\Bigr]
\le \exp \bigg( \frac{-t(\ell-1)}{8}\bigg). 
 \]
 Defining $\xi$ as $\exp(\frac{-t(\ell-1)}{8})$, we get the claim. 
 \end{proof}

 We now state the following technical claim. 
 \begin{claim}~\label{clm:tech-def-claim-2}
 Given the value of all $\ell$-local subword queries  for deletion channel $\Del_{\delta}(x)$ with tolerance $\eta$, we can compute the value of all $\ell'$-local subword queries for $\Del_{\delta'}(x)$ with tolerance $\tau'$ where 
 \[
 \eta = \tau'/2; \quad \ell = C \cdot \bigg( \frac{\ell'}{1-\delta'} \cdot \ln \bigg(\frac{2}{1-\delta} \bigg) + 
 \frac{\ln n}{(1-\delta')} 
 \bigg),
 \]
 for a suitably large constant $C$.
 \end{claim}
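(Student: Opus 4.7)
The plan is to apply the coupling from \Cref{fact:simulation-channel} together with \Cref{clm:simul-1}, which let us rewrite a random trace $\by \sim \Del_{\delta'}(x)$ in terms of a random trace $\bz \sim \Del_\delta(x)$ followed by random position selection via $\Hypernb{\beta_r}{k}{\ell'}$, where $\beta_r = (1-\delta')/(1-\delta) \in [1/2, 1]$ (the lower bound since $\delta' \le (1+\delta)/2$). This lets me rewrite every target $\ell'$-local subword query as
\[
P := \Pr_{\by \sim \Del_{\delta'}(x)}\bigl[\by[k:k+\ell'-1] = w\bigr]
= \sum_{j_0 < \cdots < j_{\ell'-1}} \Pr\bigl[(\mathbf{i}_{k+i})_i = (j_i)_i\bigr] \cdot \Pr_{\bz \sim \Del_\delta(x)}\bigl[\bz_{j_0}\cdots\bz_{j_{\ell'-1}} = w\bigr].
\]

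Next, I would invoke \Cref{clm:negbin-2} with $t = \Theta(\log(1/\tau')/(\ell'-1) + 1)$ to restrict attention to tuples with bounded span $\mathbf{i}_{k+\ell'-1} - \mathbf{i}_k < s$, where $s = t(\ell'-1)/\beta_r + 1$; since $1/\beta_r \le (1-\delta)/(1-\delta')$, this gives $s = O((\ell' \ln(2/(1-\delta)) + \ln(1/\tau'))/(1-\delta'))$. The negative binomial tail bound contributes a truncation error of at most $\tau'/2$. For each retained bounded tuple, the marginal probability factors as a sum of $\ell$-local subword queries via
\[
\Pr_\bz[\bz_{j_0}\cdots\bz_{j_{\ell'-1}} = w] = \sum_{u \in \zo^\ell: u_{j_i-j_0} = w_i \ \forall i} \Pr_\bz[\bz[j_0:j_0+\ell-1] = u],
\]
whenever $\ell \ge s$. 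Substituting the oracle estimates $\hat p_{j_0, u}$ for each $p_{j_0, u} := \Pr_\bz[\bz[j_0:j_0+\ell-1] = u]$ yields a fully computable estimator $\hat P$.

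The hard part will be the error accounting: a naive triangle-inequality bound would incur a blow-up proportional to the number of $u$'s appearing per tuple. I would address this by swapping the order of summation to write $\hat P = \sum_{j_0, u} d_{j_0, u}\, \hat p_{j_0, u}$, where the aggregated coefficient $d_{j_0, u} \ge 0$ is the probability that a bounded hypernegative-binomial tuple starts at $j_0$ and has its positions' pattern consistent with $u$ matching $w$. The probabilistic structure of $\Hypernb{\beta_r}{k}{\ell'}$ then lets us bound the total coefficient mass so that $|\hat P - P|$ decomposes as the $\tau'/2$ truncation error plus an estimation contribution absorbed by $\eta = \tau'/2$, for a total error of at most $\tau'$. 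Since $\log(1/\tau') = \Theta(\log n)$ in the regime where this lemma is invoked (see \Cref{eq:def-tau} and the parameter setting in \Cref{eq:parameter-setting-avg}), setting the constant $C$ appropriately makes $\ell = C(\ell'/(1-\delta') \cdot \ln(2/(1-\delta)) + \ln n/(1-\delta'))$ suffice, matching the claimed bound.
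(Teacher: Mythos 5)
Your plan matches the paper's own proof: you use \Cref{fact:simulation-channel} and \Cref{clm:simul-1} to write the target $\ell'$-local subword probability under $\Del_{\delta'}(x)$ as an average over the selector process $\Hypernb{\beta_r}{k}{\ell'}$ of an $\ell'$-junta probability of $\bz \sim \Del_\delta(x)$, and \Cref{clm:negbin-2} to truncate to tuples of span at most $\ell$ at an additive cost of $\tau'/2$. That part is the same argument.

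The gap is in the estimation-error accounting, and your attempted fix does not work. With your definition of $d_{j_0,u}$ as the probability that a bounded tuple starts at $j_0$ and has its offsets consistent with $u$ matching $w$, the total mass is
\[
\sum_{j_0,u} d_{j_0,u} \;=\; \sum_{J\ \text{bounded}} \Pr[J] \cdot \bigl|\{u\in\zo^\ell : u_{j_i - j_0} = w_i\ \forall i\}\bigr| \;=\; 2^{\ell-\ell'}\cdot\Pr[\mathcal E],
\]
since for each bounded tuple $J$ exactly $2^{\ell-\ell'}$ strings $u\in\zo^\ell$ agree with $w$ on the $\ell'$ constrained offsets and are free on the remaining $\ell-\ell'$ positions. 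Nothing about the distribution $\Hypernb{\beta_r}{k}{\ell'}$ changes this count; the $2^{\ell-\ell'}$ factor is combinatorial. So the worst-case estimation contribution is $\eta\cdot 2^{\ell-\ell'}$, not $O(\eta)$, and $\eta=\tau'/2$ is not established by your route. (The paper's own write-up is loose at exactly this point: it asserts that the $\ell'$-junta probability ``is a $\ell$-local subword query,'' which is not literally true --- converting it to subword queries via \Cref{lem:universal} costs the same $2^{\ell-\ell'}$ factor.) What is actually true, and all that the downstream argument needs, is that in the application $\ell = O(\log n)$, so $2^{\ell-\ell'}=\poly(n)$; taking $\eta = \tau'\cdot 2^{-(\ell-\ell')-1}$ still keeps the subword-query tolerance at $1/\poly(n)$ and \Cref{thm:avgupper} is unaffected. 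But if you insist on the literal constant $\eta=\tau'/2$, you need a different accounting than the coefficient-mass bound you propose.
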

 \begin{proof} 
   Fix any $w \in \{0,1\}^{\ell'}$ and consider the quantity
   \[
 p'_{x,k,w} := \Prx_{\by \sim \Del_{\delta'}(x)} [(\by_{k}, \ldots, \by_{k+\ell'-1}) =w]. 
 \] 
 Then, by \Cref{clm:simul-1}, it follows that 
 \begin{equation}~\label{def:pxlw}
 p'_{x,k,w} := \Prx_{\bz \sim \Del_{\delta}(x), (\mathbf{i}_k, \ldots, \mathbf{i}_{k + \ell'-1}) \sim \Hypernb{\beta_r}{k}{\ell'}} [\bz_{\mathbf{i}_1}, \ldots, \bz_{ \mathbf{i}_{k + \ell'-1}} =w], 
 \end{equation}
 where $\beta_r = (1-\delta')/(1-\delta)$. Define the parameter 
 \[
 t = C \cdot \bigg( \frac{1}{1-\delta} \cdot \ln \bigg(\frac{2}{1-\delta} \bigg) + 
 \frac{\ln n}{\ell'(1-\delta)} 
 \bigg), 
 \]
 where the constant $C$ is set  so that 
 \[
 \exp \bigg(\frac{t(\ell'-1)}{8} \bigg) = \frac{\tau'}{2}. 
 \]
 As $\ell=t(\ell'-1)/\beta_r$, it follows 
 from \Cref{clm:negbin-2} that
 \begin{equation}\label{eq:ref-E-bound}
   \Pr[ \mathbf{i}_{\ell'+k-1} - \mathbf{i}_{\ell'} > \ell]  \le \exp \bigg(\frac{t(\ell'-1)}{8} \bigg) =\frac{\tau'}{2}. 
 \end{equation}
 Now, define  $\mathcal{E}$ as the event (over the samples $ (\mathbf{i}_k, \ldots, \mathbf{i}_{k + \ell'-1})$)  that  $|\mathbf{i}_{k + \ell'-1} - \mathbf{i}_k| \le \ell$. 
We now re-express 
 \begin{eqnarray}
 p'_{x,k,w} &=& \Prx_{\bz \sim \Del_{\delta}(x),(\mathbf{i}_k, \ldots, \mathbf{i}_{k + \ell'-1})} \bigl[\bz_{\mathbf{i}_1}, \ldots, \bz_{ \mathbf{i}_{k + \ell'-1}} =w \ \wedge \mathcal{E} \bigr] \nonumber 
 \\ &+& \Prx_{\bz \sim \Del_{\delta}(x), (\mathbf{i}_k, \ldots, \mathbf{i}_{k + \ell' -1})} \bigl[\bz_{\mathbf{i}_1}, \ldots, \bz_{ \mathbf{i}_{k + \ell' -1}} =w \ \wedge \overline{\mathcal{E}} \bigr]. \nonumber
 \end{eqnarray}
 From the bound \eqref{eq:ref-E-bound}, the second term is at most $\tau'/2$ in magnitude and thus,
 \[
\Bigl| p'_{x,k,w}- \Prx_{\bz \sim \Del_{\delta}(x), (\mathbf{i}_k, \ldots, \mathbf{i}_{k + \ell' -1})} \bigl[\bz_{\mathbf{i}_1}, \ldots, \bz_{ \mathbf{i}_{k + \ell' -1}} =w \ \wedge \mathcal{E} \bigr]\Bigr| \le \tau'/2. 
 \]
 Furthermore, for any particular outcome of $(\mathbf{i}_k, \ldots, \mathbf{i}_{k + \ell'-1})$ for which event $\mathcal{E}$ happens, 
 the quantity 
 $$
\Prx_{\bz \sim \Del_{\delta}(x)}  [\bz_{\mathbf{i}_1}, \ldots, \bz_{ \mathbf{i}_{k + \ell'-1}} =w], 
$$
is a $\ell$-local subword query. Since we have the value of all $\ell$-local subword queries up to error $\tau'/2$, it follows that we can compute $p'_{x,k,w}$ to error $\tau'$. 
 \end{proof}
 \begin{proof}(of \Cref{lem:simulate-local-different-delta})
   By \Cref{clm:simul-1}, to compute $\mathsf{SW}_{x,w}(\delta')$ to error $\pm\kappa$, it suffices to compute $\mathbf{E}_{\by \sim \Del_{\delta'}(x)} [\#(w,\by)]$ for every $w \in \{0,1\}^{\ell'}$ up to error $\pm\tau'$ where $\tau'$ is defined in \eqref{eq:def-tau}. Now, by \Cref{clm:tech-def-claim-2}, for any given $\delta' \geq \delta$, to compute $\mathbf{E}_{\by \sim \Del_{\delta'}(x)} [\#(w,\by)]$ to error $\tau$, it suffices to have the value of all $\ell$-local subword queries to error $\tau'/2$ where 
 \[
  \ell = C \cdot \bigg( \frac{\ell'}{1-\delta'} \cdot \ln \bigg(\frac{2}{1-\delta} \bigg) + 
 \frac{\ln n}{(1-\delta')} 
 \bigg). 
 \]
 Since $\delta' \le (1+\delta)/2$, it follows that 
 $$
 \ell \le C \cdot \bigg( \frac{2\ell'}{1-\delta} \cdot \ln \bigg(\frac{2}{1-\delta} \bigg) + 
 \frac{2\ln n}{(1-\delta)} 
 \bigg).
 $$
 Thus, if we have the value of all $k$-local subword queries to error $\tau'/2$, where $k$ is set to 
 \[
 k = \Theta \bigg( \frac{\ell'}{1-\delta} \cdot \ln \bigg(\frac{2}{1-\delta} \bigg) + 
 \frac{\ln n}{(1-\delta)}\bigg), 
 \]
 we can recover $x$. This finishes the proof. 
 \end{proof}
 

\begin{flushleft}
\bibliography{allrefs}{}
\bibliographystyle{alpha}
\end{flushleft}

\end{document}